\newtheorem{theorem}{Theorem}
\newtheorem{example}[theorem]{Example}
\newtheorem{proposition}{Proposition}
\newenvironment{proof}[1][Proof]{\noindent\textbf{#1.} }{\ \rule{0.5em}{0.5em}}
\newcommand{\cX}{{\mathcal{X}}}
\newcommand{\cY}{{\mathcal{Y}}}
\newcommand{\cPA}{{\mathcal{Z}}}
\newcommand{\bB}{\textbf{B}}
\newcommand{\bC}{\textbf{C}}
\newcommand{\bx}{\textbf{x}}
\newcommand{\by}{\textbf{y}}
\newcommand{\bs}{\textbf{s}}
\newcommand{\bq}{\textbf{q}}
\newcommand{\bu}{\textbf{u}}
\newcommand{\bU}{\textbf{U}}
\newcommand{\bA}{\textbf{A}}
\newcommand{\ba}{\textbf{a}}
\newcommand{\bz}{\textbf{z}}
\newcommand{\bt}{\textbf{t}}
\newcommand{\bb}{\textbf{b}}
\newcommand{\bh}{\textbf{h}}
\newcommand{\bk}{\textbf{k}}
\newcommand{\bW}{\textbf{W}}
\newif\ifnotes\notestrue
\def\htien#1{}
\begin{document}

\newcolumntype{C}{>{\centering\arraybackslash}p{4em}}

\title{\textbf{Constrained Assortment and  Price Optimization under Generalized Nested Logit Models}}
\author[1]{Hoang Giang Pham}
\author[1]{Tien Mai}
\affil[1]{\it\small
School of Computing and Information Systems, Singapore Management University}

\date{}
\maketitle

\begin{abstract}
We study assortment and price optimization under the generalized nested logit (GNL) model, one of the most general and flexible modeling frameworks in discrete choice modeling. Despite its modeling advantages, optimization under GNL is highly challenging: even the pure assortment problem is NP-hard, and existing approaches rely on approximation schemes or are limited to simple cardinality constraints. In this paper, we develop the first exact  and near-exact algorithms for constrained assortment and joint assortment--pricing optimization (JAP) under GNL. Our approach reformulates the problem into bilinear and exponential-cone convex programs and exploits convexity, concavity, and submodularity properties to generate strong cutting planes within a Branch-and-Cut framework (B\&C). We further extend this framework to the mixed GNL (MGNL) model, capturing heterogeneous customer segments, and to JAP with discrete prices. For the continuous pricing case, we propose a near-exact algorithm based on piecewise-linear approximation (PWLA) that achieves arbitrarily high precision under general linear constraints. Extensive computational experiments demonstrate that our methods substantially outperform state-of-the-art approximation approaches in both solution quality and scalability. In particular, we are able to solve large-scale instances with up to 1000 products and 20 nests, and to obtain near-optimal solutions for continuous pricing problems with negligible optimality gaps. To the best of our knowledge, this work resolves several open problems in assortment and price optimization under GNL.
\end{abstract}

{\bf Keywords} :Assortment Optimization, Price Optimization, Generalized Nested Logit, Branch-and-Cut

\noindent
\textbf{Notation:}
Boldface characters represent matrices (or vectors), and $a_i$ denotes the $i$-th element of vector $\ba$. We use $[m]$, for any $m\in \mathbb{N}$, to denote the set $\{1,\ldots,m\}$.


\section{Introduction}

Assortment and price optimization are central problems in revenue management, with broad applications in retailing, e-commerce, and service industries \citep{TalluriVanRyzin2004MS}. Firms must determine which products to offer (assortment decisions) and at what prices (pricing decisions) in order to maximize expected revenues, while respecting operational and business constraints. These problems have attracted significant attention due to their practical impact on demand forecasting, personalized recommendations, and dynamic pricing strategies \citep{LiHuh2011, gallego2014multiproduct}.

A standard approach to assortment and price optimization relies on discrete choice modeling, which captures consumer purchase behavior in response to available alternatives. Models such as the multinomial logit (MNL) and nested logit (NL) have been widely used for this purpose because of their analytical tractability and interpretability \citep{rusmevichientong2010dynamic,rusmevichientong2012robust,davis2014assortment,gallego2014constrained}. By linking product attributes to customer preferences, these models enable firms to evaluate expected revenues under different assortment and pricing decisions, thereby forming the basis of optimization strategies.

Among discrete choice models, the generalized nested logit (GNL) model, also known as the cross-nested logit (CNL), stands out as one of the most general and flexible frameworks \citep{train2009discrete, ,fosgerau2013choice,vovsha1997application}. Unlike the MNL or NL, the GNL model allows products to belong to multiple nests simultaneously, with fractional membership weights capturing heterogeneous substitution patterns across product categories \citep{ fosgerau2013choice,vovsha1997application}. This flexibility makes GNL highly effective in modeling complex customer choice behavior, and, as a result, assortment and price optimization under GNL are both practically relevant and theoretically important.

Despite its modeling advantages, optimization under the GNL model is computationally challenging. Even the pure assortment optimization problem is NP-hard, including the special case with only two nests, and is NP-hard to approximate within any constant factor in general \citep{Cuong2024}. Incorporating pricing decisions further increases the computational complexity of the problem.
Existing approaches are either limited to approximation schemes with no guarantee of global optimality \citep{Cuong2024}, or can only handle very restricted forms of constraints such as simple cardinality limits \citep{AlfandariHassanzadehLjubic2021}. To the best of our knowledge, no prior method is capable of solving assortment and price optimization under GNL to exact or near-optimality under general linear constraints.

In this paper, we address these challenges by developing exact and near-exact solution methods for assortment and price optimization under GNL. Specifically, we propose the first exact cutting-plane framework for assortment optimization and JAP with discrete prices. For continuous pricing, we design a near-optimal method with arbitrary precision guarantees. Our formulations can incorporate any general linear constraints on assortments and prices, substantially extending the scope of solvable problems beyond existing literature. \textit{To the best of our knowledge, this is the first work to provide exact  and near-exact algorithms for assortment and price optimization under GNL, thereby resolving several long-standing open problems in this domain.}

\paragraph{Paper Contributions.} 
We summarize our main contributions as follows:
\begin{itemize}
    \item \textbf{Exact solution framework for assortment optimization under GNL.}  
    We develop the first exact algorithm tailored for assortment optimization under the GNL. Our approach reformulates the problem into bilinear or exponential-cone convex programs, enabling the use of cutting-plane techniques to certify optimality. We further derive strong validity cuts by exploiting convexity, concavity, and submodularity properties of different model components, thereby strengthening the optimization process. Importantly, our framework is general: it can handle assortment optimization under any form of GNL model and accommodate arbitrary linear constraints on the assortment. Prior work, such as \citep{Cuong2024}, addressed constrained assortment optimization under GNL but only provided approximation methods whose model size grows with the desired solution precision, in contrast to our exact approach.

    \item \textbf{Extension to MGNL.}  
    We extend our exact framework to multi-segment customer choice models, where heterogeneous customers are represented by distinct GNL structures. Our solution relies on either a convex  reformulation or a bilinear-convex formulation, in which key components are shown to be convex. We also introduce valid cuts based on convexity and submodularity, which are integrated into a B\&C procedure.
    To the best of our knowledge, this is the first optimal method for assortment optimization under the generalized MGNL model.

    \item \textbf{JAP with discrete prices.}  
    Building on our framework for assortment optimization under GNL, we formulate and solve the JAP problem when products are associated with discrete price levels. By leveraging our bilinear and exponential-cone reformulations together with cutting-plane techniques, we provide the first exact algorithm for this important extension.

    \item \textbf{JAP with continuous prices.}  
    We further study the JAP problem with continuous prices, which aligns more naturally with discrete choice principles than the discrete-price setting. This formulation involves both discrete and continuous decision variables and is particularly challenging under general linear constraints on assortments and prices. We propose a near-exact solution method based on PWLA, which achieves arbitrarily high precision. To the best of our knowledge, this is the first near-optimal approach for continuous pricing under GNL that is capable of handling arbitrary model structures and general linear constraints.

     \item \textbf{Comprehensive numerical experiments.}  
    We conduct extensive computational experiments across a wide range of datasets, including GNL, MGNL, and JAP problems with both discrete and continuous prices. Our results show that the proposed B\&C algorithms significantly outperform state-of-the-art approximation and exact methods  in terms of scalability, solution quality, and robustness. In particular, our exact methods successfully solve large-scale instances with up to 1000 products and 20 nests, while our continuous pricing approach achieves near-optimal solutions with negligible optimality gaps. These results highlight both the practical relevance and computational efficiency of our proposed methods in handling such challenging  optimization problems.

   In addition to comparing the proposed algorithms with state-of-the-art baselines, our results provide strong evidence that explicitly accounting for customer heterogeneity is economically significant. When markets are heterogeneous, decisions based on the MGNL model outperform those derived from an aggregated GNL model by up to 33\%, with average revenue improvements of approximately 26\%, underscoring the substantial cost of ignoring segment-level differences.

Moreover, in the context of the JAP problem—where firms must simultaneously determine product offerings and continuous prices under a highly nonconvex structure—our results show that the proposed algorithm can achieve revenue gains of up to 20\% relative to a state-of-the-art general-purpose MINLP solver (SCIP). These findings highlight the effectiveness of the proposed approach in tackling complex, constrained pricing and assortment decisions that are beyond the reach of generic optimization tools.

\end{itemize}

\paragraph{Paper outline:} 
The remainder of the paper is organized as follows. 
Section~\ref{sec:review} reviews the related literature. 
Section~\ref{sec:Assort-GNL} introduces the assortment optimization problem under the GNL model. 
Section~\ref{sec:convexification_gnl} develops several exact reformulations, including the bilinear--convex formulation and convex reformulations based on logarithmic transformations, and discusses their structural properties. 
Section~\ref{sec:JAP} studies the JAP problem. 
Section~\ref{sec:experiments} reports numerical experiments evaluating the computational performance and solution quality of the proposed methods. 
Finally, Section~\ref{sec:concl} concludes the paper and outlines directions for future research.

\section{Literature Review}\label{sec:review}
Assortment and pricing optimization are central topics in operations research and management science, with wide-ranging applications in retailing and revenue management. Over the past decades, these problems have received sustained and growing attention, driven by the increasing availability of demand data and the practical need to optimize product offerings and prices under business constraints \citep{TalluriVanRyzin2004MS,TalluriVanRyzin2004Book}.

A key modeling approach in this literature is the use of discrete choice models to capture customer substitution behavior. Due to their relatively simple structure and favorable computational properties, the MNL and NL models have been the most extensively studied \citep{train2009discrete}. Under the MNL model, there exists a rich body of work on assortment optimization. Early studies integrate estimation and optimization under operational constraints, including joint learning and assortment optimization with capacity constraints \citep{rusmevichientong2010dynamic}, as well as robust formulations that account for uncertainty in choice parameters \citep{rusmevichientong2012robust}. The NL model extends the MNL framework by allowing correlation among products within nests, but at the cost of increased computational complexity. \citet{davis2014assortment} show that the unconstrained assortment optimization problem under the NL model can be solved in polynomial time when the dissimilarity parameters of all nests are less than one and customers always make a purchase after choosing a nest; relaxing either condition renders the problem NP-hard. Consequently, much of the literature focuses on approximation algorithms for more challenging settings. For the unconstrained assortment problem with $m$ products and $n$ nests, \citet{li2014greedy} propose a greedy algorithm with complexity $\mathcal{O}(mn\log n)$. \citet{gallego2014constrained} study assortment optimization under per-nest capacity constraints and develop an approximation algorithm with a $0.5$ performance guarantee under certain conditions. For problems with a single global capacity constraint, \citet{rusmevichientong2009ptas} and \citet{feldman2015capacity} propose constant-factor approximation algorithms, while \citet{segev2022approximation} develop an approximation scheme that can achieve arbitrarily tight solutions.

To capture richer substitution patterns and customer heterogeneity, several studies extend assortment optimization to a mixture of MNL models (MMNL). \citet{bront2009column} and \citet{mendez2014branch} propose mixed-integer reformulations and B\&C algorithms for this setting. From a complexity perspective, \citet{rusmevichientong2014assortment} show that assortment optimization is NP-hard even when the choice model is a mixture of only two MNL models. These hardness results have motivated the development of scalable approximation and exact methods, including an FPTAS for capacitated assortment problems \citep{Desiretal2022}, conic quadratic mixed-integer formulations \citep{sen2018conic}, and more recent outer-approximation–based approaches designed for large-scale instances \citep{PhamMai2025MixedLogitOA}. Relatedly, \citet{abdallah2021demand} study joint assortment optimization and customization, highlighting the value of explicitly accounting for customer heterogeneity in decision making.

Both the MNL and NL models suffer from structural limitations in capturing complex substitution patterns observed in practice. 
The MNL model’s independence of irrelevant alternatives (IIA) property implies identical cross-elasticities across products, which is often unrealistic in assortment and pricing applications \citep{train2009discrete,gallego2014constrained}. 
Although the NL model relaxes the IIA property by allowing correlation within predefined nests, its rigid hierarchical structure still limits modeling flexibility \citep{Cuong2024}.
The GNL model effectively addresses these limitations by extending the NL framework to allow each product to belong to multiple nests. This added flexibility enables the GNL model to capture rich and overlapping substitution patterns that cannot be represented by standard MNL or NL models. From a theoretical perspective, the GNL model is highly expressive and can approximate a broad class of parametric and nonparametric discrete choice models \citep{fosgerau2013choice,Cuong2024}.
 Beyond its theoretical appeal, the GNL model been successfully applied in a wide range of transportation applications. These include mode choice \citep{yang2013cross}, departure time choice \citep{ding2015cross}, route choice and revenue management \citep{lai2015specification,mai2016method,mai2017dynamic}, air travel management \citep{drabas2013modelling}, and, more recently, location choice in international migration \citep{beine2021new}. Empirical evidence in these studies consistently shows that the CNL model outperforms simpler alternatives such as the MNL and NL models in terms of predictive accuracy.

Despite its success in transportation and travel-demand modeling, the potential of the GNL model in revenue management and assortment optimization remains largely unexplored. Only a limited number of recent studies have begun to address this gap. For instance, \citet{Cuong2024} study assortment optimization under the GNL model with general linear constraints but rely on approximation methods that do not guarantee global optimality. \citet{Zhang2024} propose a B\&C  and heuristic algorithms for assortment optimization under the CNL model; however, their approach is restricted to a simple cardinality constraint and cannot be extended to JAP decisions. Related work by \citet{zhang2020assortment} and \citet{ghuge2022constrained} considers assortment optimization under the paired combinatorial logit model, which constitutes a special case of the GNL framework. 

Pricing is equally central to revenue management and has been extensively studied in the context of choice-based demand models. When prices are restricted to a discrete set, pricing problems can be reformulated as assortment optimization problems over an expanded set of product--price alternatives. This modeling approach is widely adopted in the literature, as it allows pricing decisions to be handled using assortment optimization techniques developed for discrete choice models \citep{Cuong2024, gallego2014multiproduct}. 

When prices are treated as continuous decision variables—which is more consistent with the foundations of discrete choice theory—the resulting optimization problems become substantially more challenging. A stream of work focuses on unconstrained pricing problems under restrictive structural assumptions, such as homogeneous price sensitivity across products. Under these conditions, the pricing problem admits tractable solutions and optimal prices can be computed efficiently \citep{LiHuh2011,MaiJaillet2019GEVpricing,ZhangRusmevichientongTopaloglu2018}. Other studies relax these assumptions by allowing product-specific price sensitivities \citep{gallego2014multiproduct} or by imposing constraints on purchasing probabilities, and show that, after appropriate variable transformations, the resulting problems exhibit convexity properties that can be exploited computationally \citep{LiHuh2011,ZhangRusmevichientongTopaloglu2018,VanDeGeerDenBoer2023}.

Despite these advances, most of the existing literature does not explicitly incorporate general constraints on prices, even though such constraints are common in practice. Only a limited number of studies address constrained pricing problems, and typically under restrictive forms of constraints. For instance, \citet{Rayfield2015} develop approximation schemes for pricing under the NL model with price bounds, but their approach is limited to simple box constraints and does not extend to more general linear pricing restrictions.

Our work is also related to the literature on JAP optimization with continuous prices. Under the MNL model, \citet{wang2012capacitated} and \citet{chen2020capacitated} study capacitated JAP problems and show that the joint decision problem can be reformulated as a fixed-point problem, enabling efficient solution methods. In contrast, much less is known for richer choice models and more general constraint structures, motivating the need for new models and optimization frameworks.


By developing exact and scalable algorithms for assortment and JAP under the GNL model with general linear constraints, this paper addresses several open questions at the intersection of expressive choice modeling and exact optimization. In particular, our framework bridges the gap between flexible substitution modeling (GNL/MGNL) and the algorithmic requirements of large-scale constrained optimization, while also advancing the study of JAP when continuous prices and realistic constraints are incorporated.

\noindent
\textbf{Notation:}
Boldface characters represent matrices (or vectors), and $a_i$ denotes the $i$-th element of vector $\ba$. We use $[m]$, for any $m\in \mathbb{N}$, to denote the set $\{1,\ldots,m\}$.
\section{Assortment Optimization under GNL Model}\label{sec:Assort-GNL}
Discrete choice models provide a probabilistic framework for describing how consumers select among a finite set of alternatives. Among these, the GNL is one of the most flexible extensions of the MMNL model. The GNL model captures complex substitution patterns by allowing products to belong simultaneously to multiple nests, with fractional membership weights.

Consider a set of $m$ products indexed by $[m] := \{1,2,\ldots,m\}$. Each product $i \in [m]$ is associated with a deterministic utility component $V_{in} > 0$, which reflects its attractiveness to consumers. The $m$ products are grouped into $N$ overlapping nests, denoted by $\{S_1, S_2, \ldots, S_N\}$, where each nest represents a category of alternatives with correlated preferences. Unlike the standard NL, in which nests are disjoint, the GNL model allows each product to appear in multiple nests. The membership of product~$i$ in nest~$n$ is specified by a nonnegative parameter $\alpha_{in}$. If product~$i$ does not belong to nest~$n$, then $\alpha_{in}=0$. It is common to assume that
$\sum_{n=1}^N \alpha_{in} = 1, \forall i \in [m],$
so that the membership parameters can be interpreted as fractional allocations of product~$i$ across nests. However, this normalization is not required for the GNL model to be consistent with the random utility maximization framework.

Each nest $n$ is characterized by a dissimilarity parameter $\sigma_n \in (0,1]$, which governs the correlation among alternatives within the nest, and by an outside option with utility weight $V_{0n} > 0$, representing the possibility that the consumer leaves the nest without making a purchase. Given an offered product set $\bx = (x_1,\ldots,x_m) \in \{0,1\}^m$, the total preference weight in nest $n$ is defined as:
$
W_n(\bx) = V_{0n} + \sum_{i \in [m]} \alpha_{in} x_i V_{in}
$, 
where $V_{in} = \exp(v_{in}/\sigma_n)$, for $i=0,\ldots,n$. The GNL model assumes a two-stage choice process:
\begin{enumerate}
    \item \textbf{Nest selection.} The consumer first selects a nest $n \in [N]$ with probability
    \[
    P(S_n \mid \bx) = \frac{W_n(\bx)^{\sigma_n}}{\sum_{n'=1}^N W_{n'}(\bx)^{\sigma_{n'}}}.
    \]
    \item \textbf{Product selection within nest.} Conditional on nest $n$ being chosen, the consumer selects product $i \in S_n$ with probability
    \[
    P(i \mid S_n, \bx) = \frac{\alpha_{in} x_i V_{in}}{W_n(\bx)}.
    \]
\end{enumerate}
Combining these two stages, the unconditional probability that product $i$ is purchased is
\[
P(i \mid \bx) = \sum_{n=1}^N \frac{W_n(\bx)^{\sigma_n - 1} \, \alpha_{in} x_i V_{in}}{\sum_{n'=1}^N W_{n'}(\bx)^{\sigma_{n'}}}.
\]
The GNL model generalizes both the MNL and the NL. If each product belongs exclusively to one nest, i.e., $\alpha_{in} \in \{0,1\}$ with $\sum_n \alpha_{in}=1$, the model reduces to the NL. If there is only one nest, it collapses to the MNL. By allowing partial membership across multiple nests, GNL captures richer substitution patterns and correlation structures, making it a powerful tool in applications where products span overlapping categories or features.


Let us shift to the formulation of the assortment optimization under the GNL model. Given an offered assortment $\bx = (x_1, \ldots, x_m) \in \{0,1\}^m$, the probability that a product $i \in [m]$ is purchased under the GNL model has been derived in the previous section. Moreover, in practice, assortment decisions are often subject to various business rules and operational requirements. These may include capacity restrictions, category or shelf-space limits, or more general resource allocation constraints. To capture such requirements, we let
$\cX = \{ \bx \in \{0,1\}^m \mid \bA \bx \leq \bb \},$
where $\bA \in \mathbb{R}^{p \times m}$ and $\bb \in \mathbb{R}^p$ encode the set of feasible assortments. This formulation is flexible enough to represent a wide range of linear constraints.

With these definitions, the constrained assortment optimization problem under the GNL model can be formulated as
\begin{align}
    \max_{\bx \in \cX} \quad 
    &F(\bx) = \frac{\sum_{i \in [m]} \sum_{n \in [N]} W_n^{\sigma_n - 1} \, (\alpha_{in} x_i r_i V_{in})}{\sum_{n \in [N]} W_{n}^{\sigma_{n}}} 
    \label{prob:GNL-assort}\tag{\sf Assort-1}\\
    \text{s.t.} \quad 
    & W_n = V_{0n} + \sum_{i \in [m]} \alpha_{in} x_i V_{in}, \quad \forall n \in [N]. \nonumber
\end{align}
The objective function $F(\bx)$ is highly nonlinear and, in general, not concave in $\bx$. This significantly complicates the design of exact algorithms: classical approaches for mixed-integer nonlinear programs (MINLP), such as mixed-integer linear programming (MILP) reformulations, conic optimization techniques~\citep{SenAtamturkKaminsky2018}, or outer-approximation methods~\citep{duran1986outer,PhamMai2025MixedLogitOA}, are not directly applicable. To the best of our knowledge, apart from the study by \citet{Cuong2024}, no existing method is capable of solving problem~\eqref{prob:GNL-assort} to global or near-global optimality under such general linear constraints.
This motivates the development of new formulations and exact algorithms, which we introduce in the following sections.

\section{Solution  Methods for Assortment Optimization under GNL Model}\label{sec:convexification_gnl}
We now present exact solution methods for the assortment optimization problem introduced in the previous section. The key idea is to reformulate the original non-convex problem into one (or a sequence) of mixed-integer convex programs (MICP), i.e., optimization problems with convex objectives and constraints, for which exact solution techniques can be applied. To enable this reformulation, we first recast the maximization problem into an equivalent minimization problem, which allows us to identify and exploit convex components of the objective function.

\begin{proposition}[Minimization reformulation]
Let $\beta>\max_{i\in[m]} \{r_i\}$ and define $r'_i:=\beta-r_i>0$ for all $i\in[m]$,
then problem~\eqref{prob:GNL-assort} is equivalent (optimization-wise) to the minimization problem
\begin{align}
\min_{\bx\in\cX}\quad 
&\frac{\sum_{n\in[N]} W_n^{\sigma_n-1}\Big(\beta V_{0n}+\sum_{i\in[m]}\alpha_{in}x_i r'_i V_{in}\Big)}
{\sum_{n\in[N]} W_n^{\sigma_n}}
\tag{\sf Assort-min}\label{prob:GNL-assort-min}\\
\text{s.t.}\quad 
& W_n= V_{0n}+\sum_{i\in[m]}\alpha_{in}x_i V_{in},\qquad \forall n\in[N]. \nonumber
\end{align}
\end{proposition}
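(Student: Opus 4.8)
The plan is to establish a single pointwise algebraic identity and then read off the equivalence. I would denote by $G(\bx)$ the objective of \eqref{prob:GNL-assort-min} and show that, for every $\bx \in \{0,1\}^m$ (in particular for every $\bx \in \cX$),
\begin{equation*}
G(\bx) = \beta - F(\bx).
\end{equation*}
Since $\beta$ is a constant independent of $\bx$, this identity immediately gives $\min_{\bx\in\cX} G(\bx) = \beta - \max_{\bx\in\cX} F(\bx)$ and, crucially, that the two problems have exactly the same set of optimizers over $\cX$; this is the precise sense in which they are ``equivalent (optimization-wise).''

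The identity itself follows from a direct expansion. I would substitute $r'_i = \beta - r_i$ into the numerator of $G$ and split each summand as $\alpha_{in} x_i r'_i V_{in} = \beta\,\alpha_{in} x_i V_{in} - \alpha_{in} x_i r_i V_{in}$. Collecting, within each nest $n$, the terms carrying the factor $\beta$ together with $\beta V_{0n}$ yields $\beta\big(V_{0n} + \sum_{i\in[m]} \alpha_{in} x_i V_{in}\big) = \beta W_n$, where the last equality is exactly the defining constraint for $W_n$. Consequently the numerator of $G$ collapses to $\beta \sum_{n\in[N]} W_n^{\sigma_n} - \sum_{n\in[N]} W_n^{\sigma_n - 1} \sum_{i\in[m]} \alpha_{in} x_i r_i V_{in}$, and dividing by $\sum_{n\in[N]} W_n^{\sigma_n}$ produces $\beta - F(\bx)$, because the subtracted term is precisely the numerator of $F(\bx)$.

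The computation is routine, so there is no genuine analytic obstacle; the one place that requires care is the recombination step, where the $\beta$-weighted product terms must merge with the outside-option weight $\beta V_{0n}$ to reconstitute $\beta W_n$. This is why the outside option is multiplied by $\beta$ (rather than by some $r'_0$) in the numerator of \eqref{prob:GNL-assort-min}. I would also note that the hypothesis $\beta > \max_{i\in[m]} r_i$, which guarantees $r'_i > 0$, is not actually needed for the equivalence itself---the identity above holds for any $\beta$---but is retained because nonnegativity of the coefficients $r'_i$ is what enables the convexity and submodularity arguments exploited in the reformulations developed later in this section.
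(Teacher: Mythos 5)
Your proposal is correct and follows essentially the same route as the paper's proof: both establish the pointwise identity $\beta - F(\bx) = G(\bx)$ by expanding $r'_i = \beta - r_i$ and recombining the $\beta$-weighted terms into $\beta W_n$, so that the two problems share the same optimizers and $\mathrm{OPT}(\text{max}) = \beta - \mathrm{OPT}(\text{min})$. Your added remarks---that the outside option must carry the coefficient $\beta$ to reconstitute $\beta W_n$, and that $\beta > \max_i r_i$ is needed only for the positivity of $r'_i$ exploited in later reformulations rather than for the equivalence itself---are both accurate.
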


\subsection{Bisection}

The objective of~\eqref{prob:GNL-assort-min} is a ratio of two nonnegative functions. A common technique in fractional programming is to convert such problems into a parametric feasibility form using a bisection search (or Dinkelbach-type approach) \citep{dinkelbach1967nonlinear}. The idea is to guess a candidate value $\delta$ for the optimal ratio, and then check whether there exists a feasible $\bx$ such that the numerator is at most $\delta$ times the denominator. By repeatedly updating $\delta$ through binary search, we can converge to the optimal value within arbitrary precision. This approach reduces the original nonlinear optimization to a sequence of feasibility checks (or equivalent subproblems), which are often easier to tackle.

Following this principle, problem~\eqref{prob:GNL-assort-min} can be equivalently reformulated as
\[
\min \Big\{\delta > 0 \;\Big|\; \exists \bx \in \cX \text{ such that } 
\sum_{i \in [m]} \sum_{n \in [N]} W_n^{\sigma_n - 1} \big(\alpha_{in} x_i r_i V_{in}\big) 
\leq \delta \sum_{n \in [N]} W_n^{\sigma_n} \Big\}.
\]
This formulation suggests that the problem can be viewed as a search over the scalar variable $\delta$, 
aiming to find the smallest value of $\delta$ for which there exists a feasible assortment vector $\bx \in \cX$ 
satisfying the inequality constraint. In other words, $\delta$ represents an upper bound on the normalized 
weighted utility contributions of all products and nests. 

Since $\delta$ is a one-dimensional scalar variable, the search over $\delta$ can be efficiently performed 
using a simple \emph{bisection procedure}. To operationalize this idea, we define for any given $\delta$ 
the auxiliary (or subproblem) function:
\[
G(\delta, \bx) 
= \sum_{n \in [N]} W_n^{\sigma_n - 1} 
\Big(\beta V_{0n} + \sum_{i \in [m]} \alpha_{in} x_i r'_i V_{in}\Big) 
- \delta \sum_{n \in [N]} W_n^{\sigma_n}.
\]
Then, the original problem can be equivalently expressed as
\[
\min \big\{\delta > 0 \;\big|\; \min_{\bx \in \cX} G(\delta, \bx) \leq 0 \big\}.
\]
This reformulation reveals that the optimization can be solved iteratively via binary search on $\delta$: 
at each iteration, we solve the subproblem $\min_{\bx \in \cX} G(\delta, \bx)$ to test feasibility. 
If $\min_{\bx \in \cX} G(\delta, \bx) \leq 0$, then the current $\delta$ is sufficiently large 
(i.e., it constitutes a valid upper bound for the optimal value). 
Otherwise, $\delta$ is too small and must be increased. 
This monotonic property of $G(\delta, \bx)$ with respect to $\delta$ guarantees the correctness 
and convergence of the bisection search.

\paragraph{Convexity and submodularity.}
We now discuss how to solve the subproblem efficiently. The subproblem objective involves both exponential and bilinear terms, rendering it highly nonconvex in general. Interestingly, as we show below, certain components of this objective function exhibit useful convexity and concavity properties, which allow for an exact solution via a cutting-plane approach.

For notational convenience, let us define:
$
H_n(\bx) = W_n^{\sigma_n - 1}, 
$ and $
K_n(\bx) = W_n^{\sigma_n},
$
where $W_n(\bx) = V_{0n} + \sum_{i \in [m]} \alpha_{in} x_i V_{in}$ represents the inclusive value associated with nest $n$. With these definitions, the objective of the subproblem can be written compactly as:
\[
G(\bx) 
= \sum_{n \in [N]} H_n(\bx)
\left(\beta V_{0n} + \sum_{i \in [m]} \alpha_{in} x_i r'_i V_{in}\right) 
- \delta \sum_{n \in [N]} K_n(\bx).
\]
By introducing the auxiliary variables $h_n$ and $k_n$ to separately represent the nonlinear components $H_n(\bx)$ and $K_n(\bx)$, the subproblem can be reformulated as the following constrained optimization problem:
\begin{align}
    \min_{\bx,\,\bh,\,\bk}\quad &
    \sum_{n \in [N]} 
    h_n\!\left(\beta V_{0n} + \sum_{i \in [m]} \alpha_{in} x_i r'_i V_{in}\right)
    - \delta \sum_{n \in [N]} k_n, \nonumber\\
    \text{s.t.}\quad 
    & h_n \ge H_n(\bx), \qquad n \in [N], \label{con:h}\\
    & k_n \le K_n(\bx), \qquad n \in [N], \label{con:k}\\
    & \bx \in \mathcal{X} \nonumber.
\end{align}
This reformulation is valid because both coefficients 
$\left(\beta V_{0n} + \sum_{i \in [m]} \alpha_{in} x_i r'_i V_{in}\right)$ 
and $\delta$ are strictly positive for all $n$. 
Hence, the inequalities in \eqref{con:h}–\eqref{con:k} preserve the optimal solution: 
decreasing $h_n$ or increasing $k_n$ within their feasible bounds 
always decreases the objective, ensuring that the constraints bind at optimality, 
i.e., $h_n = H_n(\bx)$ and $k_n = K_n(\bx)$ for all $n$.

An important observation here is that the functions $H_n(\bx)$ and $K_n(\bx)$ possess opposite curvature properties that can be exploited in our solution approach. Specifically, $H_n(\bx)$ is convex while $K_n(\bx)$ is concave when the nest parameters satisfy $\sigma_n \in (0,1]$. We formalize this in the following proposition.
\begin{proposition}\label{pro:convex_H_K}
For $\sigma_n \in (0,1]$, the function $H_n(\bx) = W_n(\bx)^{\sigma_n - 1}$ is convex in $\bx$, whereas $K_n(\bx) = W_n(\bx)^{\sigma_n}$ is concave in $\bx$.
\end{proposition}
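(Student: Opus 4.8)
The plan is to reduce the statement to the elementary fact that a scalar power function has a definite sign of curvature, and then to invoke the standard rule that composing such a function with an affine map preserves convexity or concavity. First I would observe that
\[
W_n(\bx) = V_{0n} + \sum_{i\in[m]} \alpha_{in} x_i V_{in}
\]
is an \emph{affine} function of $\bx$, and that it is strictly positive on the relevant domain, since $V_{0n}>0$ and every term $\alpha_{in} x_i V_{in}$ is nonnegative. Hence $W_n(\bx)$ takes values in $(0,\infty)$, the open set on which the power functions below are smooth and the second-derivative computation is valid.

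Next I would establish the one-dimensional ingredient. For $\phi_q(t)=t^q$ on $(0,\infty)$ we have $\phi_q''(t)=q(q-1)t^{q-2}$, and since $t^{q-2}>0$ the sign of $\phi_q''$ is exactly the sign of $q(q-1)$. Thus $\phi_q$ is convex when $q\le 0$ or $q\ge 1$, and concave when $0\le q\le 1$. This is the whole analytic content of the proposition; everything else is bookkeeping on the exponents.

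Then I would apply this to each function via the composition rule that a convex (resp.\ concave) scalar function composed with an affine map remains convex (resp.\ concave). For $K_n$ the exponent is $q=\sigma_n\in(0,1]$, so $0\le q\le 1$ and $\phi_{\sigma_n}$ is concave; composing the concave $\phi_{\sigma_n}$ with the affine $W_n$ shows that $K_n(\bx)=\phi_{\sigma_n}(W_n(\bx))$ is concave. For $H_n$ the exponent is $q=\sigma_n-1\in(-1,0]$, so $q\le 0$ and $\phi_{\sigma_n-1}$ is convex; composing the convex $\phi_{\sigma_n-1}$ with the affine $W_n$ shows that $H_n(\bx)$ is convex.

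There is no genuine obstacle here, only a boundary case to note: when $\sigma_n=1$ the exponent of $H_n$ is zero, so $H_n\equiv 1$ is constant and hence (weakly) convex, while $K_n=W_n$ is affine and thus both convex and concave—both consistent with the claim. I would also remark that convexity and concavity are understood over the continuous relaxation $\bx\in[0,1]^m$ (equivalently over $\mathbb{R}^m$ restricted to where $W_n>0$), which is precisely the setting required by the cutting-plane reformulation that these properties are meant to support.
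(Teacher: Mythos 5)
Your proposal is correct and follows essentially the same route as the paper's proof: both arguments note that $W_n(\bx)$ is affine and strictly positive, establish the convexity of $t\mapsto t^{\sigma_n-1}$ and the concavity of $t\mapsto t^{\sigma_n}$ on $(0,\infty)$, and conclude via the rule that composition with an affine map preserves curvature, with the same separate treatment of the boundary case $\sigma_n=1$. The only cosmetic difference is that you justify the scalar curvature by an explicit second-derivative computation where the paper simply cites it as standard.
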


The convexity of $H_n(\bx)$ and the concavity of $K_n(\bx)$ have important algorithmic implications. 
As shown below, these curvature properties enable us to approximate both functions using 
supporting hyperplanes (subgradient cuts) without eliminating any feasible solutions. 
By iteratively adding such cuts, we obtain a sequence of increasingly tighter outer approximations 
of the feasible region. This leads naturally to a cutting-plane or B\&C procedure 
that converges to the global optimal solution of the subproblem \citep{duran1986outer,Ljubic2018outer}. 
In each iteration, the convex components are linearized through subgradient inequalities, 
while the concave parts are conservatively approximated from above, ensuring that 
no feasible point is ever excluded. The resulting framework combines theoretical soundness 
with computational tractability, as further discussed in the next section.

It is also noteworthy that these curvature properties hold only when $\sigma_n \leq 1$.  
This restriction aligns with the theoretical requirements of the GNL and NL models, ensuring consistency with the random utility maximization (RUM) framework.  
For $\sigma_n > 1$, the model violates the RUM regularity conditions and the convexity of $H_n(\bx)$ no longer holds, making the optimization problem substantially more challenging.


Specifically, consider $H_n(\cdot)$ and $K_n(\cdot)$ as set functions defined on subsets of the product set $[m]$.  
For any subset $S \subseteq [m]$, define
\[
H_n(S) = \left(V_{0n} + \sum_{i \in S} \alpha_{in} V_{in}\right)^{\sigma_n - 1}, 
\qquad 
K_n(S) = \left(V_{0n} + \sum_{i \in S} \alpha_{in} V_{in}\right)^{\sigma_n}.
\]
Intuitively, $H_n(S)$ and $K_n(S)$ represent, respectively, the contributions of the chosen items 
in set $S$ to the nest-level utilities raised to powers $\sigma_n - 1$ and $\sigma_n$.  
The following proposition establishes their monotonicity and curvature properties.

\begin{proposition}
For $\sigma_n \in (0,1]$, the set function $H_n(S)$ is monotonically decreasing and supermodular in $S$, 
while $K_n(S)$ is monotonically increasing and submodular in $S$.
\end{proposition}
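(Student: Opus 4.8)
The plan is to reduce both statements to the well-known fact that composing a one-dimensional concave (respectively convex) function with a nonnegative modular set function yields a submodular (respectively supermodular) set function, combined with an elementary monotonicity argument. First I would write $W_n(S) = V_{0n} + \sum_{i\in S}\alpha_{in}V_{in}$ and observe that, since each $\alpha_{in}V_{in}\ge 0$ and $V_{0n}>0$, the map $S\mapsto W_n(S)$ is strictly positive, nondecreasing, and \emph{modular}: for any $S$ and any $j\notin S$ the marginal increment $W_n(S\cup\{j\})-W_n(S)=\alpha_{jn}V_{jn}$ is independent of $S$. Both $H_n$ and $K_n$ are then of the form $g(W_n(S))$ for a scalar $g$, so all four set-level properties will follow from the curvature and monotonicity of $g$ on the positive reals.

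For monotonicity, note that $K_n=g_K(W_n(S))$ with $g_K(t)=t^{\sigma_n}$ increasing on $(0,\infty)$ for $\sigma_n>0$, so $K_n$ inherits the monotone increase of $W_n$; and $H_n=g_H(W_n(S))$ with $g_H(t)=t^{\sigma_n-1}$, which is nonincreasing because $\sigma_n-1\le 0$, so $H_n$ is monotonically decreasing. When $\sigma_n=1$ these hold in the weak sense, with $g_H\equiv 1$ constant.

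The heart of the argument is the composition lemma, which I would state and prove directly via the discrete diminishing-returns characterization. For $S\subseteq T$ and $j\notin T$, additivity of $W_n$ gives $W_n(S)\le W_n(T)$ and equal increments $\alpha_{jn}V_{jn}=:\Delta\ge 0$ on both sides, so the marginal gains of $g(W_n(\cdot))$ at $S$ and at $T$ are $g(W_n(S)+\Delta)-g(W_n(S))$ and $g(W_n(T)+\Delta)-g(W_n(T))$. If $g$ is concave then $t\mapsto g(t+\Delta)-g(t)$ is nonincreasing, so the gain at the smaller set $S$ dominates the one at $T$, which is exactly submodularity; if $g$ is convex the same difference is nondecreasing, giving supermodularity. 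It then remains only to check curvature on $(0,\infty)$: $g_K''(t)=\sigma_n(\sigma_n-1)t^{\sigma_n-2}\le 0$ for $\sigma_n\in(0,1]$, so $g_K$ is concave and $K_n$ is submodular; while $g_H''(t)=(\sigma_n-1)(\sigma_n-2)t^{\sigma_n-3}\ge 0$, since both $(\sigma_n-1)$ and $(\sigma_n-2)$ are nonpositive, so $g_H$ is convex and $H_n$ is supermodular.

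I expect no serious obstacle, since this is essentially a discrete restatement of the convexity/concavity already recorded in Proposition~\ref{pro:convex_H_K}. The only points demanding care are bookkeeping the sign of $g_H''$ over the exponent range $\sigma_n-1\in(-1,0]$—where the product of the two nonpositive factors flips the sign to yield convexity rather than concavity—and handling the boundary case $\sigma_n=1$, in which $H_n$ degenerates to a constant and all four properties hold trivially in their non-strict form.
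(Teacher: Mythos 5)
Your proposal is correct and follows essentially the same route as the paper's proof: both reduce the claim to composing the scalar maps $t\mapsto t^{\sigma_n}$ (concave, increasing) and $t\mapsto t^{\sigma_n-1}$ (convex, nonincreasing) with the nonnegative modular function $W_n(S)$, and both verify sub/supermodularity through the diminishing-returns characterization of the marginal increments. The only difference is cosmetic: where the paper invokes the standard composition result by citation, you prove the lemma directly from the equal-increment property of modular functions, which makes the argument self-contained but does not change its substance.
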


The submodularity and supermodularity of $K_n(S)$ and $H_n(S)$, respectively, 
offer an additional layer of structure that can be leveraged computationally.  
In particular, submodular functions admit tight linear underestimators known as \emph{submodular cuts}.  
By combining such cuts with the subgradient-based convex approximations derived earlier, 
we can construct a stronger outer-approximation of the feasible region.  
This hybrid approach—integrating convexity-based and submodularity-based cuts—enhances both the 
tightness of the relaxation and the overall efficiency of the cutting-plane algorithm.

\paragraph{A tractable bisection subproblem and convergence.}
Putting all components together, we can now express the subproblem used at each step of the bisection procedure in a tractable MILP form. 
Specifically, given a fixed value of~$\delta$, each iteration of the binary search requires solving the following problem:
\begin{align}
\min_{\bx,\bs,\bh,\bk,\bW}\quad 
& \sum_{n \in [N]} \beta V_{0n} h_n 
  + \sum_{n \in [N]}\sum_{i \in [m]} \alpha_{in} r'_i V_{in} s_{ni} 
  - \delta \sum_{n \in [N]} k_n 
  \tag{\sf GNL-Bis}\label{prob:Bisection}\\
\text{s.t.}\quad 
& W_n = V_{0n} + \sum_{i\in[m]} \alpha_{in}x_i V_{in}, 
  && \forall n \in [N], \nonumber\\[2pt]
& s_{ni} = h_n x_i, 
  && \forall n \in [N],\, i \in [m], \label{const:hn-xi}\\[2pt]
&  h_n \ge {H}_n(\bx), 
  && \forall n \in [N], \label{const:OA-hn}\\[2pt]
&  k_n \le {K}_n(\bx),
  && \forall n \in [N], \label{const:OA-kn}\\[2pt]
& A\bx + B \le C, \nonumber\\
& \bx \in \{0,1\}^m,\;
  \bs \in \mathbb{R}^{N\times m},\;
  \bW, \bh, \bk \in \mathbb{R}_+^N. \nonumber
\end{align}
Problem~\eqref{prob:Bisection} can be solved by iteratively generating \emph{outer-approximation (OA) cuts} \citep{duran1986outer} and \emph{submodular cuts (SCs)} \citep{Nemhauser1978analysis} to approximate the nonlinear constraints~\eqref{const:OA-hn} and~\eqref{const:OA-kn}.
 In particular, the OA constraints are constructed from subgradients of the convex and concave envelopes 
of $H_n$ and $K_n$, while the SC constraints arise from 
the  submodular inequalities 
that capture the submodular or supermodular structure of these functions. 
Together, these linear inequalities progressively tighten the relaxation of the nonlinear subproblem. Here we note that
the bilinear relations $s_{ni}=h_nx_i$ in~\eqref{const:hn-xi} are linearized 
exactly using the \emph{McCormick inequalities} \citep{mccormick1976computability}:
\[
\begin{aligned}
    s_{ni} &\ge h_n^{\mathrm{L}} x_i, &
    s_{ni} &\le h_n^{\mathrm{U}} x_i,\\
    s_{ni} &\ge h_n - (1-x_i)h_n^{\mathrm{U}}, &
    s_{ni} &\le h_n - (1-x_i)h_n^{\mathrm{L}},
\end{aligned}
\]
where $h_n^{\mathrm{L}}$ and $h_n^{\mathrm{U}}$ denote known lower and upper bounds on~$h_n$.  
Since these inequalities are tight for binary~$x_i$, the linearization is exact.

Problem~\eqref{prob:Bisection} can therefore be solved efficiently using a 
B\&C procedure (for each fixed $\delta$). A master problem is solved in which the nonlinear constraints 
\eqref{const:OA-hn}–\eqref{const:OA-kn} 
are replaced by their current linear OA cuts and SCs. 
If the candidate solution obtained is feasible for the original nonlinear problem 
(up to a tolerance~$\epsilon$), the algorithm terminates. 
Otherwise, violated OA or SC constraints are identified and new corresponding OA cuts and SCs added to the master problem 
to refine the feasible region. 
Because the feasible set of the OA/SC relaxation is convex and the cuts are valid globally, 
this iterative refinement process monotonically improves the approximation of the true feasible region.
Convergence of the cutting-plane scheme is therefore guaranteed in a finite number of iterations, 
and the overall bisection procedure converges to the optimal value of~$\delta^\star$ 
within a prescribed tolerance.

It is important to note that problem~\eqref{prob:Bisection} can be solved directly without resorting to an explicit bisection procedure by formulating it as a bilinear program. 
Specifically, when using a solver that supports bilinear optimization (such as \texttt{GUROBI}), the variable $\delta$ can be treated as a decision variable and problem~\eqref{prob:Bisection} can be solved directly via a B\&C framework. 
At each iteration, the nonlinear constraints~\eqref{const:OA-hn}--\eqref{const:OA-kn} are approximated by OA cuts and SCs, while the resulting master problem—formulated as a bilinear program—is solved by the solver. 
We refer to this approach as a \emph{bilinear--convex reformulation}.

\subsection{Convexification via Log-Transformation Reformulation}

The bisection scheme described above enjoys rapid convergence due to the exponential convergence rate of binary search and is highly effective when handling a single fractional structure. 
However, it cannot be directly extended to problems involving multiple coupled fractional components or more general nonlinear constraints. 
To overcome this limitation, we show below that the structure of the GNL model admits a reformulation that transforms the assortment optimization problem into a \emph{convex} MINLP. 
This convex reformulation can then be solved directly using standard B\&C techniques, without the need for iterative bisection.

We begin by rewriting the objective as
\[
f(\bx) 
= 
\frac{\sum_{n\in [N]} W_n^{\sigma_n-1} \beta V_{0n}}
      {\sum_{n\in [N]}W_n^{\sigma_n}}
+
\frac{\sum_{i\in [m]}x_i\!\left(\sum_{n\in [N]} W_n^{\sigma_n-1}\alpha_{in}r'_iV_{in}\right)}
     {\sum_{n\in [N]}W_n^{\sigma_n}},
\]
where $W_n = V_{0n} + \sum_{i\in[m]} \alpha_{in} V_{in} x_i$.
The nonconvexity of $f(\bx)$ arises from the ratio of sums of nonlinear terms.
To convexify this structure, we introduce logarithmic auxiliary variables
\[
y_n = \log\!\big(W_n^{\sigma_n-1}\big)
\quad\text{and}\quad 
z = \log\!\left(\sum_{n\in [N]} W_n^{\sigma_n}\right).
\]
Substituting these definitions into the objective and replacing the exponential terms $e^{y_n - z}$ by new variables 
$t_n \ge e^{y_n - z}$, we obtain the following equivalent problem
\begin{align}
\min_{\bx,\,\by,\,\bz,\,\bt}\quad &
\sum_{n\in [N]} \beta V_{0n} t_n 
+ \sum_{n\in [N]}\sum_{i\in [m]} \alpha_{in} r'_i V_{in} x_i t_n 
\tag{\sf Assort-Convex}\label{prob:GNL-assort-convex}\\
\text{s.t.}\quad 
& W_n = V_{0n} + \sum_{i\in [m]} \alpha_{in} V_{in} x_i, 
\quad \forall n \in [N], \nonumber\\
& t_n \ge e^{y_n - z}, 
\quad \forall n \in [N], \nonumber\\
& y_n \ge (\sigma_n - 1)\log W_n, 
\quad \forall n \in [N], \nonumber\\
& z \le \log\!\Big(\sum_{n\in [N]} W_n^{\sigma_n}\Big), \nonumber\\
& \bx \in \{0,1\}^m,\; \by,\bt,\bW \in \mathbb{R}_+^N. \nonumber
\end{align}

\paragraph{Convexity and feasibility.}
We briefly verify the convexity of the constraints in Problem~\eqref{prob:GNL-assort-2} and the feasibility of the reformulation. 
The convexity follows directly from standard properties of exponential and logarithmic functions. The constraint $t_n \ge e^{y_n-z}$ is convex since the exponential function is convex and nondecreasing and its argument $y_n-z$ is affine. 
The constraint $y_n \ge (\sigma_n-1)\log W_n$ is also convex because $\log(W_n)$ is concave for $W_n>0$ and $(\sigma_n-1)\le 0$ when $\sigma_n\in(0,1]$. Finally, the constraint 
$z \le \log\!\Big(\sum_{n\in[N]} W_n^{\sigma_n}\Big)$
defines the hypograph of a concave function. Indeed, for $\sigma_n\in(0,1]$, each mapping $W_n\mapsto W_n^{\sigma_n}$ is concave and nondecreasing, their sum is concave, and composition with the concave, nondecreasing logarithm preserves concavity. Hence, this constraint defines a convex feasible set in $(\bW,z)$. 
Moreover, feasibility is guaranteed because $W_n=V_{0n}+\sum_i \alpha_{in}V_{in}x_i$ is strictly positive for all feasible~$\bx$, ensuring that all logarithmic expressions are well defined.

Combining these properties, the overall feasible region of~\eqref{prob:GNL-assort-2} is convex in the continuous variables $(\bx,\bW,\by,\bz,\bt)$ for any fixed integer vector $\bx\in\{0,1\}^m$. 
Hence, the problem is a MICP, and the relaxation obtained by allowing $\bx\in[0,1]^m$ is a globally convex program. Moreover, feasibility of the reformulated model is guaranteed by the domain of the logarithmic and exponential functions. 
Since each $W_n = V_{0n} + \sum_i \alpha_{in}V_{in} x_i$ represents the inclusive value of nest~$n$, it is strictly positive whenever $V_{0n}>0$ and $\alpha_{in},V_{in} \ge 0$. 
This ensures that $\log W_n$ and $\log(\sum_n W_n^{\sigma_n})$ are well-defined for all feasible $\bx$. 
Additionally, because the mapping $t_n \mapsto e^{y_n-z}$ is continuous and monotonically increasing in $y_n$ and decreasing in $z$, there always exists a feasible triple $(t_n,y_n,z)$ satisfying the exponential constraint for any admissible $\bW$.

The only remaining nonconvex components in problem~\eqref{prob:GNL-assort-2} are the bilinear terms 
$x_i t_n$ that appear in the objective function. 
These terms couple the binary decision variables~$x_i$ with the continuous auxiliary variables~$t_n$, and can be linearized exactly using the \emph{McCormick inequalities}. 

\paragraph{Submodularity.}
In line with the bisection-based analysis, we investigate submodularity properties of the nonlinear
components in the reformulation \eqref{prob:GNL-assort-2}. Establishing (super)submodularity for
appropriate set functions allows us to generate valid SCs that strengthen the
optimization.

\begin{proposition}\label{prop:submod-yn-z}
Let $S\subseteq [m]$ denote an offered set of products. For each nest $n\in[N]$, define
\[
W_n(S)\;=\;V_{0n}+\sum_{i\in S}\alpha_{in}V_{in} \quad\text{and}\quad
Y_n(S)\;=\;(\sigma_n-1)\log\!\big(W_n(S)\big).
\]
Moreover, define
$Z(S)\;=\;\log\!\Big(\sum_{n=1}^N W_n(S)^{\sigma_n}\Big).$
Then the following hold:
\begin{enumerate}
\item[(i)] $Y_n(S)$ is \emph{monotonically decreasing} in $S$ and \emph{supermodular}.
\item [(ii)] $Z(S)$ is \emph{monotonically increasing} in $S$ and \emph{submodular}.
\end{enumerate}
\end{proposition}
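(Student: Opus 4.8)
The plan is to reduce both claims to a single structural observation: for each nest $n$, the inclusive value $W_n(S)=V_{0n}+\sum_{i\in S}\alpha_{in}V_{in}$ is a \emph{nonnegative modular} (additive) set function that is monotonically increasing in $S$, since every marginal increment $\alpha_{in}V_{in}\ge 0$. All monotonicity statements then follow immediately: $\log$ and $t\mapsto t^{\sigma_n}$ (for $\sigma_n>0$) are increasing, so $\log W_n(S)$ and $W_n(S)^{\sigma_n}$ inherit monotone increase in $S$; multiplying by the nonpositive factor $(\sigma_n-1)$ reverses the direction and makes $Y_n(S)$ decreasing, while the sum defining $Z(S)$ and the outer $\log$ keep $Z(S)$ increasing. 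The real content is the curvature (sub-/supermodularity), which I would obtain from two composition facts established separately.

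For part (i), I would invoke the standard fact that composing a concave, nondecreasing scalar function with a nonnegative modular set function yields a submodular set function. Since $\log$ is concave and nondecreasing and $W_n$ is modular, $\log W_n(S)$ is submodular; scaling by the nonpositive constant $(\sigma_n-1)$ flips submodularity into supermodularity, giving that $Y_n(S)=(\sigma_n-1)\log W_n(S)$ is supermodular. Equivalently, one can argue directly via diminishing marginal returns: the gain $\log\!\big(1+\alpha_{in}V_{in}/W_n(S)\big)$ of adding $i$ shrinks as $W_n(S)$ grows, so $\log W_n$ is submodular, and the sign flip does the rest.

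Part (ii) is where the main difficulty lies, and it requires more than the ``concave-of-modular'' rule. Writing $\Phi(S)=\sum_{n\in[N]}W_n(S)^{\sigma_n}$, each summand is concave-of-modular with $t\mapsto t^{\sigma_n}$ concave and nondecreasing for $\sigma_n\in(0,1]$, hence \emph{monotone increasing and submodular}; summing preserves both properties, so $\Phi$ is monotone submodular. The subtlety is that $Z(S)=\log\Phi(S)$ is now $\log$ composed with a merely \emph{submodular} (not modular) function, and concave functions do not in general preserve submodularity. The key lemma I would prove is: \emph{if $f$ is monotone submodular and $g$ is concave and nondecreasing, then $g\circ f$ is submodular.} For $S\subseteq T$ and $j\notin T$, set $a=f(S)$, $a'=f(S\cup\{j\})$, $b=f(T)$, $b'=f(T\cup\{j\})$; monotonicity gives $a\le b$, and submodularity gives $a'-a\ge b'-b\ge 0$. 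Then
\[
g(a')-g(a)\ \ge\ g\big(a+(b'-b)\big)-g(a)\ \ge\ g\big(b+(b'-b)\big)-g(b)\ =\ g(b')-g(b),
\]
where the first inequality uses that $g$ is nondecreasing together with $a'-a\ge b'-b$, and the second uses that concavity makes the increment $x\mapsto g(x+\Delta)-g(x)$ nonincreasing in $x$ with $a\le b$. Applying this with $f=\Phi$ and $g=\log$ yields submodularity of $Z$. I expect this composition lemma — specifically, recognizing that the monotonicity of $\Phi$ is indispensable and assembling the two-step inequality above — to be the crux of the argument, while every other step is routine.
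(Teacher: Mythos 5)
Your proposal is correct. Part (i) matches the paper's argument in substance: the paper also treats $W_n$ as a nondecreasing modular function, invokes the composition of a nonincreasing convex scalar map with it, and backs this up with the explicit marginal computation $g(w)=\log(1+\delta/w)$ being decreasing in $w$ — exactly your "diminishing marginal returns" remark. Part (ii), however, is where you genuinely diverge. The paper does \emph{not} pass through the submodularity of $\Phi(S)=\sum_n W_n(S)^{\sigma_n}$; instead it views $Z(S)=f(\mathbf{w}(S))$ with $f(\mathbf{x})=\log\bigl(\sum_n x_n^{\sigma_n}\bigr)$ and shows by direct differentiation that the marginal increment $\Delta_f(\mathbf{x},\boldsymbol{\delta}_j)=f(\mathbf{x}+\boldsymbol{\delta}_j)-f(\mathbf{x})$ has nonpositive partial derivatives in every coordinate $x_k$ (comparing the two fractions term by term), and then uses the componentwise ordering $\mathbf{w}(A)\le\mathbf{w}(B)$ to conclude diminishing returns. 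Your route — first establishing that $\Phi$ is monotone submodular as a sum of concave-of-modular functions, then proving and applying the lemma that a concave nondecreasing function of a \emph{monotone} submodular function is submodular — is a cleaner structural argument, and your two-step inequality $g(a')-g(a)\ge g(a+(b'-b))-g(a)\ge g(b')-g(b)$ is a correct and standard proof of that lemma; you are also right that monotonicity of $\Phi$ is the indispensable hypothesis there. What each approach buys: yours isolates a reusable composition lemma and avoids any calculus on the heterogeneous-exponent function $f$, while the paper's calculation is more self-contained and makes the role of the individual exponents $\sigma_n$ explicit. Both are complete; no gap.
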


\paragraph{Cutting Plane method.}
Integrating the above results, define 
\[
Y_n(\bx) = (\sigma_n - 1)\log\!\big(W_n(\bx)\big),
\quad \text{where } W_n(\bx) = V_{0n} + \sum_{i\in [m]} \alpha_{in} V_{in} x_i,
\]
and 
\[
Z(\bx) = \log\!\Big(\sum_{n\in [N]} W_n(\bx)^{\sigma_n}\Big).
\]
As shown above, when viewed as set functions, Proposition~\ref{prop:submod-yn-z} establishes that $Y_n(S)$ is supermodular and $Z(S)$ is submodular. These structural properties enable the construction of valid \emph{OA cuts} and \emph{SCs} that progressively tighten the feasible region.

Consequently, problem~\eqref{prob:GNL-assort-convex} can be represented as a master problem and solved through a B\&C procedure, in which OA cuts and SCs are iteratively generated and incorporated on the fly until convergence to the global optimum, as follows:
\begin{align}
\min_{\bx \in \cX}\quad 
    &\left\{
        \sum_{n\in [N]} \beta V_{0n} t_n 
        + \sum_{n\in [N]} \sum_{i\in [m]} \alpha_{in} r'_i V_{in} s_{ni}
    \right\}
    \tag{\sf GNL-Convex}
    \label{prob:Expcone}\\[3pt]
\text{s.t.}\quad 
    & W_n = V_{0n} + \sum_{i\in [m]} \alpha_{in} V_{in} x_i, 
    && \forall n \in [N], \nonumber\\
    & s_{ni} = t_n x_i, 
    && \forall n \in [N],\, i \in [m], \nonumber\\[3pt]
    &  \text{\texttt{[OA]}:}\quad t_n \ge e^{\,y_n - z}, 
    && \forall n \in [N], \nonumber\\
    & \text{\texttt{[OA]}:}\quad y_n \ge (\sigma_n - 1)\log(W_n), 
    && \forall n \in [N], \nonumber\\
    & \text{\texttt{[SC]}:}\quad y_n \ge Y_n(\bx), 
    && \forall n \in [N], \nonumber\\
    &\text{\texttt{[OA]}:}\quad z \le \log(\sum_{n\in [N]} W_n^{\sigma_n}), \nonumber\\
    & \text{\texttt{[SC]}:}\quad z \le Z(\bx), \nonumber\\
    & \bA \bx + \bB \le \bC, \nonumber\\
    & \bx \in \{0,1\}^{[m]},\;
      \bs \in \mathbb{R}_+^{[N]\times[m]},\;
      \bW, \bt, \by \in \mathbb{R}_+^{[N]},\;
      z \in \mathbb{R}_+.\nonumber
\end{align}
where the terms \texttt{[OA]} and \texttt{[SC]} indicate that OA cuts or SCs, respectively, are used to approximate the nonlinear constraints.
The OA cuts provide locally valid linear approximations of convex or concave nonlinear functions, whereas the SCs exploit the (super)submodularity properties of the set functions $Y_n(S)$ and $Z(S)$ established in Proposition~\ref{prop:submod-yn-z}.  Given the convexity of the continuous relaxations and the submodularity structure of the underlying set functions, all OA cuts and SCs are valid and do not remove any feasible point from the convex hull of the original problem. Therefore, the B\&C procedure is guaranteed to converge to the global optimum after a finite number of iterations.

A final note is that the following inequality provides a valid strengthening of the convex reformulation and can be directly added to the optimization problem to further tighten the feasible region.
\begin{proposition}\label{prop:expcone-lowerbound}
   If $\sigma_n<1$ for all $n\in [N]$, then the following convex constraints are valid for the reformulation:
   $ z \;\geq\; 
    \log\!\left(\sum_{n\in [N]} e^{\frac{\sigma_n}{\sigma_n - 1}y_n}\right).$
\end{proposition}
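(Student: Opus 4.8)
The plan is to show that the proposed inequality holds at every optimal solution of the convex reformulation~\eqref{prob:GNL-assort-convex}, so that appending it removes no optimizer and is therefore valid; its convexity will then make it admissible as an extra cut without disturbing the MICP structure. The starting point is the observation that the reformulation is exact precisely because the objective drives the two auxiliary constraints to bind: minimizing the terms $t_n\ge e^{y_n-z}$ forces $y_n$ down to its lower bound $y_n=(\sigma_n-1)\log W_n$ and $z$ up to its upper bound $z=\log\big(\sum_{n\in[N]}W_n^{\sigma_n}\big)$. I will exploit the active upper bound on $z$ together with the always-present lower bound on $y_n$, rather than trying to derive the inequality from the relaxed constraints in isolation.

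The core computation is a one-line monotonicity argument. Since $\sigma_n<1$ gives $\sigma_n-1<0$, the constraint $y_n\ge(\sigma_n-1)\log W_n$ is equivalent to $\log W_n \ge y_n/(\sigma_n-1)$; multiplying by $\sigma_n>0$ yields $\sigma_n\log W_n\ge \tfrac{\sigma_n}{\sigma_n-1}y_n$, i.e. $W_n^{\sigma_n}\ge \exp\!\big(\tfrac{\sigma_n}{\sigma_n-1}y_n\big)$ for every $n$. Summing over $n$ and taking logarithms gives $\log\big(\sum_{n}W_n^{\sigma_n}\big)\ge \log\big(\sum_{n}\exp(\tfrac{\sigma_n}{\sigma_n-1}y_n)\big)$, and substituting the active bound $z=\log\big(\sum_n W_n^{\sigma_n}\big)$ produces exactly $z\ge \log\big(\sum_n \exp(\tfrac{\sigma_n}{\sigma_n-1}y_n)\big)$, with equality when $y_n$ also binds (so the cut is tight at the exact solution and cannot exclude it). Finally, $\by\mapsto \log\big(\sum_n \exp(\tfrac{\sigma_n}{\sigma_n-1}y_n)\big)$ is a log-sum-exp of affine functions and hence convex, so the set $\{(z,\by): z\ge \log(\sum_n \exp(\tfrac{\sigma_n}{\sigma_n-1}y_n))\}$ is convex and the inequality can be added to~\eqref{prob:GNL-assort-convex} while preserving convexity.

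The main obstacle is conceptual rather than computational: the inequality is \emph{not} implied pointwise by the relaxed constraints of~\eqref{prob:GNL-assort-convex}. At a feasible point where the upper bound on $z$ is slack (e.g. $z$ taken far below $\log\sum_n W_n^{\sigma_n}$ while each $y_n$ sits at its lower bound), the right-hand side equals $\log\sum_n W_n^{\sigma_n}>z$ and the inequality fails; such points are feasible but never optimal. The validity argument therefore has to be anchored at optimal solutions — equivalently, at the exact feasible points where both defining relations hold — rather than on the entire relaxed region, and I would state the claim in exactly that sense. A secondary point to flag is the role of the hypothesis $\sigma_n<1$: it is what permits dividing by $\sigma_n-1$ and fixes the sign that flips the inequality in the correct direction; at $\sigma_n=1$ the coefficient $\sigma_n/(\sigma_n-1)$ is undefined and $y_n\equiv 0$, so the cut degenerates, which is precisely why the statement excludes that case.
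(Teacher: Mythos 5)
Your proposal is correct and follows essentially the same route as the paper's proof: substitute the binding relations $y_n=(\sigma_n-1)\log W_n$ and $z=\log\bigl(\sum_n W_n^{\sigma_n}\bigr)$ to obtain $e^{\frac{\sigma_n}{\sigma_n-1}y_n}=W_n^{\sigma_n}$ (with your sign-flip argument giving the inequality version), and then invoke convexity of the log-sum-exp composition. Your added remark that the cut is not implied pointwise on the relaxed feasible region and must be justified via the constraints that bind at optimality is a sharper statement of validity than the paper makes explicit, but it does not change the substance of the argument.
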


In summary, assortment optimization under the GNL model can be solved to optimality using three exact approaches developed above: (i) a bisection method that exploits the fractional structure of the objective, (ii) a bilinear--convex reformulation,  and (iii) a convex reformulation based on logarithmic transformations. Each approach reduces the original problem to the solution of one or more MINLPs, which can be efficiently handled within a B\&C framework using OA cuts and/or SCs.

While the MGNL model constitutes a practically important extension that captures customer heterogeneity, not all of the above approaches extend naturally to this setting. In particular, the bisection method cannot be directly generalized due to the loss of the single-ratio structure in the objective. In contrast, both the convex reformulation and the bilinear--convex reformulation admit natural extensions to the MGNL model, enabling exact solution methods for assortment optimization under mixed GNL demand. A detailed discussion of these extensions  is provided in Appendix \ref{ec:mixed_gnl_model}.



\section{Joint Assortment and Pricing Optimization}\label{sec:JAP}
While the previous section focuses on assortment optimization with fixed prices, many practical applications require JAP decisions to maximize revenue or profit. This problem is highly relevant in settings such as retailing, transportation, and online platforms, where pricing and availability interact to shape customer choice. Under the GNL framework, the JAP problem is particularly challenging because it combines the discrete, combinatorial nature of assortment decisions with the continuous and nonlinear dependence of choice probabilities on prices, resulting in a highly nonconvex optimization problem.

In this section, we focus on the JAP problem with continuous price decisions. The discrete-price variant—which can be reformulated as an assortment optimization problem by extending the product set—is deferred to Appendix~\ref{ec:JAP_DP}. Although discretizing prices offers computational convenience, it can be restrictive when feasible price ranges are large or when optimal prices lie between predefined grid points. Allowing prices to vary continuously over intervals $[L_i,U_i]$ provides a more flexible and realistic modeling framework. However, continuous pricing substantially increases computational complexity, as prices enter the choice probabilities nonlinearly through exponential utility terms, destroying the separability exploited in the discrete setting. In what follows, we present the continuous-price JAP formulation and show how the convexification and bilinear--convex techniques developed earlier can be extended to address this more challenging problem.

\subsection{Problem Formulation}
We now consider the JAP problem under the GNL framework when product prices are treated as continuous decision variables.  
Unlike the discrete-price case, where each product--price combination can be enumerated explicitly, the continuous-price setting allows each product $i \in [m]$ to take any price value within a continuous range $[L_i, U_i]$. In addition to the binary assortment variables $x_i \in \{0,1\}$, we introduce continuous decision variables $y_i \in [L_i, U_i]$ representing the selling prices of the products.  
Let $\mathcal{Y}$ denote the feasible set of price vectors.  
Besides the bound constraints, $\mathcal{Y}$ can incorporate additional linear pricing relations such as budget, ordering, or promotional constraints.  We define the deterministic utility of product~$i$ as
$v_i(y_i) = -\eta_i y_i + \kappa_i,$
where $\eta_i > 0$ captures the price sensitivity of product $i$, and $\kappa_i$ represents the intrinsic attractiveness of the product.  
Under the GNL model, the inclusive value of nest $n \in [N]$ is given by
\[
W_n(\bx, \by) 
= V_{0n} + \sum_{i \in [m]} \alpha_{in} x_i 
\exp\!\left(\frac{v_i(y_i)}{\sigma_n}\right)
= V_{0n} + \sum_{i \in [m]} \alpha_{in} x_i 
\exp\!\left(\frac{-\eta_i y_i + \kappa_i}{\sigma_n}\right),
\]
The continuous-price JAP problem can be formulated as
\begin{align}
\max_{\bx \in \mathcal{X},\, \by \in \mathcal{Y}} \quad &
F(\bx, \by)
=
\frac{
\sum_{i\in[m]} \sum_{n\in[N]} 
\alpha_{in} x_i y_i 
\exp\!\left(\frac{-\eta_i y_i + \kappa_i}{\sigma_n}\right)
W_n^{\sigma_n - 1}
}{
\sum_{n\in[N]} W_n^{\sigma_n}
}
\label{prob:JAP-Continuous}\\
\text{s.t.} \quad &
W_n = V_{0n} + \sum_{i \in [m]} \alpha_{in} x_i 
\exp\!\left(\frac{-\eta_i y_i + \kappa_i}{\sigma_n}\right),
\quad \forall n \in [N], \nonumber\\
& x_i \in \{0,1\}, \quad \forall i \in [m], \nonumber\\
& y_i \in [L_i, U_i], \quad \forall i \in [m]. \nonumber
\end{align}


Before presenting our solution approach for the above JAP problem, 
we first provide a simple example illustrating that the pricing problem under the GNL model 
(with a fixed assortment) can be \emph{non-unimodal}. 
In particular, even with only \textit{one product} and two nests, 
the revenue function may exhibit multiple local maxima. 
This implies that standard gradient-based optimization methods 
are prone to convergence to suboptimal local solutions.

\begin{example}[Non-unimodality with One Product and Two Nests]
We illustrate the non-unimodal behavior of the expected revenue function 
in a setting with a single product offered through two nests that differ significantly 
in their price sensitivities and allocation weights.
The model parameters are chosen as:
$V_{01}=V_{02}=1.0,\quad 
\alpha_{11}=0.1,\; \alpha_{12}=5.0,\quad 
\sigma_1=0.1,\; \sigma_2=0.9,\quad 
\eta=1.0,\; \kappa=2.0.$
The deterministic utility of the product is $v(y)=-\eta y+\kappa$, and the expected revenue as a function of price is
$F(y)
=
y
\frac{
\sum_{n=1}^2
\alpha_{1n}
e^{v(y)/\sigma_n}
W_n(y)^{\sigma_n-1}
}{
\sum_{n=1}^2
W_n(y)^{\sigma_n}
}.$
Figure~\ref{fig:nonunimodal_gnl} shows the resulting revenue curve.
Two local maxima can be clearly observed:
the first (left peak) corresponds to the low-correlation nest~1, which is highly price-sensitive,
while the second (right peak) corresponds to the high-correlation nest~2.
This simple example confirms that even a single-product GNL structure with heterogeneous nests
can yield a nonconvex, non-unimodal pricing landscape.
\begin{figure}[!h]
    \centering \includegraphics[width=0.5\textwidth]{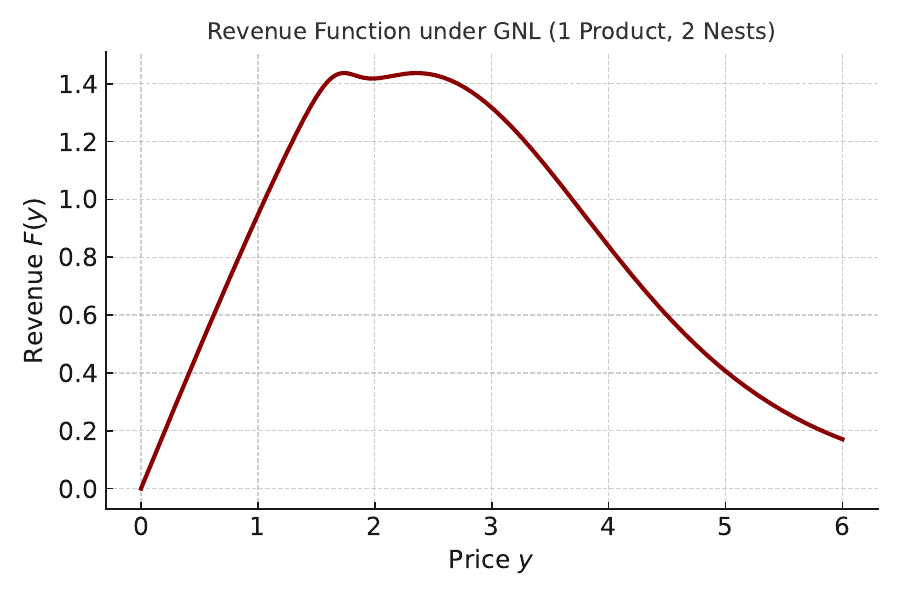}
    \caption{Example of the non-unimodal revenue function $F(y)$ under the GNL model with fixed assortment.}
    \label{fig:nonunimodal_gnl}
\end{figure}

We note that when there is only one product and one nest, the revenue function is \emph{unimodal}.
To see this, observe that in this case the GNL model reduces to the standard logit formulation.
The derivative $F'(y)$ changes sign exactly once, implying that $F(y)$ is strictly increasing
up to a unique stationary point and strictly decreasing thereafter.
Hence, the expected revenue function is \emph{unimodal}, attaining a single global maximum at an interior price level.
\end{example}

\subsection{Near-optimal Solution via PWLA}
To address the nonconvexity of the continuous-price JAP problem, we propose a practical approximation approach based on \emph{PWLA} of the exponential terms that depend on prices.  
Specifically, the nonlinear function $\exp((-y_i \eta_i + \kappa_i)/\sigma_n)$ appearing in both the numerator and denominator of the objective can be approximated by a set of linear segments over the bounded price range $[L_i, U_i]$.  
By introducing auxiliary variables to represent these linear segments and applying standard linearization or convexification techniques, the original nonconvex program can be transformed into an equivalent MICP (or bilinear-convex) formulation.  
This allows us to directly leverage the solution techniques developed in the previous sections—such as the log-transformation and bilinear-convex reformulations—within an approximate yet tractable optimization framework.


To facilitate convexification and enable the application of the transformation techniques 
developed in Section~\ref{sec:convexification_gnl}, 
we first reformulate the maximization problem into an equivalent minimization problem. 
The key idea is to introduce a constant $\beta > \max_i\{U_i\}$, 
This transformation preserves optimality but reverses the objective’s direction, 
allowing us to exploit convexity and submodularity properties in subsequent reformulations. 
Specifically, for each product $i$, we define a transformed revenue term 
$r_i' = \beta - y_i > 0$. 
By substituting $y_i = \beta - r_i'$ into the original fractional objective, 
we obtain an equivalent minimization form:
\begin{align}
    \min_{\bx \in \cX, \by \in \cY} \quad 
    &\left\{F(\bx,\by) = 
    \frac{
    \sum_{n \in [N]} 
    W_n^{\sigma_n - 1}
    \left(
        \beta V_{0n}
        + 
        \sum_{i \in [m]} 
        \alpha_{in} x_i (\beta - y_i)
        \exp(y_i \eta_i + \kappa_i)
    \right)
    }{
    \sum_{n \in [N]} W_n^{\sigma_n}
    }
    \right\}.\nonumber
\end{align}
The reformulated objective remains fractional but is now expressed in a minimization form 
that aligns with our convexification framework. 
Intuitively, maximizing revenue is equivalent to minimizing the ``revenue gap'' with respect to 
the upper bound $\beta$, which provides a convenient structure for constructing 
convex reformulation. 

To reformulate the continuous-price JAP problem into a tractable mathematical program, 
we introduce auxiliary variables that separate nonlinear and fractional components 
while preserving equivalence to the original objective.
The resulting formulation explicitly exposes bilinear and convex–concave terms 
that can either be exactly linearized (when involving binary variables)
or handled directly by modern solvers that support bilinear programming.
The problem can thus be expressed as the following MINLP:
\begin{align}
    \min \quad & \delta 
    \label{prob:CP-Bilinear}\tag{\sf CP-GNL-Bi}\\
    \text{s.t.} \quad 
    & \sum_{n \in [N]} 
      h_n\!\left(\beta V_{0n} + \sum_{i \in [m]}\alpha_{in} s_i\right)
      \leq \delta \left(\sum_{n \in [N]} k_n\right),
      && \forall n \in [N], \label{ctr:ratio}\\[4pt]
    & s_i = x_i u_i, 
      && \forall i \in [m], \label{ctr:si}\\[4pt]
    & u_i = (\beta - y_i)t_i, 
      && \forall i \in [m], \label{ctr:ui}\\[4pt]
    & w_{in} = (y_i\eta_i+\kappa_i)/\sigma_n, 
      && \forall i \in [m], \label{ctr:wi}\\[4pt]  
    & t_{in} = \exp(w_{in}), 
      && \forall i \in [m], \label{ctr:ti}\\[4pt]
    & W_n = V_{0n} + \sum_{i \in [m]} \alpha_{in} x_i t_i, 
      && \forall n \in [N], \label{ctr:Wn}\\[4pt]
    & h_n \geq H_n(\bW),
      && \forall n \in [N], \label{ctr:hn}\\[4pt]
    & k_n \leq K_n(\bW),
      && \forall n \in [N], \label{ctr:kn}\\[4pt]
    & k_n = W_n h_n, 
      && \forall n \in [N], \label{ctr:bilinear}\\[4pt]
    & \bx\in \cX,\by\in \cY,
      \by, \bt, \bu, \bs \in \mathbb{R}_+^{[m]}, \;
      \bW, \bh, \bk \in \mathbb{R}_+^{[N]}, \;
      \delta \in \mathbb{R}_+. \nonumber
\end{align}
where $H_n(\bW) = W_n^{\sigma_n - 1}$ and $K_n(\bW) = W_n^{\sigma_n}$.
In the above formulation, 
$t_{in}$ represents the exponential utility term 
$\exp\!\left((y_i\eta_i + \kappa_i)/\sigma_n\right)$,
while $u_i = (\beta - y_i)t_i$ captures the interaction between the transformed price 
and the exponential component of the product utility.  
Constraint~\eqref{ctr:si} introduces the bilinear term $s_i = x_i u_i$, 
which can be \emph{exactly linearized} using standard McCormick envelopes 
since one of the variables ($x_i$) is binary.  

The auxiliary variables $(h_n, k_n)$ represent the nonlinear nest-level functions 
$H_n(\bW) = W_n^{\sigma_n - 1}$ and $K_n(\bW) = W_n^{\sigma_n}$, respectively.
Hence, constraints~\eqref{ctr:hn} and~\eqref{ctr:kn} 
serve as convex relaxations that safely replace the equalities 
    $h_n = H_n(\bW)$ and $k_n = K_n(\bW)$ while maintaining validity.  It can be observed that $H_n(\bW)$ is convex and $K_n(\bW)$ is concave in $\bW$. 
Therefore, constraints~\eqref{ctr:hn} and~\eqref{ctr:kn} can be safely approximated 
by valid OA cuts within a B\&C procedure, 
analogous to the approach discussed earlier for the assortment optimization problem. The bilinear equality~\eqref{ctr:bilinear}, $k_n = W_n h_n$, 
links these two nonlinear components.  
Although redundant from a purely representational standpoint, 
this constraint strengthens the relaxation during the B\&C procedure, 
as it allows additional linear cuts to be generated 
to better approximate the nonlinear relationships among variables.   Finally, constraint~\eqref{ctr:ratio} enforces the fractional objective structure, 
where $\delta$ acts as an upper bound on the normalized ratio of the transformed objective.  
Minimizing $\delta$ is therefore equivalent to minimizing the fractional objective 
of the original JAP problem.

The reformulation in~\eqref{prob:CP-Bilinear} consists primarily of convex or bilinear constraints, 
except for the nonlinear exponential relations $t_{in} = \exp(w_{in})$. To handle these constraints, we will approximate them using a PWLA scheme.  
The main idea is to discretize the function domain 
$w_{in} \in [(-L_i\eta_i + \kappa_i)/\sigma_n,\; (-U_i\eta_i + \kappa_i)/\sigma_n]$
into a finite set of breakpoints and construct a piecewise linear function 
that accurately approximates the exponential term $\exp(w_{in})$.  
This allows the nonlinear constraint $t_{in} = \exp(w_{in})$ 
to be replaced by a set of linear inequalities augmented with additional binary variables, 
thereby converting the nonlinear component into a MILP form.  
In the following, we discuss an efficient strategy to construct the PWLA of 
$t_{in} = \exp(w_{in})$ that achieves a desired approximation accuracy 
with a minimal number of breakpoints.

\paragraph{Constructing PWLA for  $t_{in} = \exp(w_{in})$: }
We now describe how to construct a PWLA 
for the exponential function: $t_{in} = \exp(w_{in})$, 
where $w_{in} = (-y_i\eta_i + \kappa_i)/\sigma_n$
and $y_i \in [L_i, U_i]$. For notational convenience, we denote
the PWLA of $\exp(w_{in})$ by: 
$\cPA(\exp(w_{in}) \,|\, \underline{w}_{in}, \overline{w}_{in}, \bq_{in})$ over the interval $[\underline{w}_{in}, \overline{w}_{in}]$, 
where $\bq_{in}$ is a vector of breakpoints used to construct the approximation:
$\underline{w}_{in} = ({-U_i\eta_i + \kappa_i})/{\sigma_n},$
 and 
$\overline{w}_{in} = ({-L_i\eta_i + \kappa_i})/{\sigma_n}$.

To construct a PWLA  for $\exp(w_{in})$, we partition the interval 
$[\underline{w}_{in}, \overline{w}_{in}]$ 
into $H+1$ subintervals using breakpoints
$\bq_{in} = (q_1, q_2, \dots, q_{H+1})$, \footnote{Each component of $\bq_{in}$ and the number of segments $H$ depend on the indices $i$ and $n$, but these subscripts are omitted here for notational simplicity.} such that 
$\underline{w}_{in} = q_1 < q_2 < \dots < q_{H+1} = \overline{w}_{in}$.  
By introducing auxiliary continuous variables $\nu_h \in [0,1]$ 
and binary variables $v_h \in \{0,1\}$ for all $h \in [H]$, 
we can represent a PWLA of $\exp(w_{in})$ as:
\[
\begin{cases}
    \cPA(\exp(w_{in}) \,|\, \underline{w}_{in}, \overline{w}_{in}, \bq_{in})
    = e^{q_1} + \sum_{h \in [H]} \delta_h (q_{h+1} - q_h) v_h,\\[4pt]
    w_{in} = q_1 + \sum_{h \in [H]} (q_{h+1} - q_h) v_h,\\[4pt]
    v_h \ge v_{h+1}, 
    \nu_h \ge v_h, 
    \nu_{h+1} \le v_h, & \forall h \in [H],
\end{cases}
\]
where: 
$
\delta_h = 
\frac{e^{q_{h+1}} - e^{q_h}}{q_{h+1} - q_h},
\forall h \in [H],
$ denotes the slope of the linear segment approximating $\exp(w_{in})$ 
in the interval $[q_h, q_{h+1}]$.

To achieve a desired approximation accuracy, i.e. $ |\cPA(\exp(w_{in}) \,|\, \underline{w}_{in}, \overline{w}_{in}, \bq_{in})-\exp(w_{in})| \leq \epsilon$ for any feasible $w_{in}$, the quality of the PLWA depends critically on the placement of the breakpoints that partition the interval of interest. Rather than using uniformly spaced breakpoints—which may result in large local approximation errors in regions where the exponential function exhibits high curvature—we adopt an adaptive construction procedure that allocates more breakpoints in highly nonlinear regions and fewer where the function is nearly linear. Specifically, starting from an initial breakpoint $q_h$, we iteratively determine the next breakpoint $q_{h+1}$ so that the maximum deviation between the exponential function $e^{w_{in}}$ and its linear approximation over the interval $[q_h, q_{h+1}]$ does not exceed a prescribed tolerance~$\epsilon$. This adaptive refinement ensures that the approximation error is uniformly controlled across the entire domain while achieving an efficient representation with a minimal number of segments. Moreover, the adaptive PWLA enjoys a favorable complexity property: the number of breakpoints required to attain approximation accuracy~$\epsilon$ grows on the order of $O(1/\sqrt{\epsilon})$. A detailed description of the breakpoint construction procedure and its theoretical justification is provided in Appendix~\ref{ec:PWLA}.

\paragraph{Bilinear–Convex Approximation and Performance Guarantee.}
Bringing all components together, 
the JAP problem with continuous prices can be approximated 
by the following bilinear–convex formulation:
\begin{align}
    \min \quad & \delta 
    \label{prob:CP-Bilinear}\tag{\sf JAP-CP-BiConv}\\
    \text{s.t.} \quad 
    &\texttt{Constraints \eqref{ctr:ratio}, \eqref{ctr:si}, \eqref{ctr:ui}, \eqref{ctr:wi}}, \nonumber\\[4pt]
    &\texttt{Constraints \eqref{ctr:Wn}, \eqref{ctr:hn}, \eqref{ctr:kn}, \eqref{ctr:bilinear}}, \nonumber\\[4pt]
    & t_{in} =  
      \cPA\!\left(\exp(w_{in}) \,\big|\, 
      \underline{w}_{in}, \overline{w}_{in}, \bq_{in}\right), 
      && \forall i \in [m],~ n \in [N], \label{ctr:ti-pwla}\\[4pt]
    & \bx \in \cX,\;
      \by \in \cY,\;
      \by, \bt, \bu, \bs \in \mathbb{R}_+^{[m]}, \;
      \bW, \bh, \bk \in \mathbb{R}_+^{[N]}, \;
      \delta \in \mathbb{R}_+. \nonumber
\end{align}

In this formulation, the nonlinear exponential constraints \eqref{ctr:ti}
are replaced by its PWLA \eqref{ctr:ti-pwla},
constructed using the adaptive breakpoint procedure described earlier. 
The resulting problem remains a MICP 
with bilinear components, which can be solved efficiently 
via a B\&C scheme incorporating OA cuts. 
As shown in Proposition~\ref{prop:breakpoint} (Appendix \ref{ec:PWLA}), the number of breakpoints required to achieve 
an approximation accuracy $\epsilon$ grows at most on the order of 
$\mathcal{O}(1/\!\sqrt{\epsilon})$, 
thus providing a controllable trade-off between model fidelity and computational cost.
Note that  this reformulation can be straightforwardly extended 
to the MGNL case discussed in Section~\ref{ec:mixed_gnl_model}.

In the approximate formulation~\eqref{prob:CP-Bilinear}, 
the exponential terms associated with prices, $\exp((-y_i\eta_i + \kappa_i)/\sigma_n)$, 
are replaced by their PWLA approximations. 
A natural question, therefore, concerns the performance guarantee of solving 
the approximate problem instead of the original JAP model. 
We analyze this approximation gap below. 
Specifically, the objective function of~\eqref{prob:CP-Bilinear} can be written as
\[
  \widehat{F}(\bx, \by)
  =
  \frac{
    \sum_{n \in [N]} 
    \widehat{W}_n^{\sigma_n - 1}
    \left(
        \beta V_{0n}
        + 
        \sum_{i \in [m]} 
        \alpha_{in} x_i (\beta - y_i)
        \widehat{t}_{in}
    \right)
  }{
    \sum_{n \in [N]} \widehat{W}_n^{\sigma_n}
  },
\]
where
\[
\widehat{t}_{in}
=
\cPA\!\left(\exp\!\left(\frac{-y_i \eta_i + \kappa_i}{\sigma_n}\right)\right),
\qquad
\widehat{W}_n
=
V_{0n}
+
\sum_{i \in [m]} 
\alpha_{in} x_i \widehat{t}_{in}.
\]
The PWLA $\cPA(\cdot)$ is constructed to guarantee the uniform approximation bound
\[
\big|
\cPA(\exp((-y_i \eta_i + \kappa_i)/\sigma_n))
- 
\exp((-y_i \eta_i + \kappa_i)/\sigma_n)
\big|
\le \epsilon,
\quad \forall i \in [m],~ n \in [N].
\]
We now establish an explicit bound on the resulting objective approximation error.
\begin{proposition}
Assume that $W_n(\bx,\by)\ge \underline W_n>0$ for all $n$, and define: 
$\underline W:=\min_n \underline W_n,
\sigma_{\min}:=\min_n\sigma_n.$
Then the approximation gap between the original and PWLA-based objectives satisfies
$|F(\bx,\by)-\widehat F(\bx,\by)|
\;\le\;
C\,\epsilon
\;=\;
\mathcal{O}(\epsilon),$
where $C$ is a constant depending only on model parameters
$(\alpha_{in},V_{0n},\beta,L_i,\sigma_n)$ and the lower bound $\underline W$,
but not on $\epsilon$.
\end{proposition}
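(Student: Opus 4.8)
The plan is to regard both objectives as ratios, writing $F=P/Q$ and $\widehat F=\widehat P/\widehat Q$ with $P=\sum_{n}W_n^{\sigma_n-1}\big(\beta V_{0n}+\sum_i\alpha_{in}x_i(\beta-y_i)t_{in}\big)$, $Q=\sum_n W_n^{\sigma_n}$, and the hatted quantities defined analogously with $t_{in}$ replaced by $\widehat t_{in}$. The strategy is to propagate the uniform error $|\widehat t_{in}-t_{in}|\le\epsilon$ through each layer of the construction, exploiting that every relevant quantity stays inside a compact interval bounded away from the singularity of $W\mapsto W^{\sigma}$ at the origin. First I would establish boundedness: since $y_i\in[L_i,U_i]$ and $\eta_i>0$, the exact terms satisfy $t_{in}\le\exp((-L_i\eta_i+\kappa_i)/\sigma_n)=:\overline t$, so for $\epsilon$ small the approximations obey $\widehat t_{in}\le\overline t+\epsilon$; combined with the hypothesis $W_n\ge\underline W>0$ this yields a uniform band $W_n,\widehat W_n\in[\underline W,\overline W]$ with $\overline W$ independent of $\epsilon$.

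The second step is to propagate the error to the inclusive values. Directly, $|\widehat W_n-W_n|=|\sum_i\alpha_{in}x_i(\widehat t_{in}-t_{in})|\le\epsilon\sum_i\alpha_{in}$, so $|\widehat W_n-W_n|\le C_W\epsilon$ for a constant $C_W$. Because $\sigma_n\in(0,1]$, the maps $W\mapsto W^{\sigma_n}$ and $W\mapsto W^{\sigma_n-1}$ are continuously differentiable on $[\underline W,\overline W]$ with derivatives bounded there (the worst case occurring at $W=\underline W$, where the negative powers are largest); hence both are Lipschitz with constants $L_K,L_H$ depending only on $\underline W,\overline W,\sigma_{\min}$. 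This gives $|\widehat W_n^{\sigma_n}-W_n^{\sigma_n}|\le L_K C_W\epsilon$ and $|\widehat W_n^{\sigma_n-1}-W_n^{\sigma_n-1}|\le L_H C_W\epsilon$. Feeding these into $P$ and $Q$, and using that the bracketed revenue factors $\beta V_{0n}+\sum_i\alpha_{in}x_i(\beta-y_i)t_{in}$ are uniformly bounded while the inner $t_{in}$ errors are themselves $\le\epsilon$, a term-by-term triangle inequality yields $|P-\widehat P|\le C_P\epsilon$ and $|Q-\widehat Q|\le C_Q\epsilon$ with constants independent of $\epsilon$.

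The final step is the quotient estimate. Since $Q=\sum_n W_n^{\sigma_n}$ is bounded below by a positive constant $\underline Q$ depending only on $\underline W$ and the exponents $\sigma_n$, and likewise $\widehat Q$ is bounded below uniformly for small $\epsilon$, I would apply the standard decomposition $|P/Q-\widehat P/\widehat Q|\le |P-\widehat P|/Q+\widehat P\,|Q-\widehat Q|/(Q\widehat Q)$, in which each factor is either $\mathcal{O}(\epsilon)$ (the two differences) or a bounded constant (the numerators above, the reciprocal denominators below). Collecting constants gives $|F-\widehat F|\le C\epsilon$ with $C$ depending only on the model parameters and $\underline W$, as claimed. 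I expect the main obstacle to be the bookkeeping that certifies a uniform upper bound $\overline W$ and, through it, finite Lipschitz constants for the power maps — this is where the assumption $W_n\ge\underline W>0$ together with the bounded price ranges $[L_i,U_i]$ is essential, since without a positive lower bound the derivative of $W\mapsto W^{\sigma_n-1}$ blows up and the propagated error could no longer be controlled independently of $\epsilon$.
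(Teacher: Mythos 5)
Your proposal is correct and follows essentially the same route as the paper's proof: the same numerator/denominator decomposition $F=P/Q$ versus $\widehat F=\widehat P/\widehat Q$, propagation of the uniform PWLA error $|\widehat t_{in}-t_{in}|\le\epsilon$ through $|\widehat W_n-W_n|\le\epsilon\sum_i\alpha_{in}$ and the Lipschitz continuity of the power maps on a domain bounded away from zero, and the identical two-term quotient estimate with denominators bounded below. The only cosmetic difference is that you make the boundedness bookkeeping (the uniform band $[\underline W,\overline W]$ and the bound $\overline t$ on the exponential terms) explicit where the paper leaves it implicit.
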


The results above show that the PWLA-based bilinear–convex reformulation 
provides an $\mathcal{O}(\epsilon)$-accurate approximation of the original continuous-price JAP problem.  
That is, the approximation error in the objective function decreases linearly with the prescribed tolerance $\epsilon$.  
From Proposition~\ref{prop:breakpoint}, the number of breakpoints required to achieve accuracy $\epsilon$ grows at most on the order of $\mathcal{O}(1/\!\sqrt{\epsilon})$ per exponential term.  
Consequently, the overall formulation introduces at most 
$\mathcal{O}(mN/\!\sqrt{\epsilon})$ additional binary variables, 
where $m$ and $N$ denote the number of products and nests, respectively.  
In practice, however, due to the rapidly decaying curvature of the exponential function, 
a relatively small number of breakpoints already yields high accuracy.  
Hence, one can balance computational efficiency and precision by using only a modest number of binary variables 
while maintaining a negligible loss in optimality.

%
%
%
\section{Numerical Experiments}\label{sec:experiments}
This section presents a comprehensive numerical study to evaluate the computational performance and solution quality of the proposed frameworks for the assortment (and pricing) optimization problem under the GNL model. All experiments are conducted on synthetically generated instances that systematically vary in problem size, nest structure, and substitution patterns, allowing for a controlled assessment of scalability and robustness across diverse market environments. In all cases, the instances incorporate general linear constraints on the decision variables, reflecting realistic operational and capacity restrictions. 

This section is organized as follows. We first describe the experimental setup, including instance generation, parameter settings, and implementation details. We then report the main comparative results between the proposed approaches and baseline methods for the assortment optimization problem under the GNL model, as well as for the JAP problem with continuous prices under GNL. Additional experiments—including results for assortment optimization under the MGNL model, JAP with discrete prices, and a comparison among the three proposed approaches for assortment optimization under GNL—are deferred to Appendix~\ref{ec:gnl_mgnl}--\ref{ec:cardinality_cross_rate}.

\subsection{Experimental Settings}
\paragraph{Instance Generation. }
Our test instances are generated following the procedure in \cite{Cuong2024}. The number of products $m$ varies from 10 to 1000, and the number of nests $N$ is chosen from $\{2, 5, 10, 20\}$. The dissimilarity parameter $\sigma_n$ of each nest $S_n$ is uniformly generated from $[0.25, 1]$. To control the overlap among nests, the average number of nests, which a product belongs to, is set to $\gamma = 1.2$. Specifically, $\lceil \gamma m \rceil$ product entries (sampled with replacement) are randomly assigned to the $N$ nests, allowing each product to appear in multiple nests, but at most once in each nest. For each product $i$ and nest $n$, the allocation parameter $\alpha_{in}$ is drawn from a uniform distribution over $[0,1]$ and normalized such that $\alpha_{in} = 0$ if $i \notin S_n$, and $\sum_{n \in [N]} \alpha_{in} = 1$ for all $i \in [m]$. Two types of cardinality constraints are imposed: a general capacity constraint: $\sum_{i \in [m]} x_i \leq c$ where $c = \lceil 0.5 m \rceil$, and nest-capacity constraints $\sum_{i \in S_n} x_i \leq c_n$ where $c_n = \lceil 0.8 |S_n| \rceil$.

All the experiments are implemented using C++ and conducted on a PC equipped with an Intel(R) Core(TM) i7-9700 CPU @ 3.00GHz, 16 GB of RAM, and the Windows 11 operating system. The formulations are carried out by \texttt{GUROBI 12.0.1} with default settings, except for an 8-thread limit, and \texttt{SCIP 9.1.1} wherein the minimum number of threads used is set to 8. The CPU time limit for each instance is set to 3600 seconds, i.e., the algorithms/solver are forced to stop if they exceed the time budget and report the best solutions found.

\paragraph{Baselines. }
For the assortment optimization problem under the GNL and MGNL models, as well as the JAP problem with \textit{discrete price} levels, we benchmark our methods against the approach of \cite{Cuong2024}, as, to the best of our knowledge, this is the only existing method capable of handling general linear constraints in these settings. For the JAP problem with \textit{continuous pricing decisions}, where fewer exact benchmarks are available, we consider two  baselines. The first baseline discretizes the continuous price domain into a finite set of candidate prices and solves the resulting problem using the corresponding discrete-price formulation. The second baseline directly applies the commercial solver \texttt{SCIP}~\citep{SCIP2024} to the resulting MINLP formulation. These comparisons allow us to assess both the benefits of avoiding price discretization and the computational advantages of the proposed reformulations over generic MINLP solvers.
\subsection{Assortment Optimization Under GNL Model}
In this section, we examine the computational performance of our exact methods for the assortment optimization problem under the GNL model. Among the three approaches described in Sections~\ref{sec:convexification_gnl} and Appendix~\ref{ec:mixed_gnl_model}—namely, the bisection method, the convex reformulation via logarithmic transformation, and the bilinear--convex reformulation—we select the bilinear--convex reformulation for detailed comparison, as it achieves the best overall computational performance. A comprehensive comparison among these three approaches is reported in Appendix~\ref{ec:assort-GNL}. We refer to this approach as \texttt{B\&C-BiCo}. For clarity, we report results for three variants of \texttt{B\&C-BiCo}: one using only OA cuts, one using only SCs, and one incorporating both types of cuts (OA+SC).
We compare \texttt{B\&C-BiCo} against a direct baseline from the literature, namely the MILP-based approximation proposed by \citet{Cuong2024}. Among the methods introduced by \citet{Cuong2024}, we consider their two strongest variants: the bisection-based approach (BIS) and the MILP approximation scheme, both implemented with the accuracy level set to $99\%$.

We generate 15 datasets with $N \in \{5,10,20\}, \text{ } m \in \{50,100,200,500,1000\}$ and each dataset contains 5 different instances. For each instance, the preference $V_{in}$ and the revenue $r_i$ of the product $i$ are generated such that products
with higher revenues (i.e., high costs) are more likely to be less preferred. To do this, we randomly generate $u_i \in (0,1]$, $X_i \in [0.1,10]$, $Y_i \in [0.1,10]$ and calculate $V_{in}=(1-u_i)Y_i$ and $r_i=u_i^2X_i$. The no-purchase preference $V_{0n}$ is set to 1 for all $n \in [N]$.
\begin{table}[htb]\footnotesize
\centering
\resizebox{\textwidth}{!}{%
\begin{tabular}{rlrrrlrrrlrrlrrlrr}
\toprule
\multicolumn{1}{l}{} &  & \multicolumn{7}{c}{MILP-based approximation}        &  & \multicolumn{8}{c}{\texttt{B\&C-BiCo} (ours)}                                \\ \cmidrule{3-9} \cmidrule{11-18} 
\multicolumn{1}{l}{} &
   &
  \multicolumn{3}{c}{BIS (99\%)} &
   &
  \multicolumn{3}{c}{MILP (99\%)} &
   &
  \multicolumn{2}{c}{OA} &
   &
  \multicolumn{2}{c}{SC} &
   &
  \multicolumn{2}{c}{OA + SC} \\ \cmidrule{3-5} \cmidrule{7-9} \cmidrule{11-12} \cmidrule{14-15} \cmidrule{17-18} 
$(N,m)$ &
   &
  \#O/\#S &
  Time(s) &
  Gap(\%) &
   &
  \#O/\#S &
  Time(s) &
  Gap(\%) &
   &
  \#O &
  Time(s) &
   &
  \#O &
  Time(s) &
   &
  \#O &
  Time(s) \\ \midrule
(5,50)               &  & 3/5 & 5.65    & -0.58 &  & 3/5 & 2.36    & -0.58 &  & 5 & 0.11           &  & 5 & \textbf{0.07} &  & 5 & \textbf{0.07} \\
(5,100)              &  & 1/5 & 37.54   & -0.28 &  & 1/5 & 9.88    & -0.48 &  & 5 & \textbf{0.18}  &  & 5 & 0.26          &  & 5 & 0.19          \\
(5,200)              &  & 0/5 & 288.08  & -0.25 &  & 0/5 & 53.62   & -0.24 &  & 5 & 0.43           &  & 5 & 1.16          &  & 5 & \textbf{0.70} \\
(5,500)              &  & 0/3 & 2282.58 & -0.47 &  & 0/5 & 812.46  & -0.33 &  & 5 & \textbf{4.86}  &  & 5 & 12.09         &  & 5 & 7.55          \\
(5,1000)             &  & 0/0 & -       & -     &  & 0/2 & 2898.10 & -0.66 &  & 5 & \textbf{10.74} &  & 5 & 26.03         &  & 5 & 14.85         \\ \midrule
(10,50)              &  & 4/5 & 1.68    & 0.00     &  & 3/5 & 12.12   & -0.03 &  & 5 & 0.21           &  & 5 & 0.15          &  & 5 & \textbf{0.14} \\
(10,100)             &  & 3/5 & 19.76   & -0.07 &  & 4/5 & 112.71  & -0.01 &  & 5 & \textbf{0.14}  &  & 5 & 0.21          &  & 5 & 0.18          \\
(10,200)             &  & 1/5 & 509.72  & -0.12 &  & 1/2 & 2391.69 & -0.10  &  & 5 & \textbf{0.27}  &  & 5 & 0.44          &  & 5 & 0.41          \\
(10,500)             &  & 2/2 & 2282.58 & 0.00     &  & 0/1 & 3402.09 & -0.08 &  & 5 & 4.44           &  & 5 & 9.14          &  & 5 & \textbf{3.11} \\
(10,1000)            &  & 0/0 & -       & -     &  & 0/0 & -       & -     &  & 5 & \textbf{20.85} &  & 5 & 26.84         &  & 5 & 30.52         \\ \midrule
(20,50)              &  & 4/5 & 0.50    & -0.02 &  & 4/5 & 3.62    & -0.02 &  & 5 & 0.19           &  & 5 & \textbf{0.15} &  & 5 & \textbf{0.15} \\
(20,100)             &  & 1/5 & 6.59    & -0.03 &  & 2/4 & 2085.58 & -0.02 &  & 5 & 0.73           &  & 5 & 0.50          &  & 5 & \textbf{0.42} \\
(20,200)             &  & 4/5 & 250.14  & -0.01 &  & 0/0 & -       & -     &  & 5 & 0.65           &  & 5 & \textbf{0.58} &  & 5 & 0.60          \\
(20,500)             &  & 0/0 & -       & -     &  & 0/0 & -       & -     &  & 5 & \textbf{2.35}  &  & 5 & 4.18          &  & 5 & 3.55          \\
(20,1000)            &  & 0/0 & -       & -     &  & 0/0 & -       & -     &  & 5 & \textbf{16.68} &  & 5 & 56.19         &  & 5 & 22.46         \\ \midrule 
Summary: &  & 23/50  &  &  &  & 18/44 & & &  & \textbf{75} &  &  & \textbf{75} &  &  & \textbf{75} & \\ \bottomrule
\end{tabular}%
}
\caption{Comparison results on GNL instances.}
\label{tab:GNL}
\end{table}

Table~\ref{tab:GNL} reports a comprehensive comparison of solution quality and computational efficiency between the \texttt{B\&C-BiCo} and the approximation methods of \citet{Cuong2024} across a range of GNL instances. 
We note that the MINLP formulation solvable to optimality by \texttt{SCIP} but we  exclude from this comparison, as it exhibits substantially higher computational times than our proposed exact methods (see Appendix~\ref{ec:assort-GNL} for details).

The first column of the table lists the datasets considered. Columns ``\#O/\#S'' report the number of instances whose objective values are certified as optimal by exact methods~(\#O) over the number of instances solved within the prescribed time limit~(\#S). The columns labeled ``Time (s)'' present the average runtime in seconds over five instances for each dataset, while the columns ``Gap (\%)'' report the average optimality gap relative to the optimal values obtained by exact methods. A dash ``--'' indicates that the corresponding instances either timed out or failed to produce a feasible solution from which a gap could be computed.

The results clearly demonstrate the superiority of \texttt{B\&C-BiCo}, which successfully solves all 75 instances with consistently low computation times. In contrast, both the BIS and MILP approximation methods show limited scalability, solving substantially fewer instances as problem size increases. Among the B\&C variants, the one using OA cuts achieves the best runtime performance in most cases, offering a strong balance between computational efficiency and robustness. Overall, these findings highlight the practical advantages of the proposed exact method over approximation-based approaches, particularly for large-scale GNL instances.

\subsection{Assortment  and Price Optimization with Continuous Prices (A\&CP)}
This section reports comparative results for the JAP problem with continuous price decisions under the GNL model. To the best of our knowledge, this is the first study to consider and solve this problem exactly. As a result, we benchmark our proposed approach—based on the PWLA framework described in Section~\ref{sec:JAP}—against two direct baseline methods.

The first baseline directly solves the resulting MINLP problem using the general-purpose solver \texttt{SCIP}. The second baseline discretizes the continuous price domain into a finite set of candidate prices and solves the resulting approximation, which corresponds to the JAP problem with discrete prices. We denote this discretization-based approach as \texttt{B\&C-BiCo-DP}, standing for a B\&C method based on the bilinear--convex reformulation applied to the discretized-price JAP problem.

In this experimental setting, the number of nests is set to $N \in \{2,5\}$, the number of products varies over $m \in \{10,20,30,40,50\}$. For each product~$i$, the lower price bound is fixed at $L_i = 0.5$, while the upper bound is defined as $U_i = L_i \gamma_i + 0.5$, where $\gamma_i \sim (0,1)$ is independently sampled. The \texttt{B\&C-BiCo-DP} approach is evaluated under two discretization schemes. In the first scheme, prices are generated uniformly as
$p_{il} = l \gamma_i + 0.5, \quad \forall l \in [L],$
and in the second scheme,
$p_{il} = 0.5\, l \gamma_i + 0.5, \quad \forall l \in [2L],$
where $L \in \{5,10\}$ controls the discretization granularity.

For the proposed PWLA-based method, the approximation error for $\mathcal{P}_i\!\left(\exp(y_i \eta_i + \kappa_i)\right)$ in formulation~\eqref{prob:CP-Bilinear} is set to $10^{-3}$. We denote this approach as \texttt{B\&C-BiCo-PWLA}, which reformulates the continuous-price JAP problem as a bilinear--convex program using PWLA with adaptively and optimally selected breakpoints, and solves it via a B\&C framework with OA cuts implemented in \texttt{GUROBI}.

\begin{table}[htb]\footnotesize
\centering
\resizebox{0.9\textwidth}{!}{%
\begin{tabular}{rrrrrrrrrrrrrrrr}
\toprule
 &
   &
  \multicolumn{2}{c}{\multirow{2}{*}{\texttt{SCIP}}} &
   &
  \multicolumn{7}{c}{\texttt{B\&C-BiCo-DP}} &
   &
  \multicolumn{3}{c}{\multirow{2}{*}{\texttt{B\&C-BiCo-PWLA}}} \\ \cmidrule{6-12}
 &
   &
  \multicolumn{2}{c}{} &
   &
  \multicolumn{3}{c}{$\#Price = L$} &
   &
  \multicolumn{3}{c}{$\#Price = 2L$} &
   &
  \multicolumn{3}{c}{} \\ \cmidrule{3-4} \cmidrule{6-8} \cmidrule{10-12} \cmidrule{14-16} 
$(N,m,L)$ &
   &
  \#O &
  Time(s) &
   &
  \#B &
  Time(s) &
  Gap(\%) &
   &
  \#B &
  Time(s) &
  Gap(\%) &
   &
  \#B &
  Time(s) &
  Gap(\%) \\ \cmidrule(r){1-12} \cmidrule(l){14-16} 
(2,10,5)  &  & \textbf{5} & 4.11    &  & 0 & 0.05   & -0.10 &  & 1          & 0.1    & -0.02 &  & \textbf{5}  & 0.14    & 0.00  \\
(2,20,5)  &  & 2          & 2937.57 &  & 1 & 0.27   & -0.02 &  & 1          & 0.6    & 0.06  &  & \textbf{5}  & 15.23   & 0.09  \\
(2,30,5)  &  & 0          & -       &  & 0 & 0.38   & 2.01  &  & 0          & 2.42   & 2.09  &  & \textbf{5}  & 195.27  & 2.11  \\
(2,40,5)  &  & 0          & -       &  & 1 & 0.37   & 5.03  &  & 3          & 0.95   & 5.13  &  & \textbf{4}  & 2219.31 & 5.13  \\
(2,50,5)  &  & 0          & -       &  & 0 & 0.92   & 8.31  &  & 1          & 544.34 & 8.37  &  & \textbf{5}  & 2999.52 & 8.38  \\ \midrule
(2,10,10) &  & \textbf{5} & 9.53    &  & 1 & 0.13   & -0.06 &  & 2          & 0.34   & -0.01 &  & \textbf{5}  & 0.21    & 0.00  \\
(2,20,10) &  & 1          & 3441.09 &  & 1 & 0.43   & 0.69  &  & 1          & 1.15   & 0.72  &  & \textbf{5}  & 2268.22 & 0.74  \\
(2,30,10) &  & 0          & -       &  & 0 & 0.61   & 1.88  &  & 2          & 1.41   & 1.90  &  & \textbf{5}  & -       & 1.91  \\
(2,40,10) &  & 0          & -       &  & 0 & 1.43   & 2.01  &  & 3          & 2.45   & 2.04  &  & \textbf{5}  & -       & 2.05  \\
(2,50,10) &  & 0          & -       &  & 1 & 723.87 & 10.58 &  & \textbf{3} & 1206.9 & 10.62 &  & 2           & -       & 10.60 \\ \midrule
(5,10,5)  &  & \textbf{5} & 3.71    &  & 2 & 0.05   & -0.04 &  & 2          & 0.12   & -0.01 &  & \textbf{5}  & 0.14    & 0.00  \\
(5,20,5)  &  & 3          & 2212.17 &  & 1 & 0.15   & 0.20  &  & 1          & 0.43   & 0.31  &  & \textbf{5}  & 77.9    & 0.34  \\
(5,30,5)  &  & 1          & -       &  & 0 & 0.29   & 1.75  &  & 2          & 1.50   & 1.84  &  & \textbf{5}  & 1506.08 & 1.85  \\
(5,40,5)  &  & 0          & -       &  & 0 & 0.40   & 2.55  &  & 2          & 1.53   & 2.66  &  & \textbf{3}  & 2379.64 & 2.67  \\
(5,50,5)  &  & 0          & -       &  & 0 & 0.53   & 2.3   &  & 1          & 3.53   & 2.39  &  & \textbf{3}  & -       & 2.39  \\ \midrule
(5,10,10) &  & \textbf{5} & 6.93    &  & 0 & 0.13   & -0.07 &  & 0          & 0.26   & -0.03 &  & \textbf{5}  & 0.31    & 0.00  \\
(5,20,10) &  & 1          & 3535.19 &  & 0 & 0.39   & 1.55  &  & 0          & 0.88   & 1.63  &  & \textbf{5}  & 2074.77 & 1.66  \\
(5,30,10) &  & 0          & -       &  & 0 & 0.8    & 5.51  &  & 0          & 1.70   & 5.55  &  & \textbf{5}  & -       & 5.57  \\
(5,40,10) &  & 0          & -       &  & 0 & 2.86   & 5.38  &  & \textbf{3} & 21.15  & 5.44  &  & \textbf{3}  & -       & 5.42  \\
(5,50,10) &  & 0          & -       &  & 0 & 1.32   & 3.50  &  & \textbf{4} & 3.50   & 3.55  &  & 3           & -       & 3.52  \\ \midrule
Summary:  &  & 28         &         &  & 8 &        &       &  & 32         &        &       &  & \textbf{88} &         &       \\ \bottomrule
\end{tabular}
}
\caption{Comparison results on A\&CP instances.}
\label{tab:ACP}
\end{table}

Table~\ref{tab:ACP} reports the performance comparison across three approaches: \texttt{SCIP}, \texttt{B\&C-BiCo-DP} (with varying number of discrete prices), and \texttt{B\&C-BiCo-PWLA}. Columns ``\#B" report the number of best-found solutions obtained by our approximation methods, whereas column ``\#O" presents the number of instances solved to optimality by \texttt{SCIP}. Columns ``Gap(\%)" show the average gap relative to the objective values obtained by \texttt{SCIP}. For each instance and B\&C-based method, the gap is calculated as: $\frac{Objective_{\texttt{B\&C-BiCo}}-Objective_{\texttt{SCIP}}}{Objective_{\texttt{SCIP}}}\times 100(\%)$.

The \texttt{B\&C-BiCo-PWLA} approach clearly demonstrates superior performance in solution quality. It achieves the highest number of best-found objective values (88 out of 100 instances), significantly outperforming \texttt{SCIP} (28/100) and both \texttt{B\&C-BiCo-DP} variants (9/100 and 32/100). For the gap, \texttt{B\&C-BiCo-PWLA} tends to achieve slightly higher gaps in most instance sets, but the improvement over \texttt{B\&C-BiCo-DP} is limited, suggesting comparable solution quality between the two approaches (see Appendix~\ref{ec:obj_jap_cp} for more details). The table further illustrates the inherent trade-off between computational time and solution quality across the three B\&C variants. Although \texttt{B\&C-BiCo-PWLA} yields the highest number of high-quality solutions, it frequently fails to prove the optimality during the time limit when $m$ increases. In contrast, using smaller price sets in \texttt{B\&C-BiCo-DP} significantly reduces computation time, but at the cost of worse average optimality gaps, particularly for instances with smaller values of \(m\), while employing larger price sets achieves more balanced performances in terms of both runtime and solution quality. 

\section{Conclusion}\label{sec:concl}

In this paper, we study assortment and JAP problems under GNL choice models with general linear constraints. We develop a unified exact solution framework based on bilinear--convex reformulations and B\&C algorithms enhanced with OA cuts and SCs. The framework applies to GNL and MGNL models, as well as JAP problems with both discrete and continuous price decisions; for the latter, we introduce a PWLA scheme that enables efficient exact optimization without sacrificing modeling fidelity. Extensive numerical experiments show that the proposed methods substantially outperform existing MILP-based approximations in both scalability and solution quality. In particular, our approach consistently solves large-scale instances to optimality or near-optimality, while approximation methods degrade as problem size and customer heterogeneity increase. The results also demonstrate the economic value of explicitly modeling heterogeneity through MGNL and of avoiding price discretization in continuous-price JAP settings.

Several directions for future research are promising, including extensions to multi-level GNL models, dynamic assortment and pricing with learning and inventory considerations, and further enhancements to approximation and cut-generation strategies for large-scale problems.

\bibliographystyle{plainnat_custom}
\bibliography{ref,refs-POM}

\clearpage

\appendix
\begin{center}
    {\Large Appendix }
\end{center}

This appendix provides supplementary technical material that supports and extends the results presented in the main paper. 
Section~\ref{ec:proofs} presents detailed derivations and proofs of the theoretical properties underlying the GNL-based formulation, including proofs of all propositions stated in the main text. 
Section~\ref{ec:mixed_gnl_model} provides detailed reformulations for the assortment optimization problem under the MGNL model. 
Section~\ref{ec:JAP_DP} presents the formulation and solution methods for the JAP problem with discrete prices. 
Section~\ref{ec:PWLA} discusses in detail the construction of optimal breakpoints for the PWLA in the JAP problem with continuous prices. 
Section~\ref{ec:zero_opt_out} studies reformulations and solution methods for the case in which the opt-out (non-purchase) option may take zero values. 
Finally, Sections~\ref{ec:assort-GNL}--\ref{ec:cardinality_cross_rate} report additional computational experiments, including comparisons of the different approaches developed for the assortment optimization and JAP problems presented in the main paper, comparative analyses between the GNL and MGNL models, comparisons between JAP with continuous and discrete prices, and investigations into the impact of cardinality constraints and cross-rate parameters in the GNL model.

\section{Missing Proofs}\label{ec:proofs}
This section provides proofs that are omitted from the main paper.

\subsection{Proof of Proposition 1:}
\begin{proof}
Start from the objective of \eqref{prob:GNL-assort}:
\[
F(\bx)\;=\;\frac{\sum_{i\in[m]}\sum_{n\in[N]} W_n^{\sigma_n-1}\,\alpha_{in}x_i r_i V_{in}}
{\sum_{n\in[N]} W_n^{\sigma_n}}\;=\;\frac{\sum_{n\in[N]} W_n^{\sigma_n-1}\Big(\sum_{i\in[m]}\alpha_{in}x_i r_i V_{in}\Big)}
{\sum_{n\in[N]} W_n^{\sigma_n}}.
\]
Fix any $\beta>\max_i \{r_i\}$ and note the algebraic identity
$F(\bx)\;=\;\beta-\big(\beta-F(\bx)\big).$
Because the denominator is strictly positive by assumption, we can write
\begin{align*}
\beta-F(\bx)
&= \frac{\beta\sum_{n\in[N]} W_n^{\sigma_n}}{\sum_{n\in[N]} W_n^{\sigma_n}}
   - \frac{\sum_{n\in[N]} W_n^{\sigma_n-1}\Big(\sum_{i\in[m]}\alpha_{in}x_i r_i V_{in}\Big)}
          {\sum_{n\in[N]} W_n^{\sigma_n}}\\
&= \frac{\sum_{n\in[N]} W_n^{\sigma_n-1}\Big(\beta W_n-\sum_{i\in[m]}\alpha_{in}x_i r_i V_{in}\Big)}
         {\sum_{n\in[N]} W_n^{\sigma_n}}\\
&= \frac{\sum_{n\in[N]} W_n^{\sigma_n-1}\Big(\beta\,(V_{0n}+\sum_{i}\alpha_{in}x_i V_{in})-\sum_{i}\alpha_{in}x_i r_i V_{in}\Big)}
         {\sum_{n\in[N]} W_n^{\sigma_n}}\\
&= \frac{\sum_{n\in[N]} W_n^{\sigma_n-1}\Big(\beta V_{0n}+\sum_{i}\alpha_{in}x_i(\beta-r_i) V_{in}\Big)}
         {\sum_{n\in[N]} W_n^{\sigma_n}}\\
&= \frac{\sum_{n\in[N]} W_n^{\sigma_n-1}\Big(\beta V_{0n}+\sum_{i}\alpha_{in}x_i r'_i V_{in}\Big)}
         {\sum_{n\in[N]} W_n^{\sigma_n}}.
\end{align*}
Hence,
\[
F(\bx)\;=\;\beta-\frac{\sum_{n\in[N]} W_n^{\sigma_n-1}\Big(\beta V_{0n}+\sum_{i}\alpha_{in}x_i r'_i V_{in}\Big)}
{\sum_{n\in[N]} W_n^{\sigma_n}}.
\]
Since $\beta$ is a constant independent of $\bx$, maximizing $F(\bx)$ over $\bx\in\cX$ is equivalent to minimizing the fraction on the right-hand side over the same feasible set. The constraints defining $W_n$ are identical to those in \eqref{prob:GNL-assort}, so the feasible regions coincide in $(\bx,\bW)$. Therefore, \eqref{prob:GNL-assort} and \eqref{prob:GNL-assort-min} have the same optimal assortments and their optimal values are related by
\[
\text{OPT}(\text{\sf Assort-1}) \;=\; \beta - \text{OPT}(\text{\sf Assort-2}).
\]
This proves the claimed max--min equivalence.
\end{proof}

\subsection{Proof of Proposition 2:}
\begin{proof}
Since $W_n(\bx) = V_{0n} + \sum_{i} \alpha_{in} x_i V_{in}$ is an affine (and hence both convex and concave) function of $\bx$, the convexity and concavity of $H_n$ and $K_n$ follow directly from standard composition rules of convex analysis. In particular:
\begin{itemize}
    \item For $\sigma_n \in (0,1]$, the function $t \mapsto t^{\sigma_n}$ is concave and nondecreasing on $\mathbb{R}_+$. Hence, $K_n(\bx) = W_n(\bx)^{\sigma_n}$ is concave in $\bx$.
    \item For $\sigma_n \in (0,1)$, the function $t \mapsto t^{\sigma_n - 1}$ is convex and nonincreasing on $\mathbb{R}_+$. Therefore, $H_n(\bx) = W_n(\bx)^{\sigma_n - 1}$ is convex in $\bx$. When $\sigma_n = 1$, we have $H_n(\bx) = W_n^0 = 1$, which is trivially convex.
\end{itemize}
\end{proof}

\subsection{Proof of Proposition 3:}
\begin{proof}
Let $W_n(S) = V_{0n} + \sum_{i \in S} \alpha_{in} V_{in}$.  
Since all parameters $\alpha_{in} V_{in} \ge 0$, $W_n(S)$ is a nondecreasing set function; that is, 
for any $A \subseteq B \subseteq [m]$, we have $W_n(A) \le W_n(B)$.  
We analyze the two functions in turn.

\paragraph{(i) Submodularity of $K_n(S)$.}
For $\sigma_n \in (0,1]$, the mapping $t \mapsto t^{\sigma_n}$ is concave and nondecreasing over $\mathbb{R}_+$.  
By a standard result in submodular analysis (see, e.g., Bach 2013),  
the composition of a nondecreasing concave function with a nondecreasing modular (additive) function 
yields a \emph{submodular} set function.  
Hence $K_n(S) = f(W_n(S))$ with $f(t) = t^{\sigma_n}$ is submodular.  
Moreover, since $f$ is nondecreasing, $K_n(S)$ is also monotonically increasing in $S$; 
that is, adding any element $i \notin S$ never decreases the value of $K_n(S)$.

\paragraph{(ii) Supermodularity of $H_n(S)$.}
For $\sigma_n \in (0,1)$, the mapping $t \mapsto t^{\sigma_n - 1}$ is convex and nonincreasing over $\mathbb{R}_+$.  
Again using composition results for set functions,  
the composition of a nonincreasing convex function with a nondecreasing modular function 
produces a \emph{supermodular} set function (the sign of the marginal difference reverses).  
Therefore, $H_n(S)$ is supermodular.  
Furthermore, since the function $t^{\sigma_n - 1}$ is nonincreasing, $H_n(S)$ is monotonically decreasing:  
adding more elements to $S$ reduces its value.  
When $\sigma_n = 1$, $H_n(S) = 1$ is constant, and thus trivially both supermodular and submodular.

Formally, the supermodularity of $H_n$ can also be verified via the diminishing-increase property:  
for any $A \subseteq B \subseteq [m]$ and any $i \notin B$:
$H_n(A \cup \{i\}) - H_n(A) 
\le 
H_n(B \cup \{i\}) - H_n(B),$ since the marginal decrease in $H_n(S)$ becomes smaller as $S$ grows.  
An analogous argument applies for $K_n(S)$, 
where the concavity of $t^{\sigma_n}$ ensures diminishing returns, i.e., 
$
K_n(A \cup \{i\}) - K_n(A)
\ge 
K_n(B \cup \{i\}) - K_n(B),
$ confirming submodularity.
\end{proof}

\subsection{Proof of Proposition 4:}
\begin{proof}
We recall the standard set-function definitions. A set function $f:2^{[m]}\to\mathbb{R}$ is
\emph{submodular} if for all $A\subseteq B\subseteq [m]$ and $j\notin B$,
\[
\Delta_f(A,j)\;\equiv\;f(A\cup\{j\})-f(A)\;\ge\; f(B\cup\{j\})-f(B)\;=\;\Delta_f(B,j),
\]
i.e., it has \emph{diminishing returns}. It is \emph{supermodular} if $-f$ is submodular, i.e., if
the above inequality is reversed. Monotonicity is with respect to set inclusion.

\medskip
\noindent\textbf{(i) Supper-modularity of $Y_n$.}
Fix a nest $n$ and write $W(S)=W_n(S)$ for brevity. Note that $W(S)$ is \emph{modular} (additive) in $S$
with nonnegative weights because $\alpha_{in}V_{in}\ge 0$ and $V_{0n}\ge 0$. Since $\sigma_n\in(0,1]$, the scalar function: 
$
\phi(t)\;=\;(\sigma_n-1)\log t, t>0,
$ is \emph{convex} and \emph{nonincreasing} (indeed, $\phi'(t)=(\sigma_n-1)/t\le 0$ and
$\phi''(t)=(1-\sigma_n)/t^2\ge 0$). Hence $Y_n(S)=\phi(W(S))$ is the composition of a
nonincreasing convex function with a nondecreasing modular function.

\emph{Monotonicity.} If $S\subseteq T$ then $W(S)\le W(T)$, so because $\phi$ is nonincreasing,
$Y_n(S)=\phi\big(W(S)\big)\ge \phi\big(W(T)\big)=Y_n(T)$. Thus $Y_n$ is monotonically decreasing.

\emph{Supermodularity.}
Fix $A\subseteq B$ and $j\notin B$. Let $\delta=\alpha_{jn}v_j\ge 0$. Consider the marginal change
\[
\Delta_{Y_n}(S,j)\;=\;Y_n(S\cup\{j\})-Y_n(S)
\;=\;(\sigma_n-1)\Big[\log\!\big(W(S)+\delta\big)-\log W(S)\Big].
\]
Define $g(w):=\log(w+\delta)-\log w=\log\!\big(1+\delta/w\big)$ for $w>0$. Then
$g'(w)=\frac{1}{w+\delta}-\frac{1}{w}=-\frac{\delta}{w(w+\delta)}<0$, so $g$ is strictly
\emph{decreasing} in $w$. Since $W(A)\le W(B)$, we have $g\big(W(A)\big)\ge g\big(W(B)\big)$.
Multiplying by the negative constant $(\sigma_n-1)\le 0$ reverses the inequality:
\[
\Delta_{Y_n}(A,j) \;=\; (\sigma_n-1)g\big(W(A)\big)
\;\le\; (\sigma_n-1)g\big(W(B)\big) \;=\; \Delta_{Y_n}(B,j).
\]
Therefore $\Delta_{Y_n}(A,j)\le \Delta_{Y_n}(B,j)$, i.e., $Y_n$ is \emph{supermodular}.

\noindent\textbf{(ii) Submodularity of $Z(S)$.}
Let $S\subseteq [m]$ be a given assortment. For each nest $n\in[N]$, define 
$\delta_{nj} := \alpha_{jn} v_j \ge 0,$ for each product $j\in [m]$ and nest $n\in [N]$.
Then the nonlinear component of interest is: 
$
Z(S) = \log\!\left(\sum_{n=1}^N W_n(S)^{\sigma_n}\right).
$

Define the shorthand:
$
U_n(S) := W_n(S)^{\sigma_n}, 
\text{ and } 
H(S) := \sum_{n=1}^N U_n(S).
$
For any $j \notin S$, adding product $j$ increases each nest’s term by: 
$
U_n(S \cup \{j\}) = (W_n(S) + \delta_{nj})^{\sigma_n}.
$
Hence, the marginal gain of adding $j$ is
\[
\Delta_Z(S,j)
  := Z(S \cup \{j\}) - Z(S)
  = \log\!\big(H(S) + \Delta_H(S,j)\big)
    - \log\!\big(H(S)\big),
\]
where $\Delta_H(S,j)
     = \sum_{n=1}^N \big[(W_n(S) + \delta_{nj})^{\sigma_n} - W_n(S)^{\sigma_n}\big]$.

\smallskip
\noindent\emph{Monotonicity.}
Since $\sigma_n \in (0,1]$ and each $\delta_{nj} \ge 0$, we have
$(W_n(S) + \delta_{nj})^{\sigma_n} \ge W_n(S)^{\sigma_n}$ for all $n$.
Thus $\Delta_H(S,j) \ge 0$ and consequently $\Delta_Z(S,j) \ge 0$,
which establishes that $Z(S)$ is monotonically increasing in $S$.

\smallskip
\noindent\emph{Submodularity (diminishing returns).}
Let $A \subseteq B \subseteq [m]$ and $j \notin B$.  
Define vectors 
\[
\mathbf{w}(S) := (W_1(S), \dots, W_N(S))^\top, \qquad
\boldsymbol{\delta}_j := (\delta_{1j}, \dots, \delta_{Nj})^\top.
\]
We can express $Z(S)$ as a function of $\mathbf{w}(S)$: 
$
Z(S) = f(\mathbf{w}(S)),\text{ where } f(\mathbf{x}) = \log\!\Big(\sum_{n=1}^N x_n^{\sigma_n}\Big)
$.

Then, for any $\mathbf{x} > 0$, the marginal effect of adding product $j$ is
\[
\Delta_f(\mathbf{x}, \boldsymbol{\delta}_j)
    := f(\mathbf{x} + \boldsymbol{\delta}_j) - f(\mathbf{x})
    = \log\!\left(\sum_{n=1}^N (x_n + \delta_{nj})^{\sigma_n}\right)
     - \log\!\left(\sum_{n=1}^N x_n^{\sigma_n}\right).
\]

To prove submodularity, it suffices to show that
$\Delta_f(\mathbf{x}, \boldsymbol{\delta}_j)$ is nonincreasing in $\mathbf{x}$ componentwise.
We compute the partial derivative with respect to $x_k$:
\[
\frac{\partial \Delta_f(\mathbf{x}, \boldsymbol{\delta}_j)}{\partial x_k}
= \frac{\sigma_k (x_k + \delta_{kj})^{\sigma_k - 1}}
       {\sum_{n=1}^N (x_n + \delta_{nj})^{\sigma_n}}
  - \frac{\sigma_k x_k^{\sigma_k - 1}}
       {\sum_{n=1}^N x_n^{\sigma_n}}.
\]

Since $\sigma_k \in (0,1]$, the function $t \mapsto t^{\sigma_k - 1}$ is strictly decreasing on
$(0, \infty)$. Thus, $(x_k + \delta_{kj})^{\sigma_k - 1} \le x_k^{\sigma_k - 1}$, and moreover
$
\sum_{n=1}^N (x_n + \delta_{nj})^{\sigma_n} \ge \sum_{n=1}^N x_n^{\sigma_n}
$. 

Both effects cause the first term above to be less than or equal to the second, implying that:
$
\frac{\partial \Delta_f(\mathbf{x}, \boldsymbol{\delta}_j)}{\partial x_k} \le 0
\quad \forall k
$. Hence $\Delta_f(\mathbf{x}, \boldsymbol{\delta}_j)$ is nonincreasing in $\mathbf{x}$ componentwise.

Because $\mathbf{w}(A) \le \mathbf{w}(B)$ elementwise whenever $A \subseteq B$, we obtain
\[
\Delta_Z(A,j) = \Delta_f(\mathbf{w}(A), \boldsymbol{\delta}_j)
   \ge \Delta_f(\mathbf{w}(B), \boldsymbol{\delta}_j)
   = \Delta_Z(B,j),
\]
which verifies the diminishing returns property. Therefore, $Z$ is submodular.
\end{proof}

\subsection{Proof of Proposition 5:}
\begin{proof} Recall that, at optimality, we have $y_n = (\sigma_n - 1)\log(W_n)$ for each $n \in [N]$, where $W_n > 0$. From the convex reformulation, it follows that: 
$
z \;\geq\; \log\!\left(\sum_{n\in [N]} W_n^{\sigma_n}\right)
$. 
Substituting the expression for $y_n$ into the right-hand side, we obtain: 
$
e^{\frac{\sigma_n}{\sigma_n - 1} y_n} 
= e^{\frac{\sigma_n}{\sigma_n - 1}(\sigma_n - 1)\log W_n}
= e^{\sigma_n \log W_n} 
= W_n^{\sigma_n}
$.\\
Consequently,
$
\sum_{n\in [N]} e^{\frac{\sigma_n}{\sigma_n - 1} y_n}
= \sum_{n\in [N]} W_n^{\sigma_n}
$.\\
Taking logarithms on both sides gives
$
\log\!\left(\sum_{n\in [N]} e^{\frac{\sigma_n}{\sigma_n - 1} y_n}\right)
= \log\!\left(\sum_{n\in [N]} W_n^{\sigma_n}\right)
\;\leq\; z
$.\\
Therefore, the inequality 
$
z \;\geq\; \log\!\left(\sum_{n\in [N]} e^{\frac{\sigma_n}{\sigma_n - 1} y_n}\right)
$ is valid and holds with equality at the optimal solution of the original nonlinear formulation.  

\smallskip
Moreover, note that the right-hand side of the inequality defines a convex function of $\by = (y_1,\dots,y_N)$, because it is the composition of the convex and nondecreasing function $\log(\cdot)$ with the \emph{log-sum-exp} form
$\sum_{n} \exp\!\big(\tfrac{\sigma_n}{\sigma_n - 1}y_n\big)$, which is convex in $\by$. Consequently, the constraint
\[
z \;\geq\; \log\!\left(\sum_{n\in [N]} e^{\frac{\sigma_n}{\sigma_n - 1} y_n}\right)
\]
defines a convex feasible set. It can therefore be safely incorporated into the reformulation to strengthen the relaxation without sacrificing tractability or convergence guarantees.
\end{proof}

\subsection{Proof of Proposition 6:}
\begin{proof}
For clarity, define the numerators and denominators of the original and approximate objectives as:
$
F(\bx,\by) = \frac{N(\bx,\by)}{D(\bx,\by)}
$ and $
\widehat F(\bx,\by) = \frac{\widehat N(\bx,\by)}{\widehat D(\bx,\by)},
$\\
where:
\begin{align*}
N(\bx,\by)
&:=
\sum_{n\in[N]}
W_n^{\sigma_n-1}
\left(
\beta V_{0n}
+
\sum_{i\in[m]}
\alpha_{in} x_i (\beta-y_i) t_{in}
\right),\\
\widehat N(\bx,\by)
&:=
\sum_{n\in[N]}
\widehat W_n^{\sigma_n-1}
\left(
\beta V_{0n}
+
\sum_{i\in[m]}
\alpha_{in} x_i (\beta-y_i) \widehat t_{in}
\right),\\
D(\bx,\by)
&:= \sum_{n\in[N]} W_n^{\sigma_n},
\qquad
\widehat D(\bx,\by) := \sum_{n\in[N]} \widehat W_n^{\sigma_n}.
\end{align*}
Here,
$
t_{in} = \exp\!\left(\frac{-y_i\eta_i+\kappa_i}{\sigma_n}\right),
\widehat t_{in} = \cPA(t_{in}),
$
and
$
W_n = V_{0n} + \sum_{i\in[m]} \alpha_{in} x_i t_{in},
\widehat W_n = V_{0n} + \sum_{i\in[m]} \alpha_{in} x_i \widehat t_{in}.
$
By construction of the PWLA, we have the uniform bound: 
$
|t_{in} - \widehat t_{in}| \le \epsilon,
\forall i,n.
$

Now, using a standard inequality for ratios with positive denominators, we have
\[
\left|
\frac{N}{D} - \frac{\widehat N}{\widehat D}
\right|
\le
\frac{|N-\widehat N|}{D}
+
\frac{\widehat N}{D\,\widehat D}\,|D-\widehat D|.
\tag{A.1}
\]
For each nest $n$,
\[
|W_n-\widehat W_n|
\le
\sum_{i\in[m]} \alpha_{in} x_i |t_{in}-\widehat t_{in}|
\le
\epsilon \sum_{i\in[m]} \alpha_{in}.
\tag{A.2}
\]
Moreover, since $\sigma_n\in(0,1]$, the function $w\mapsto w^{\sigma_n}$ is Lipschitz continuous on
$[\underline W,\infty)$, where $\underline W := \min_n W_n > 0$.
Thus, for some constant $\delta_n>0$,
$
|W_n^{\sigma_n}-\widehat W_n^{\sigma_n}|
\le
\delta_n |W_n-\widehat W_n|.
$
Combining with~(A.2) yields
\[
|D-\widehat D|
\le
\epsilon
\sum_{n\in[N]} \delta_n \sum_{i\in[m]} \alpha_{in}
=
\mathcal{O}(\epsilon).
\tag{A.3}
\]
{We now bound the gap between $|N-\widehat N|$. To this end,}
we decompose
\[
N-\widehat N
=
\sum_{n\in[N]}
\Bigl[
W_n^{\sigma_n-1} A_n
-
\widehat W_n^{\sigma_n-1} \widehat A_n
\Bigr],
\]
where: 
$
A_n := \beta V_{0n} + \sum_{i\in[m]} \alpha_{in} x_i (\beta-y_i) t_{in},
\quad
\widehat A_n := \beta V_{0n} + \sum_{i\in[m]} \alpha_{in} x_i (\beta-y_i) \widehat t_{in}.
$
Adding and subtracting $W_n^{\sigma_n-1}\widehat A_n$ and applying the triangle inequality,
\[
|N-\widehat N|
\le
\sum_{n\in[N]}
\Bigl(
|W_n^{\sigma_n-1}-\widehat W_n^{\sigma_n-1}|\,|\widehat A_n|
+
W_n^{\sigma_n-1}|A_n-\widehat A_n|
\Bigr).
\]
Using the PWLA error bound, the boundedness of $(\beta-y_i)$, and the Lipschitz continuity of
$w^{\sigma_n-1}$ on $[\underline W,\infty)$, we also obtain
\[
|N-\widehat N| = \mathcal{O}(\epsilon).
\tag{A.4}
\]
By assumption,
$
D \ge N \underline W^{\sigma_{\min}},
\widehat D \ge N \underline W^{\sigma_{\min}} - \mathcal{O}(\epsilon),
$ 
which are strictly positive for sufficiently small $\epsilon$. Substituting the bounds~(A.3)–(A.4) into~(A.1) yields:
$
|F(\bx,\by)-\widehat F(\bx,\by)|
\le
C\,\epsilon,
$
for some constant $C>0$ independent of $\epsilon$.
Hence, the PWLA-based formulation provides an $\mathcal{O}(\epsilon)$-accurate approximation
of the original JAP objective.
\end{proof}

\section{Assortment Optimization under the MGNL Model}\label{ec:mixed_gnl_model}

While the GNL model captures flexible correlation structures among products, it assumes a homogeneous customer population. In practice, consumers are heterogeneous: different customer segments may perceive product utilities differently or follow distinct correlation patterns across nests. To capture such heterogeneity, we consider the MGNL model, which represents a mixture of multiple GNL models with segment-specific parameters, each corresponding to a distinct customer segment. 

\subsection{Model Setup}

Suppose there are $T$ customer segments (or types), indexed by $t \in [T] := \{1,2,\ldots,T\}$. Segment $t$ arrives with probability (or weight) $\theta_t > 0$, where $\sum_{t=1}^T \theta_t = 1$. Each segment $t$ follows its own GNL structure:
\begin{itemize}
    \item The set of products is the same across all segments, indexed by $[m]$.
    \item Segment $t$ has utility parameters $V_{tin}$ and revenues $r_{ti}$ for each product $i \in [m]$.
    \item The nests for segment $t$ are $\{S^t_1,\ldots,S^t_N\}$, with membership weights $\alpha_{tin}$ satisfying $\sum_{n=1}^N \alpha_{tin} = 1$ for each product $i$.
    \item Each nest $n$ has a dissimilarity parameter $\sigma_{tn} \in (0,1]$ and an opt-out utility $V_{t0n} > 0$.
\end{itemize}
Similar to the case of homogeneous customers, we assume that the opt-out alternative has a positive weight in each nest. This assumption is reasonable, as it reflects the realistic scenario in which every customer type, when considering a particular nest, retains the possibility of not purchasing any item and instead choosing to opt out entirely.  For completeness, we provide in the Appendix~\ref{ec:zero_opt_out} a discussion on how our approach can be extended to the case where this assumption does not hold, i.e., when the opt-out alternative has zero weight in some nests.

Given an assortment $\bx \in \{0,1\}^m$, the total preference weight that customer segment $t$ assigns to nest $n$ is expressed as
$W_{tn}(\bx) = V_{t0n} + \sum_{i \in [m]} \alpha_{tin} V_{tin} x_i,$
where $V_{t0n}$ denotes the utility weight of the opt-out alternative in nest $n$.
Following the standard two-stage structure of the GNL model, for each segment $t$, a customer will first select a nest according to its aggregate attractiveness and then chooses a specific product within the selected nest. Accordingly, the probability that a customer from segment $t$ purchases product $i \in [m]$ under assortment $\bx$ is given by
\[
P_t(i \mid \bx) 
= \sum_{n=1}^N 
\frac{W_{tn}(\bx)^{\sigma_{tn}-1}\, \alpha_{tin} V_{tin} x_i}
     {\sum_{n'=1}^N W_{tn'}(\bx)^{\sigma_{tn'}}},
\]
where $\sigma_{tn} \in (0,1]$ is the dissimilarity parameter of nest $n$ for segment $t$.
The inner fraction corresponds to the conditional probability of choosing product $i$ given nest $n$, while the outer sum aggregates over all nests that product $i$ belongs to. The expected revenue from segment $t$ under assortment $\bx$ is then
\[
F_t(\bx)
= \frac{\sum_{i \in [m]} \sum_{n \in [N]} 
     W_{tn}(\bx)^{\sigma_{tn}-1}\, \alpha_{tin} V_{tin} r_{ti} x_i}
     {\sum_{n \in [N]} W_{tn}(\bx)^{\sigma_{tn}}},
\]
 Intuitively, the numerator represents the expected revenue weighted by both the attractiveness of each product and its contribution to each nest, while the denominator normalizes across all nests.

Assuming that customer segments arrive according to probabilities $\theta_t$, the overall expected revenue under assortment $\bx$ is
$F(\bx) = \sum_{t=1}^T \theta_t F_t(\bx).$ Thus, the assortment optimization problem under the MGNL model can be formulated as
\begin{align}
\max_{\bx \in \cX} \quad 
& F(\bx)
= \sum_{t=1}^T \theta_t 
\frac{\sum_{i \in [m]} \sum_{n \in [N]} 
      W_{tn}^{\sigma_{tn}-1} \, \alpha_{tin} V_{tin} r_{ti} x_i}
     {\sum_{n \in [N]} W_{tn}^{\sigma_{tn}}}\label{prob:assort-MGNL}
\tag{\sf Assort-MGNL} \\
\text{s.t.} \quad 
& W_{tn} = V_{t0n} + \sum_{i \in [m]} \alpha_{tin} V_{tin} x_i,
\quad \forall\, t \in [T],\, n \in [N]. \nonumber
\end{align}
This formulation generalizes the standard GNL assortment problem to account for heterogeneous customer segments, allowing each segment to differ in its sensitivity parameters, nest memberships, and opt-out preferences. As a result, the model can flexibly capture diverse behavioral patterns while maintaining analytical tractability through the MGNL framework.

Problem~\eqref{prob:assort-MGNL} for a single GNL model already leads to a highly nonlinear and non-concave objective function. The MGNL formulation introduces an additional layer of complexity: the objective becomes a weighted sum of multiple fractional terms, each corresponding to a distinct customer segment. This structure destroys the convenient bisection-type reformulations that are applicable to single-ratio objectives and significantly complicates both the theoretical analysis and algorithmic design. 
In fact, even under simplified settings, the problem remains computationally intractable. It is NP-hard to approximate within any constant factor, even when $T = 1$ (i.e., a single segment) \citep{Cuong2024}, or when $T > 1$ but each segment’s choice model reduces to the standard MNL model \citep{Desiretal2022}. To the best of our knowledge, no existing method can solve \eqref{prob:assort-MGNL} under the MGNL model to global optimality. Approximation approaches based on PWLA and MILP reformulations have been explored in the literature, but they scale poorly with the number of segments $T$ and the desired accuracy \citep{Cuong2024}. In the following sections, we propose new formulations and cutting-plane algorithms that can handle the MGNL setting effectively and provide exact optimality guarantees.

\subsection{Convexification via Log-Transformation}
We show below that our convexification approach developed for the single-customer segment in Section~\ref{sec:convexification_gnl} can be extended naturally to the MGNL model. 
This convex reformulation transforms the multi-segment fractional objective into a MICP that can be solved to global optimality using a single B\&C procedure. 

To start, recall that the MGNL assortment optimization problem can be stated as:
\begin{align}
\max_{\bx \in \cX} \quad &
F(\bx) =
\sum_{t\in[T]} \theta_t 
\frac{\sum_{i\in[m]}\sum_{n\in[N]} 
W_{tn}(\bx)^{\sigma_{tn}-1}\alpha_{tin}V_{tin}r_{ti}x_i}
{\sum_{n\in[N]} W_{tn}(\bx)^{\sigma_{tn}}}, \nonumber \\
\text{s.t.} \quad 
& W_{tn}(\bx) = V_{t0n} + \sum_{i\in[m]}\alpha_{tin}V_{tin}x_i,
\quad \forall t \in [T],\, n \in [N] \nonumber.
\end{align}
Our first step is to convert the maximization problem into a minimization problem to facilitate the analysis of convexity and submodularity properties.
Let $\beta > \max_{t,i} r_{ti}$ (the maximum revenue over all customer segments) and define $r'_{ti} = \beta - r_{ti} > 0$. 
By applying the maximization–minimization equivalence as in Section~\ref{sec:convexification_gnl}, the problem can be equivalently rewritten as:
\begin{equation}
\min_{\bx \in \cX} 
\sum_{t\in[T]} \theta_t 
\frac{\sum_{n\in[N]} W_{tn}(\bx)^{\sigma_{tn}-1}
\left(\beta V_{t0n} + \sum_{i\in[m]}\alpha_{tin}V_{tin}r'_{ti}x_i\right)}
{\sum_{n\in[N]} W_{tn}(\bx)^{\sigma_{tn}}} \nonumber.
\end{equation}

The objective contains multiple ratio terms, one per segment $t$, which destroys the convenient single-ratio structure exploited by the bisection scheme. 
To convexify these coupled fractional components simultaneously, we introduce logarithmic auxiliary variables for each segment and nest:
\begin{equation}\label{eq:MGNL-eq}
y_{tn} = \log\left(W_{tn}^{\sigma_{tn}-1}\right), 
\quad 
z_t = \log\left(\sum_{n\in[N]} W_{tn}^{\sigma_{tn}}\right).
\end{equation}

Substituting these definitions and applying the exponential transformation $t_{tn} \ge e^{y_{tn}-z_t}$, we obtain the following equivalent MINLP:
\begin{equation}\label{prob: Assort--MGNL--Convex}
\begin{aligned}
\min_{\bx,\,\bW,\,\by,\,\bz,\,\bt}
&\quad
\sum_{t\in[T]} \theta_t 
\Bigg[
\sum_{n\in[N]} 
\left(\beta V_{t0n}\, t_{tn}
+ \sum_{i\in[m]} \alpha_{tin}V_{tin}r'_{ti}\,x_i t_{tn}\right)
\Bigg] \\
\text{s.t. } 
&\quad W_{tn} = V_{t0n} + \sum_{i\in[m]} \alpha_{tin}V_{tin}x_i, \quad \forall t,n,\\
&\quad t_{tn} \ge e^{y_{tn}-z_t}, \quad \forall t,n,\\
&\quad y_{tn} \ge (\sigma_{tn}-1)\log(W_{tn}), \quad \forall t,n,\\
&\quad z_t \le \log\!\left(\sum_{n\in[N]} W_{tn}^{\sigma_{tn}}\right), \quad \forall t,\\
&\quad x \in \{0,1\}^m,\;
W,y,t \ge 0.
\end{aligned}
\tag{\sf MGNL-Convex}
\end{equation}

The equalities in~\eqref{eq:MGNL-eq} can be safely relaxed into inequality constraints because, at optimality, these inequalities always bind. 
Specifically, for each $t,n$, decreasing $y_{tn}$ or increasing $z_t$ would increase the objective value, while tightening the inequalities to equality never worsens the solution. 
Hence, the inequalities preserve the exactness of the formulation while simplifying the convex analysis.

Following the reformulation approach detailed in Section~\ref{sec:convexification_gnl}, each constraint in the aforementioned problem is convex with respect to the continuous variables. Specifically, the exponential epigraph constraint $t_{tn} \ge e^{y_{tn}-z_t}$ is convex because the exponential function is both convex and nondecreasing. The inequality $y_{tn} \ge (\sigma_{tn}-1)\log(W_{tn})$ defines a convex set because the logarithm is concave and the term $\sigma_{tn}-1 \le 0$ is non-positive. Furthermore, the constraint $z_t \le \log(\sum_n W_{tn}^{\sigma_{tn}})$ is convex as it represents the hypograph of a concave function, and the affine equalities defining $W_{tn}$ serve to preserve this convexity.

Thus, for any fixed binary vector $x$, the continuous relaxation is convex, and the overall model constitutes a MICP. 
The bilinear terms $x_i t_{tn}$ in the objective are exactly linearized using \emph{McCormick envelopes} without introducing any relaxation gap since $x_i$ is binary.

Besides the convexity properties, analogous to the single-segment GNL case, we can also exploit the (super/sub)modularity of the transformed components. 
Let us define
\[
Y_{tn}(S) = (\sigma_{tn}-1)\log(W_{tn}(S)), \quad
Z_t(S) = \log\!\Bigg(\sum_{n\in[N]} W_{tn}(S)^{\sigma_{tn}}\Bigg).
\]
Proposition~\ref{prop:submod-yn-z} naturally extends to the multi-segment setting. 
Specifically, we can show that $Y_{tn}(S)$ is \emph{supermodular} and monotonically decreasing in $S$, whereas $Z_t(S)$ is \emph{submodular} and monotonically increasing. 
These structural properties enable the construction of valid SCs in addition to the standard OA cuts derived from convexity, thereby forming a hybrid OA+SC cutting-plane framework for the \eqref{prob: Assort--MGNL--Convex}. 
In the cutting-plane procedure, at each iteration, violated OA cuts and SCs are iteratively added to tighten the relaxation. 
Because all continuous components are convex and the McCormick linearization is exact, the overall algorithm is guaranteed to converge finitely to the global optimum.

\subsection{Bilinear-Convex Reformulation}

The MICP derived via log-transformations in the previous section is practically tractable through the integration of OA and SC techniques. 
In this section, we introduce an alternative yet equally practical reformulation that preserves convexity and can be efficiently solved to optimality using modern solvers that support bilinear or MINLP. 
In particular, we show that the fractional structure in the MGNL objective can be equivalently represented through bilinear terms, resulting in a \emph{bilinear-convex} formulation.

Specifically, we consider the equivalent minimization problem under the MGNL model:
\begin{align}
\min_{\bx \in \cX } \quad 
&\left\{
\sum_{t\in [T]} \theta_t 
\frac{ \sum_{n\in [N]}{W_{tn}}^{\sigma_{tn}-1} 
\left(\beta V_{t0n} + \sum_{i\in [m]}\alpha_{tin}x_ir'_{ti}V_{tin}\right)}
{\sum_{n\in [N]} W_{tn}^{ \sigma_{tn}}}
\right\}
\label{prob:GNL-assort-2}\tag{\sf Assort-2}\\
\text{s.t.} \quad 
&W_{tn} = V_{t0n} + \sum_{i\in  [m]}\alpha_{tin}x_iV_{tin}, \quad \forall\, t \in [T], n \in [N]. \nonumber
\end{align}
Here, $\beta$ is chosen such that $\beta > \max_{t,i} r_{ti}$ is an upper bound on the product revenues across all customer segments, introduced to ensure nonnegativity in the reformulated objective. 
The transformed revenue parameter $r'_{ti} = \beta - r_{ti}$ facilitates the conversion of the maximization into a minimization problem while preserving optimality. 
The decision variables $x_i$ represent product inclusion decisions, and $W_{tn}$ denotes the inclusive value (total preference weight) of nest $n$ for customer segment $t$ under the assortment $\bx$. 

The fractional structure of~\eqref{prob:GNL-assort-2} couples nonlinear numerator and denominator terms across all nests within each customer segment. 
To simplify this dependency, we write the problem as
\begin{align}
\min_{\bx \in \cX } \quad 
&\left\{
\sum_{t\in [T]} \theta_t \delta_t
\right\}
\label{prob:GNL-assort-2}\tag{\sf Assort-2}\\
\text{s.t.} \quad& \delta_t \geq \frac{ \sum_{n\in [N]}{W_{tn}}^{\sigma_{tn}-1} 
\left(\beta V_{t0n} + \sum_{i\in [m]}\alpha_{tin}x_ir'_{ti}V_{tin}\right)}
{\sum_{n\in [N]} W_{tn}^{ \sigma_{tn}}}\nonumber \\
&W_{tn} = V_{t0n} + \sum_{i\in  [m]}\alpha_{tin}x_iV_{tin}, \quad \forall\, t \in [T], n \in [N]. \nonumber
\end{align}
By denoting $H_{tn}(\bx) = W_{tn}(\bx)^{\sigma_{tn}-1}$  and 
$K_{tn}(\bx) = W_{tn}(\bx)^{\sigma_{tn}}$ and introducing auxiliary variables $(h_{tn}, k_{tn})$ to represent these nonlinear functions,  we can rewrite~\eqref{prob:GNL-assort-2} as the following mixed-integer nonlinear problem:
\begin{align}
\min_{\bx\in \cX} \quad 
&\sum_{t \in [T]}\theta_t \delta_t
\label{prob:M-Bilinear}\tag{\sf MGNL-Bi}\\
\text{s.t.} \quad 
& \sum_{n \in [N]} h_{tn}\left(\beta V_{t0n} + \sum_{i \in [m]} \alpha_{tin} x_i r'_{ti} V_{tin}\right) 
= \delta_t \left(\sum_{n \in [N]} k_{tn}\right), 
\quad \forall\, t \in [T], \nonumber\\
& W_{tn} = V_{t0n} + \sum_{i\in  [m]}\alpha_{tin}x_iV_{tin}, 
\quad \forall\, t \in [T], n \in [N], \nonumber\\
& k_{tn} = W_{tn} h_{tn}, 
\quad \forall\, t \in [T], n \in [N], \label{ctr:bi_h_k}\\
& h_{tn} \geq  H_{tn}(\bx),
\quad \forall\, t \in [T], n \in [N], \label{ctr:htn}\\
& k_{tn} \leq K_{tn}(\bx), 
\quad \forall\, t \in [T], n \in [N], \label{ctr:ktn}\\
& \bx \in \{0,1\}^{[m]},\;
\bW , \bh , \bk \in \mathbb{R}_+^{[T]\times[N]},\;
\delta \in \mathbb{R}_+^{[T]}.\nonumber
\end{align}
We note that the equalities $h_{tn} = H_{tn}(\bx)$ and $k_{tn} = K_{tn}(\bx)$ can be safely relaxed into nonlinear inequality constraints because, at optimality, these inequalities always bind - any feasible solution $(h_{tn}, k_{tn})$ satisfying $h_{tn} \ge H_{tn}(\bx)$ and $k_{tn} \le K_{tn}(\bx)$ can be tightened to equality without loss of optimality. 
We also introduce the bilinear constraints \eqref{ctr:bi_h_k} to explicitly capture the functional relationship between the two nonlinear components $H_{tn}(\bx)$ and $K_{tn}(\bx)$. 
Although this bilinear constraint is theoretically redundant in the full MINLP model—since both sides coincide at the true solution—it can  strengthen the relaxation in practice when the nonlinear equalities are approximated by OA cuts and SCs.  The inclusion of this bilinear link improves the tightness of the continuous relaxation and accelerates convergence within B\&C frameworks.

It is worth emphasizing that, apart from the bilinear terms, all other nonlinear constraints in~\eqref{prob:M-Bilinear} are either convex or admit submodular representations.  
Specifically, for each $(t,n)$ pair, and analogous to Proposition~\ref{pro:convex_H_K}, the function $H_{tn}(\bx)$ can be shown to be convex in $\bx$, while $K_{tn}(\bx)$ is concave.  
Moreover, when these functions are defined over a product set $S \subseteq [m]$, the corresponding set functions 
\[
H_{tn}(S) = \left(V_{t0n} + \sum_{i\in S}\alpha_{tin}V_{tin}\right)^{\sigma_{tn}-1}, 
\qquad
K_{tn}(S) = \left(V_{t0n} + \sum_{i\in S}\alpha_{tin}V_{tin}\right)^{\sigma_{tn}},
\]
inherit supermodular and submodular properties, respectively (see Proposition~\ref{prop:submod-yn-z}).  
These curvature and set-function properties enable the use of both OA cuts and SCs to construct tight linear approximations of the nonlinear constraints~\eqref{ctr:htn} and~\eqref{ctr:ktn}.  
Such cuts iteratively refine the relaxation of~\eqref{prob:M-Bilinear}, ensuring convergence to the global optimum while maintaining computational tractability.

In summary, the overall problem~\eqref{prob:M-Bilinear} is \emph{bilinear-convex}: all nonconvexities arise solely from bilinear terms, which admit tight convex relaxations.  
Compared with the log-transformed MICP formulation, this bilinear-convex reformulation is more compact and can be directly handled by modern solvers that support bilinear programming.  Specifically, one can design a B\&C procedure in which, at each iteration, linear cuts are generated to approximate the nonlinear constraints~\eqref{ctr:htn} and~\eqref{ctr:ktn}.  
The resulting master problem, which is a bilinear program, is then solved to obtain a new candidate solution.  
If the candidate solution satisfies the original nonlinear constraints (within a prescribed tolerance~$\epsilon$), the algorithm terminates; otherwise, new cuts derived from the current solution are added to the master problem, and the process repeats.  
Thanks to the convexity and submodularity properties discussed above, all generated cuts are valid and progressively tighter approximations of the true feasible region.  
Moreover, if the master problem is solved exactly to optimality at each iteration, the procedure is guaranteed to converge to a globally optimal solution in a finite number of iterations.

It is important to note that the OA cuts and SCs generated in our algorithm play distinct geometric roles. 
Each OA corresponds to a supporting hyperplane of the convex epigraph of the nonlinear constraints and therefore provides a valid but generally non-facet-defining approximation of the convex envelope.  
In contrast, the SCs, which exploit the submodularity of the functions $H_{tn}(S)$ and $K_{tn}(S)$, correspond to facet-defining inequalities of the associated submodular polyhedron \citep{Edmonds2003}.  
Hence, while OA cuts ensure global convex validity, SCs enrich the relaxation with facet-defining inequalities that capture the combinatorial structure of the assortment decisions.  
Together, these two families of cuts progressively reconstruct the facet structure of the convex hull of feasible solutions to the MGNL assortment problem.

\section{JAP with Discrete Prices}\label{ec:JAP_DP}
The case with discrete prices represents a common and practically relevant setting widely adopted in retail and revenue management applications.  
In practice, firms often consider only a limited number of price points for each product—typically those following standard psychological or marketing conventions such as prices ending in \emph{.99}~\$—rather than treating prices as fully continuous decision variables.  
This discretization not only reflects real-world managerial practice but also offers significant computational advantages.  
From a modeling perspective, the discrete-price setting allows the JAP problem to be conveniently reformulated as a standard assortment optimization problem defined over an \emph{extended product set}.  
In this reformulation, each product--price combination is treated as a distinct ``virtual'' item characterized by its own utility, revenue, and nest membership. Thus, this transformation preserves the original structure of the assortment optimization problem, allowing the techniques developed in the previous sections to be directly applied.  
In the following, we briefly present the formulation of the JAP problem with discrete prices for the single customer segment case, while noting that the extension to multiple customer segments can be carried out in a straightforward manner.  

We now present the mathematical formulation of the JAP problem with discrete prices for a single customer segment.  
The extension to multiple customer segments can be carried out straightforwardly by introducing segment-specific parameters and aggregating expected revenues as shown in the previous sections.  
Let each product $i \in [m]$ be offered at a finite set of admissible discrete price levels denoted by 
$\mathcal{P}_i = \{p_{i1}, p_{i2}, \ldots, p_{iL_i}\},$
where $p_{il}$ represents the $l$-th feasible price point for product $i$ (e.g., price levels ending in \emph{.99}~\$). 
Each combination of product and price $(i,l)$ is treated as a distinct ``virtual'' item characterized by its own utility, price, and revenue.  We define the deterministic utility component of product $i$ offered at price $p_{il}$ as
${v}_{il} = \kappa_i - \eta_i p_{il},$
where $\kappa_i$ represents the intrinsic attractiveness of product $i$, and $\eta_i > 0$ denotes the price sensitivity of the customer segment.  The corresponding revenue for this virtual item is $r_{il} = p_{il}$.  
Let $x_{il} \in \{0,1\}$ be the binary decision variable that equals~1 if product~$i$ is offered at price~$p_{il}$ and~0 otherwise.  
To ensure that each product can be offered at most one price level, we impose
\[
\sum_{l \in \mathcal{P}_i} x_{il} \le 1, \quad \forall i \in [m].
\]
Now, under the GNL model, the inclusive value of nest $n \in [N]$ is given by
\[
W_{n}(\bx) = V_{0n} + \sum_{i\in[m]} \sum_{l\in\mathcal{P}_i} \alpha_{in} V_{iln}  x_{il},
\]
where $V_{iln} = e^{v_{il}/\sigma_{n}}$.
The corresponding choice probability of the product--price pair $(i,l)$ is then given by
\[
P_{il}(x) = 
\sum_{n\in[N]}
\frac{\alpha_{in} V_{iln} W_{n}(\bx)^{\sigma_{n}-1}}
{\sum_{n'\in[N]} W_{n'}(\bx)^{\sigma_{n'}}}.
\]
The expected revenue under assortment and pricing decisions $\bx$ can thus be written as
\[
F(\bx) = 
\sum_{i\in[m]}\sum_{l\in\mathcal{P}_i} r_{il} P_{il}(\bx)
=
\frac{
\sum_{i\in[m]} \sum_{l\in\mathcal{P}_i} \sum_{n\in[N]} 
\alpha_{in} V_{iln} r_{il} W_n(x)^{\sigma_n - 1} x_{il}
}{
\sum_{n\in[N]} W_n(\bx)^{\sigma_n}
}.
\]
Hence, the JAP problem with discrete prices for a single customer segment can be formulated as follows:
\begin{align}
\max_{\bx \in \mathcal{X}} \quad &
F(\bx)
= 
\frac{
\sum_{i\in[m]} \sum_{l\in\mathcal{P}_i} \sum_{n\in[N]} 
\alpha_{in} V_{iln} r_{il} W_n(\bx)^{\sigma_n - 1} x_{il}
}{
\sum_{n\in[N]} W_n(\bx)^{\sigma_n}
} \nonumber\\
\text{s.t.} \quad &
W_{n}(\bx) = V_{0n} + \sum_{i\in[m]} \sum_{l\in\mathcal{P}_i} \alpha_{in} V_{iln} x_{il}, \quad \forall n\in[N], \nonumber\\
& \sum_{l\in\mathcal{P}_i} x_{il} \le 1, \quad \forall i \in [m], \nonumber\\
& x_{il} \in \{0,1\}, \quad \forall i\in[m],\, l\in \mathcal{P}_i. \nonumber
\end{align}
This discrete-price formulation preserves the structural characteristics of the standard assortment problem, with each product--price pair $(i,l)$ treated as a separate item in an extended item set
$
\tilde{\mathcal{M}} = \{(i,l): i\in[m],\, l\in\mathcal{P}_i\}.
$
Consequently, the convexification and bilinear-convex reformulation techniques developed in Sections~\ref{sec:convexification_gnl} and ~\ref{ec:mixed_gnl_model} can be directly applied to this discrete-price setting after appropriate indexing and parameter substitution.  

\section{Construction of Optimal Breakpoints for PWLA in JAP with Continuous Prices}
\label{ec:PWLA}
To achieve a desired approximation accuracy $\epsilon$ while minimizing the number of segments $H$, 
we determine the optimal breakpoint vector $\bq_{in}$ such that
\[
\max_{w_{in} \in [\underline{w}_{in}, \overline{w}_{in}]}
\big|
\cPA(\exp(w_{in}) \,|\, \underline{w}_{in}, \overline{w}_{in}, \bq_{in})
- e^{w_{in}}
\big| \le \epsilon.
\]
To this end, we can perform the following search procedure. Starting from each breakpoint $q_h$, the next breakpoint $q_{h+1}$ is chosen to 
maximize the interval width while ensuring that the local approximation error 
does not exceed $\epsilon$.  
The approximation gap within $[q_h, q_{h+1}]$ is given by:
\[
\phi(w_{in}) = e^{q_h} 
+ (w_{in} - q_h)
  \frac{e^{q_{h+1}} - e^{q_h}}{q_{h+1} - q_h}
- e^{w_{in}},
\]
and the worst-case deviation is obtained from
\[
\max_{w_{in} \in [q_h, q_{h+1}]} \phi(w_{in}).
\]
The function $\phi(w_{in})$ is concave in $w_{in}$; therefore, its maximum is attained 
by setting the derivative of $\phi(w_{in})$ with respect to $w_{in}$ to zero, yielding:
\[
w^* = \text{argmax}_{w_{in} \in [q_h, q_{h+1}]} \phi(w_{in})
= \ln\!\left(
\frac{e^{q_{h+1}} - e^{q_h}}{q_{h+1} - q_h}
\right).
\]
and the corresponding maximum gap within the interval $[q_h, q_{h+1}]$ can be computed as:
\[
\theta(q_{h+1}) 
= e^{q_h} 
+ \left(
  \ln\!\left(
  \frac{e^{q_{h+1}} - e^{q_h}}{q_{h+1} - q_h}
  \right) - q_h - 1
  \right)
  \frac{e^{q_{h+1}} - e^{q_h}}{q_{h+1} - q_h}.
\]
The next breakpoint $q_{h+1}$ is thus determined as:
\[
q_{h+1} = 
\arg\max_{t > q_h}
\left\{
t ~\big|~ \theta(t) \le \epsilon
\right\}.
\]
We can now observe that the function 
$\theta(t) = \max_{w_{in} \in [q_h,\, t]} \phi(w_{in})$ 
is monotonically increasing in $t$. 
Therefore, the next breakpoint $q_{h+1}$ can be efficiently determined 
using a binary search procedure, as described below:

\begin{mdframed}[linewidth=1pt, roundcorner=5pt, backgroundcolor=gray!10]
\textbf{Binary Search Procedure:}
\begin{itemize}
    \item \textbf{Step 1:} Initialize $l = q_h$, $u = \overline{w}_{in}$, and tolerance $\tau > 0$.
    \item \textbf{Step 2:} If $\theta(u) \le \epsilon$, set $q_{h+1} = \overline{w}_{in}$ and terminate.
    \item \textbf{Step 3:} Compute $w = (u + l)/2$. 
          If $\theta(w) \le \epsilon$, set $l = w$; otherwise, set $u = w$.
    \item \textbf{Step 4:} If $|u - l| \le \tau$, set $q_{h+1} = l$ and terminate; otherwise, repeat Step~3.
\end{itemize}
\end{mdframed}

This algorithm converges exponentially to the optimal breakpoint 
satisfying $\theta(q_{h+1}) \le \epsilon$ within tolerance $\tau$.
The complete PWLA construction for $\exp(w_{in})$ proceeds iteratively as follows:

\begin{mdframed}[linewidth=1pt, roundcorner=5pt, backgroundcolor=gray!10]
\textbf{Constructing Breakpoints for $\cPA(\exp(w_{in}) \,|\, \underline{w}_{in}, \overline{w}_{in}, \bq_{in})$:}
\begin{itemize}
    \item Initialize $q_1 = \underline{w}_{in}$.
    \item Use the binary search algorithm to compute $q_{h+1}$ from $q_h$.
    \item Terminate when $q_{h+1} = \overline{w}_{in}$.
\end{itemize}
\end{mdframed}

Since the maximum approximation error is optimized in each sub-interval, 
this procedure yields the \emph{minimum number of breakpoints} $H$ 
required to achieve the target accuracy $\epsilon$.  
In practice, selecting a sufficiently small tolerance $\tau$ 
ensures near-optimal precision, and due to the exponential convergence 
of the binary search, only a few iterations are typically required 
even for very tight error bounds.
\begin{figure}[htb]
    \centering
    \includegraphics[width=0.55\textwidth]{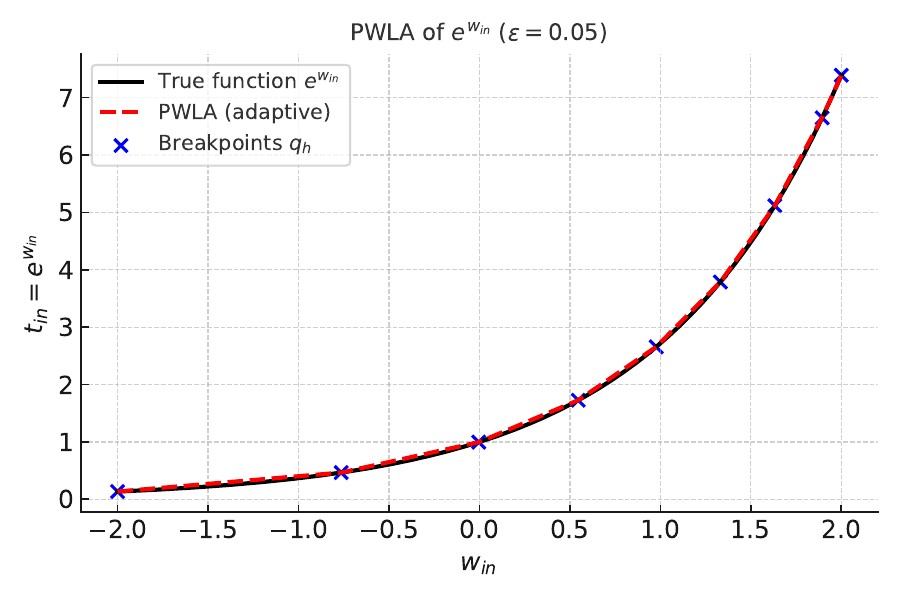}
    \caption{PWLA of the exponential function 
    $e^{w_{in}}$ for an error tolerance of $\epsilon = 0.05$.}
    \label{fig:pwla_exp_adaptive}
\end{figure}
In the following, we establish an explicit upper bound on the number of breakpoints 
generated by the adaptive PWLA construction procedure.

\begin{proposition}\label{prop:breakpoint}
Let $\epsilon > 0$ be the desired approximation accuracy, the number of breakpoints $H$ generated by the adaptive 
PWLA procedure described above satisfies
\[
    H 
    \;\le\;
    \Biggl\lceil
        \frac{
        e^{ \frac{-L_i\eta_i + \kappa_i}{2\sigma_n}}\, \eta_i (U_i - L_i)
        }{
        2\sqrt{2\epsilon}\, \sigma_n
        }
    \Biggr\rceil.
\]
Consequently, $H = \mathcal{O}\!\left( \frac{
        e^{ \frac{-L_i\eta_i + \kappa_i}{2\sigma_n}}\, \eta_i (U_i - L_i)
        }{
        2\sqrt{2\epsilon}\, \sigma_n
        }\right)$,
which shows that the number of breakpoints grows at most at the rate 
$1/\sqrt{\epsilon}$ and depends linearly on the effective scaled price range 
$\eta_i(U_i - L_i)/\sigma_n$.
\end{proposition}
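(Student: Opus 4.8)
The plan is to reduce the claim to a uniform lower bound on the width of each segment produced by the greedy procedure, and then to count segments by dividing the total domain length by this minimal width. The two analytic facts I rely on are a classical interpolation error estimate and the monotonicity of $\theta(t)$ already established in the construction.

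First I would recall the standard error estimate for linear interpolation of a $C^2$ function. On any subinterval $[q_h,q_{h+1}]$ of width $\Delta_h = q_{h+1}-q_h$, the secant interpolant $\ell$ of $f(w)=e^w$ satisfies $e^{w}-\ell(w)=\tfrac12 f''(\xi)(w-q_h)(w-q_{h+1})$ for some $\xi\in(q_h,q_{h+1})$. Since $|(w-q_h)(w-q_{h+1})|\le \Delta_h^2/4$ and $f''(w)=e^w$ is increasing, the worst-case gap $\theta(q_{h+1})$ obeys $\theta(q_{h+1})\le \tfrac{\Delta_h^2}{8}e^{q_{h+1}}\le \tfrac{\Delta_h^2}{8}e^{\overline{w}_{in}}$. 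This reuses precisely the concavity of $\phi$ exploited in the construction, merely replacing the explicit expression for $\theta$ by this cleaner upper bound.

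Second, I would set $\Delta_{\min}:=\sqrt{8\epsilon}\,e^{-\overline{w}_{in}/2}$. The bound above shows that any segment of width at most $\Delta_{\min}$ with right endpoint in $[\underline w_{in},\overline w_{in}]$ has error at most $\tfrac{\Delta_{\min}^2}{8}e^{\overline{w}_{in}}=\epsilon$, hence is feasible. Because the excerpt establishes that $\theta(t)$ is monotonically increasing in the right endpoint $t$, the feasible set $\{t:\theta(t)\le\epsilon\}$ starting from $q_h$ is an interval, so the greedy rule that selects the largest feasible $t$ always advances by at least $\Delta_{\min}$ whenever $q_h+\Delta_{\min}\le \overline w_{in}$; thus every non-terminal step satisfies $\Delta_h\ge\Delta_{\min}$. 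Writing $D:=\overline w_{in}-\underline w_{in}$ and summing the first $H-1$ steps gives $q_H\ge \underline w_{in}+(H-1)\Delta_{\min}$, and since $q_H<\overline w_{in}$ we obtain $(H-1)\Delta_{\min}<D$, i.e. $H\le\lceil D/\Delta_{\min}\rceil$. It then remains to substitute $\underline w_{in}=(-U_i\eta_i+\kappa_i)/\sigma_n$ and $\overline w_{in}=(-L_i\eta_i+\kappa_i)/\sigma_n$, which give $D=\eta_i(U_i-L_i)/\sigma_n$ and $e^{\overline w_{in}/2}=e^{(-L_i\eta_i+\kappa_i)/(2\sigma_n)}$, together with $\sqrt{8\epsilon}=2\sqrt{2\epsilon}$, to recover the stated bound; the $\mathcal O(1/\sqrt\epsilon)$ scaling is then immediate.

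The main obstacle I anticipate is making the greedy lower-bound argument fully rigorous rather than heuristic: one must confirm that feasibility of a $\Delta_{\min}$-step holds uniformly across the domain, which is exactly why the conservative constant $e^{\overline w_{in}}$ is used in place of the local $e^{q_{h+1}}$, and that the monotonicity of $\theta$ genuinely forces the greedy step to dominate $\Delta_{\min}$. A minor secondary point is the terminal segment, whose width may fall below $\Delta_{\min}$; this is absorbed cleanly by the ceiling in the final count.
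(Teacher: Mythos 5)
Your proof is correct and rests on the same analytic ingredient as the paper's --- the second-order interpolation error estimate $|e^{w}-\ell(w)|\le \tfrac{1}{8}e^{\xi}\Delta_h^2$ on each segment --- but the counting step is genuinely different and, in fact, cleaner. The paper derives from the feasibility condition an \emph{upper} bound $\Delta_h \le 2\sqrt{2\epsilon e^{-q_h}}$ on each step, writes $H=\sum_h \Delta_h/\Delta_h \le \sum_h \Delta_h/(2\sqrt{2\epsilon e^{-q_h}})$ (an inequality that actually points the wrong way given what was derived), and then replaces the sum by the Riemann integral $\int_L^U e^{w/2}\,dw/(2\sqrt{2\epsilon})$ before relaxing $e^{U/2}-e^{L/2}\le e^{U/2}(U-L)$. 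You instead exploit the greedy maximality of the rule together with the monotonicity of $\theta$ to establish a uniform \emph{lower} bound $\Delta_h\ge\Delta_{\min}=2\sqrt{2\epsilon}\,e^{-\overline w_{in}/2}$ on every non-terminal step, and then simply divide the domain length by $\Delta_{\min}$. This is the logically correct direction for bounding the number of segments from above, it avoids the non-rigorous sum-to-integral approximation entirely, and it lands on exactly the stated constant after substituting $\overline w_{in}=(-L_i\eta_i+\kappa_i)/\sigma_n$ and $D=\eta_i(U_i-L_i)/\sigma_n$. The price you pay is using the worst-case curvature $e^{\overline w_{in}}$ uniformly over the whole domain rather than the local $e^{q_h}$, but since the paper's final bound makes the same relaxation anyway, nothing is lost. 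Your handling of the terminal segment via the ceiling is also fine.
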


\begin{proof}
Consider the exponential function $f(w) = e^{w}$ 
and its secant approximation $\widehat{f}(w)$ constructed between two consecutive breakpoints 
$q_h$ and $q_{h+1}$.  
The maximum error between $f(w)$ and $\widehat{f}(w)$ over the interval 
$[q_h, q_{h+1}]$ is given by: 
$
    E_h = 
    \max_{w \in [q_h, q_{h+1}]} 
    \big| f(w) - \widehat{f}(w) \big|
$.

By Taylor’s theorem, for some $\xi \in (q_h, q_{h+1})$, 
the deviation between a convex twice-differentiable function and its secant line 
satisfies
\[
    f(w) - \widehat{f}(w) 
    = \tfrac{f''(\xi)}{8} (q_{h+1} - q_h)^2.
\]
Since $f''(w) = e^{w}$ for the exponential function, 
we have
\[
    E_h = \frac{e^{\xi}}{8}(q_{h+1} - q_h)^2,
    \qquad \text{for some } \xi \in [q_h, q_{h+1}].
\]
To ensure that the local approximation error does not exceed $\epsilon$, 
we require $E_h \le \epsilon$, which gives
\[
    q_{h+1} - q_h 
    \;\le\;
    2\sqrt{2\epsilon\, e^{-\xi}}
    \;\le\;
    2\sqrt{2\epsilon\, e^{-q_h}},
\]
where the last inequality uses $\xi \ge q_h$.  

We can now bound the total number of subintervals needed to cover $[L, U]$, where $L = \underline{w}_{in}$ and $U = \overline{w}_{in}$.
To obtain an upper bound on the total number of breakpoints, 
we sum the lengths of all subintervals generated by the adaptive procedure.
Since the union of all subintervals covers the entire range $[L, U]$, we have: 
$
    \sum_{h=1}^{H} (q_{h+1} - q_h) = U - L.
$

From the local error condition established above,
each interval length satisfies 
$(q_{h+1} - q_h) \le 2\sqrt{2\epsilon\, e^{-q_h}}$.
Taking the reciprocal and summing over all $h$ gives
\[
    H = \sum_{h=1}^{H} 1 
    \;=\;
    \sum_{h=1}^{H} 
    \frac{(q_{h+1} - q_h)}{(q_{h+1} - q_h)}
    \;\le\;
    \sum_{h=1}^{H}
    \frac{(q_{h+1} - q_h)}{2\sqrt{2\epsilon\, e^{-q_h}}}.
\]
Because the breakpoints are dense and ordered, 
the sum on the right-hand side can be approximated by the Riemann integral
\[
    H 
    \;\lesssim\;
    \int_{L}^{U} 
    \frac{dw}{2\sqrt{2\epsilon\, e^{-w}}}
    =
    \frac{1}{2\sqrt{2\epsilon}} 
    \int_{L}^{U} e^{w/2}\, dw.
\]
Evaluating the integral yields
\[
    H 
    \;\le\;
    \frac{e^{U/2} - e^{L/2}}{2\sqrt{2\epsilon}}.
\]
Finally, noting that $e^{U/2} - e^{L/2} \le e^{U/2}(U-L)$ for bounded $U$, 
we obtain a simpler and conservative bound:
\[
    H 
    \;\le\;
    \frac{e^{U/2}(U-L)}{2\sqrt{2\epsilon}}.
\]
Since $e^{U/2} - e^{L/2} \le e^{U/2}(U-L)$ for bounded $U$, 
we obtain the simplified bound
\[
    H 
    \;\le\;
    \frac{e^{U/2}(U-L)}{2\sqrt{2\epsilon}}.
\]
Substituting 
$L = \underline{w}_{in} = \tfrac{-U_i\eta_i + \kappa_i}{\sigma_n}$ 
and 
$U = \overline{w}_{in} = \tfrac{-L_i\eta_i + \kappa_i}{\sigma_n}$ 
into the bound above yields the desired expression.  
Finally, taking the ceiling ensures that the number of breakpoints $H$ is an integer, 
which completes the proof.
\end{proof}

\section{Reformulation for Models with Zero Opt-Out Values}\label{ec:zero_opt_out}
In the main paper, we  assume that $V_{0n} > 0$ for all $n \in [N]$, which means that an opt-out option is always available in each nest. This assumption aligns with the standard two-stage choice process in the GNL model: in the first stage, the customer selects a nest, and in the second stage, the customer either chooses one of the products in that nest or decides not to purchase anything. For example, in an online retail platform, after browsing a category such as ``electronics,'' a customer may still leave without purchasing any item, which is naturally captured by an opt-out option within each nest.

In contrast, some studies in the literature adopt a different setup in which no opt-out option is available within a nest \citep{Zhang2024}. In this case, once a customer selects a nest, they are required to purchase one of the offered products. Under our formulation, this corresponds to allowing $V_{0n} = 0$ for some (or even all) nests. When $V_{0n} = 0$, numerical issues may arise because $W_n$ can take the value zero, which makes $W_n^{\sigma_n - 1}$ undefined whenever $\sigma_n < 1$. This occurs precisely when a nest has no opt-out option and none of its products are included in the assortment, so that the total preference weight $W_n$ collapses to zero.

Under our formulation in  \eqref{prob:GNL-assort}, two remedies are possible. First, for any nest $n$ with $V_{0n} = 0$, one may introduce an artificial opt-out option with a very small utility value, which resolves the issue without substantially altering the objective function. 

Alternatively, the optimization problem can be equivalently reformulated 
to ensure that the inclusive values $W_n$ remain strictly positive. 
In the following, we establish a proposition that provides an alternative 
formulation of the assortment optimization problem, which eliminates the 
degeneracy caused by zero-valued $W_n$.
\begin{proposition}
Problem~\eqref{prob:GNL-assort} can be equivalently reformulated as the following:
\begin{align}
    \max_{\bx \in \cX} \quad 
    &F(\bx) =\frac{\sum_{n\in[N]} W_n^{\sigma_n-1}\Big(\sum_{i\in[m]}\alpha_{in}x_i r_i V_{in}\Big)}
{\sum_{n\in[N]} U_n^{\sigma_n}}
    \label{prob:assort-v0}\tag{\sf Assort-2} \\
    \text{subject to} \quad 
    & W_n \ge V_{0n} + \sum_{i \in [m]} \alpha_{in} x_i V_{in}, \quad \forall n \in [N], \nonumber \\
    & W_n \ge \delta_n, \quad \forall n \in [N] \text{ such that } V_{0n} = 0, \nonumber\\
    & U_n \ge V_{0n} + \sum_{i \in [m]} \alpha_{in} x_i V_{in}, \quad \forall n \in [N], \nonumber 
\end{align}
where each $\delta_n$ is chosen such that $\delta_n < \min_{i:\,\alpha_{in}>0} \alpha_{in} V_{in}$. This guarantees $W_n>0$ for all $n \in [N]$.
\end{proposition}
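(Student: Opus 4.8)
The plan is to prove the equivalence by a \emph{per-assortment} argument: I fix an arbitrary feasible $\bx \in \cX$ and show that optimizing the auxiliary variables $\{W_n\}_n$ and $\{U_n\}_n$ in \eqref{prob:assort-v0} recovers exactly the original objective value $F(\bx)$ of \eqref{prob:GNL-assort}, with any degenerate term interpreted in the natural limiting sense. Since the feasible set in $\bx$ is identical in both problems, this pointwise identity immediately yields equality of optimal assortments and optimal values. The central structural insight I would exploit is that the reformulation \emph{decouples} the single inclusive value into two independent copies: $W_n$ enters only the numerator through $W_n^{\sigma_n-1}$, whereas $U_n$ enters only the denominator through $U_n^{\sigma_n}$. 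Because these variables are no longer coupled, the inner maximization factorizes as $\max_{W,U}\tfrac{N(W)}{D(U)} = \tfrac{\max_W N(W)}{\min_U D(U)}$, and each copy can be optimized in its own monotone direction.

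First I would settle which bounds are active at the inner optimum. For fixed $\bx$, write $A_n(\bx)=\sum_i \alpha_{in}x_i V_{in}$ and $R_n(\bx)=\sum_i \alpha_{in}x_i r_i V_{in}\ge 0$. Since $\sigma_n\in(0,1]$, the map $U_n\mapsto U_n^{\sigma_n}$ is strictly increasing, so minimizing the denominator drives each $U_n$ to its lower bound, giving $U_n^\star=V_{0n}+A_n(\bx)$, precisely the original inclusive value. Likewise $W_n\mapsto W_n^{\sigma_n-1}$ is non-increasing (strictly decreasing for $\sigma_n<1$, constant for $\sigma_n=1$), so maximizing the numerator drives each $W_n$ to the largest of its lower bounds, $W_n^\star=\max\{V_{0n}+A_n(\bx),\,\delta_n\mathbf{1}[V_{0n}=0]\}$.

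Next I would carry out the case analysis on $V_{0n}$ to match the two objectives term by term. If $V_{0n}>0$, then $V_{0n}+A_n(\bx)>0$ is the binding bound and $W_n^\star=U_n^\star=V_{0n}+A_n(\bx)$, reproducing the original term. If $V_{0n}=0$ but at least one product with $\alpha_{in}>0$ is offered, then $A_n(\bx)\ge \min_{i:\alpha_{in}>0}\alpha_{in}V_{in}>\delta_n$ by the choice of $\delta_n$; hence the $\delta_n$ floor is slack, $W_n^\star=U_n^\star=A_n(\bx)$, and again the original term is reproduced. This is where the precise threshold $\delta_n<\min_{i:\alpha_{in}>0}\alpha_{in}V_{in}$ is essential: it guarantees the artificial floor never inflates $W_n$ above the genuine inclusive value whenever the nest is nonempty.

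The remaining and most delicate case is the \emph{degenerate} nest with $V_{0n}=0$ and $A_n(\bx)=0$ (no offered product lies in nest $n$). Here $W_n^\star=\delta_n>0$ while $U_n^\star=0$. The numerator contribution is $\delta_n^{\sigma_n-1}R_n(\bx)=\delta_n^{\sigma_n-1}\cdot 0=0$, now \emph{well defined} because $\delta_n>0$, and the denominator contribution is $0^{\sigma_n}=0$. In the original formulation the same nest carries $W_n=0$, producing a denominator term $0^{\sigma_n}=0$ and an indeterminate numerator term $0^{\sigma_n-1}\cdot 0$; the reformulation assigns it the value $0$, the only consistent limiting interpretation. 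I would then conclude that for every $\bx$ the reformulated objective at $(W^\star,U^\star)$ equals $F(\bx)$, while $W_n^\star>0$ holds for all $n$ (either $V_{0n}>0$ or $W_n^\star\ge\delta_n>0$), removing the numerical degeneracy; positivity of the denominator is maintained under the standing assumption that the offered assortment yields a positive inclusive value in at least one nest. I expect the main obstacle to be exactly this degenerate case: one must argue that splitting $W_n$ into a strictly positive numerator copy and a possibly-zero denominator copy leaves the objective unchanged, i.e., that the floor $\delta_n$ activates only on terms whose numerator coefficient $R_n(\bx)$ vanishes, so its specific value is immaterial to the optimum.
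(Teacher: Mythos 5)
Your proposal is correct and follows essentially the same route as the paper's proof: fix $\bx$, use the monotonicity of $W_n\mapsto W_n^{\sigma_n-1}$ and $U_n\mapsto U_n^{\sigma_n}$ to argue that both families of lower bounds are tight at the inner optimum, then run the same case analysis on $V_{0n}$ and on whether nest $n$ is empty, using $\delta_n<\min_{i:\alpha_{in}>0}\alpha_{in}V_{in}$ to show the artificial floor binds only for empty zero-opt-out nests, where the numerator coefficient vanishes anyway. The only cosmetic difference is that you phrase the inner optimization as the explicit factorization $\max_{W,U}N(W)/D(U)=\max_W N(W)/\min_U D(U)$, whereas the paper argues via a tightening map that never decreases the ratio; these are the same argument.
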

\begin{proof}
For clarity,  we define the following shorthand for any $\bx\in \cX$:
\begin{align*}
A_n(\bx) &:= \sum_{i \in [m]} \alpha_{in} x_i r_i V_{in} \;\;\; (\geq 0), \\
N(\bx,\bW) &:= \sum_{n \in [N]} W_n^{\sigma_n - 1} A_n(\bx), \\
D(\bU) &:= \sum_{n \in [N]} U_n^{\sigma_n}.
\end{align*}
With this notation, the expected revenue in the reformulated problem can be written as
\[
F(\bx,\bW, \bU) = \frac{N(\bx,\bW)}{D(\bU)}.
\]
First we see that, for each nest $n$ and parameter $\sigma_n \in (0,1]$, the following properties hold:
\begin{itemize}
\item The mapping $W_n \mapsto W_n^{\sigma_n-1}$ is nonincreasing (strictly decreasing if $\sigma_n<1$, constant if $\sigma_n=1$). Intuitively, larger $W_n$ values dilute the contribution of nest $n$ to the numerator.
\item The mapping $U_n \mapsto U_n^{\sigma_n}$ is strictly increasing. Thus larger $U_n$ values increase the denominator, making the objective smaller.
\end{itemize}
This simple monotonicity observation tells us that, to maximize $F(\bx)$, we should always keep $W_n$ as small as possible (since that increases the numerator or leaves it unchanged), and also keep $U_n$ as small as possible (since that decreases the denominator).

Now, let $(\bx,\bW,\bU)$ be any feasible solution of the reformulated problem. Define the ``tightened'' values
\[
\widehat W_n := \max\!\left\{V_{0n} + \sum_i \alpha_{in} x_i V_{in}, \;\; \delta_n \cdot \mathbf{1}_{\{V_{0n}=0\}}\right\}, 
\qquad
\widehat U_n := V_{0n} + \sum_i \alpha_{in} x_i V_{in}, 
\quad \forall n \in [N].
\]
By feasibility, we always have $\widehat W_n \leq W_n$ and $\widehat U_n \leq U_n$. By monotonicity, this tightening never decreases the numerator and never increases the denominator, so
\[
N(\bx,\widehat\bW) \geq N(\bx,\bW), 
\qquad
D(\widehat\bU) \leq D(\bU).
\]
Hence the tightened choice yields
\[
F(\bx,\widehat\bW,\widehat\bU) \;\geq\; F(\bx,\bW,\bU).
\]
Therefore, at any optimum, the inequalities can be assumed to be tight, i.e.,
\begin{equation}
W_n = \widehat W_n, 
\qquad 
U_n = V_{0n} + \sum_i \alpha_{in} x_i V_{in}, 
\quad \forall n \in [N].
\tag{$\ast$}
\end{equation}
Under $(\ast)$ we have two cases:
\begin{itemize}
\item If either $V_{0n}>0$ or at least one product $i$ with $\alpha_{in}>0$ is offered ($x_i=1$), then 
\[
W_n = U_n = V_{0n} + \sum_i \alpha_{in} x_i V_{in}.
\] 
In this case, the contributions of nest $n$ to both the numerator and denominator are exactly the same as in the original problem \eqref{prob:GNL-assort}.
\item If $V_{0n}=0$ and no product in nest $n$ is offered, then $U_n=0$ as in the original problem. In addition, $A_n(\bx)=0$, so nest $n$ contributes nothing to the numerator. In the reformulated problem we enforce $W_n \geq \delta_n > 0$ to avoid the undefined term $0 \cdot W_n^{\sigma_n-1}$ when $W_n=0$. Since the contribution is zero anyway, the objective value is unaffected.
\end{itemize}
Thus, for any $\bx$, the tightened $(\bW,\bU)$ reproduces exactly the same objective value as the original formulation.
Now we are ready to show the two problems have the same optimal value:
\begin{itemize}
\item Given any feasible $\bx$ in the original problem \eqref{prob:GNL-assort}, define 
\[
U_n = V_{0n} + \sum_i \alpha_{in} x_i V_{in},
\qquad
W_n = \max\!\{\,V_{0n} + \sum_i \alpha_{in} x_i V_{in},\; \delta_n \mathbf{1}_{\{V_{0n}=0\}} \}.
\]
This provides a feasible solution to the reformulated problem with the same objective value. Therefore, the reformulated problem is at least as good as the original.
\item Conversely, take any optimal solution $(\bx^\star,\bW^\star,\bU^\star)$ of the reformulated problem. Replace it by the tightened version $(\widehat\bW,\widehat\bU)$ as in $(\ast)$ without decreasing the objective. The resulting value coincides with that of the original problem for the same $\bx^\star$. Hence the original problem is at least as good as the reformulated one.
\end{itemize}
Combining both directions, the two problems are equivalent in terms of optimal value and optimal assortments.

Finally, note that if a product with $\alpha_{in}>0$ is included in nest $n$, then 
\[
V_{0n} + \sum_i \alpha_{in} x_i V_{in} \;\geq\; \alpha_{in} V_{in} \;>\; \delta_n,
\]
so the artificial floor $\delta_n$ never binds for nonempty nests. It only matters for empty nests with $V_{0n}=0$, where it ensures $W_n>0$ and the objective remains well-defined. Thus $\delta_n$ does not affect the solution but guarantees numerical safety.
\end{proof}

In the next step, in order to apply the techniques developed in the main body of the paper
(e.g., bilinear--convex or convex reformulations),
we need to convert the maximization problem~\eqref{prob:assort-v0}
into an equivalent minimization problem.
This conversion is not immediate, since the numerator and denominator in~\eqref{prob:assort-v0}
depend on different auxiliary variables.
Nevertheless, the transformation remains valid due to the following observation:
at optimality, $U_n = W_n$ whenever
\[
V_{0n} + \sum_{i} \alpha_{in} x_i V_{in} > 0,
\]
whereas $U_n < W_n$ may occur only when the inclusive value of nest $n$ is zero.
In the former case, $U_n$ can be replaced by $W_n$ without affecting the optimal solution.
We formalize this argument in the following proposition.
\begin{proposition}
\label{prop:assort-v0-min}
Problem~\eqref{prob:assort-v0} can be equivalently reformulated as the following minimization problem:
\begin{align}
\min_{\bx \in \cX} \quad 
&\frac{\sum_{n\in[N]} W_n^{\sigma_n-1}
\Big(\beta V_{0n}+\sum_{i\in[m]}\alpha_{in}x_i r'_i V_{in}\Big)}
{\sum_{n\in[N]} U_n^{\sigma_n}}
\label{prob:assort-v0-min}\tag{\sf Assort-2} \\
\text{s.t.} \quad 
& W_n \ge V_{0n} + \sum_{i \in [m]} \alpha_{in} x_i V_{in},
\quad \forall n \in [N], \nonumber \\
& W_n \ge \delta_n,
\quad \forall n \in [N] \text{ such that } V_{0n} = 0, \nonumber\\
& U_n \ge V_{0n} + \sum_{i \in [m]} \alpha_{in} x_i V_{in},
\quad \forall n \in [N], \nonumber
\end{align}
where $\beta>\max_{i\in[m]} r_i$ and $r'_i := \beta - r_i > 0$ for all $i\in[m]$.
\end{proposition}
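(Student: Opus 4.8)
The plan is to mirror the proof of the minimization reformulation (Proposition~1), namely to exhibit a constant-shift identity $F(\bx)=\beta-G(\bx)$, where $G$ denotes the objective of~\eqref{prob:assort-v0-min}. Writing $N=\sum_{n\in[N]}W_n^{\sigma_n-1}\big(\sum_{i}\alpha_{in}x_ir_iV_{in}\big)$, $M=\sum_{n\in[N]}W_n^{\sigma_n-1}\big(\beta V_{0n}+\sum_i\alpha_{in}x_ir'_iV_{in}\big)$, and $D=\sum_{n\in[N]}U_n^{\sigma_n}$, the whole argument reduces to verifying the single identity $\beta D-M=N$, since this immediately gives $F=N/D=\beta-M/D=\beta-G$. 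Because $\beta$ is a constant independent of $\bx$ and the two problems share identical feasible regions in $(\bx,\bW,\bU)$, this identity forces the two problems to have the same optimal assortments, with optimal values related by $\mathrm{OPT}_{\max}=\beta-\mathrm{OPT}_{\min}$.

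First I would invoke the tightening established in the preceding proposition: at an optimal solution the lower-bound constraints bind, so $U_n=V_{0n}+\sum_i\alpha_{in}x_iV_{in}$ for every $n$, while $W_n=\max\{U_n,\ \delta_n\mathbf{1}_{\{V_{0n}=0\}}\}$. I would then split the nests into two groups. For a nest with strictly positive inclusive value (either $V_{0n}>0$ or some offered product has $\alpha_{in}>0$), the floor never binds, so $W_n=U_n$ and hence $U_n^{\sigma_n}=W_n^{\sigma_n-1}W_n=W_n^{\sigma_n-1}\big(V_{0n}+\sum_i\alpha_{in}x_iV_{in}\big)$; the per-nest algebra then uses $r_i+r'_i=\beta$ to collapse $\beta U_n^{\sigma_n}-[M\text{-term}]$ into exactly the $N$-term $W_n^{\sigma_n-1}\sum_i\alpha_{in}x_ir_iV_{in}$. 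For an empty nest with $V_{0n}=0$ one has $U_n=0$ and no offered products, so the $M$-term, the $N$-term, and $U_n^{\sigma_n}$ all vanish and the identity holds trivially. Summing over $n$ yields $\beta D-M=N$.

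The main obstacle is precisely that, unlike Proposition~1, the numerator and denominator here live on \emph{different} auxiliary variables ($W_n$ versus $U_n$), so $\beta D-M=N$ is not an unconditional algebraic identity: it holds only along solutions where $U_n=W_n$ on the positive nests. Thus the care lies in justifying the reduction to such tightened solutions and in checking that the artificial floor is harmless. Here the choice $\delta_n<\min_{i:\alpha_{in}>0}\alpha_{in}V_{in}$ is essential: it guarantees that whenever any product enters nest $n$ one has $V_{0n}+\sum_i\alpha_{in}x_iV_{in}\ge\alpha_{in}V_{in}>\delta_n$, so $W_n=U_n$ there and the floor can only be active on empty zero-opt-out nests, where it affects neither objective. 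With the tightening supplied by the previous proposition and the floor shown inert, the identity $F=\beta-G$ holds at optimality and the claimed max--min equivalence follows.
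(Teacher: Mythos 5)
Your proposal follows essentially the same route as the paper's proof: reduce to the tightened solution $U_n=W_n=V_{0n}+\sum_{i}\alpha_{in}x_iV_{in}$ on nonempty nests (with the floor $\delta_n$ active only on empty zero-opt-out nests, where the numerator term, the denominator term, and the original contribution all vanish), then use the algebra based on $r_i+r'_i=\beta$ to obtain $F=\beta-G$ and conclude the max--min equivalence. Your explicit observations that $\delta_n<\min_{i:\alpha_{in}>0}\alpha_{in}V_{in}$ keeps the floor inert on nonempty nests and that the identity $\beta D-M=N$ holds only along tightened solutions correspond directly to the paper's Case~1/Case~2 split, so the argument matches the paper's in both structure and substance.
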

\begin{proof}
Fix any assortment decision $\bx \in \cX$ and define
\[
W_n^\star := V_{0n} + \sum_{i\in[m]} \alpha_{in} x_i V_{in},
\quad \forall n \in [N].
\]
We analyze the objective value of the reformulated problem for this fixed $\bx$.

First, note that for each $n$ and $\sigma_n \in (0,1]$,
the mapping $W_n \mapsto W_n^{\sigma_n-1}$ is nonincreasing,
while the mapping $U_n \mapsto U_n^{\sigma_n}$ is strictly increasing.
Therefore, for a fixed $\bx$, the objective function is minimized by choosing
$W_n$ and $U_n$ as small as possible subject to feasibility.

We consider the following two cases.

\emph{Case 1: $W_n^\star > 0$.}
In this case, since $W_n^\star \ge \delta_n$, both constraints
$W_n \ge W_n^\star$ and $U_n \ge W_n^\star$
are tight at optimality, implying
\[
W_n = U_n = W_n^\star.
\]
Hence, the contribution of nest $n$ to both the numerator and denominator
coincides exactly with that in the original formulation~\eqref{prob:GNL-assort}.

\emph{Case 2: $W_n^\star = 0$.}
This can occur only when $V_{0n}=0$ and no product assigned to nest $n$ is selected.
In this case, the numerator contribution of nest $n$ is zero regardless of the value of $W_n$.
The additional constraint $W_n \ge \delta_n > 0$ ensures that $W_n^{\sigma_n-1}$ is well defined, 
while $U_n$ only appears in the denominator and $U_n = 0$. 
Thus, the contribution of such a nest does not affect the objective value, 
and we can replace $U_n^{\sigma_n}$ by 
$W_n^{\sigma_n-1}\big(V_{0n} + \sum_{i\in[m]}\alpha_{in} x_i r_i V_{in}\big)$ 
without affecting the objective function.

With the above observations, we can rewrite the objective of the maximization problem as
\begin{align*}
F(\bx)
&= \beta -
\frac{
\sum_{n\in[N]}
\beta U_n^{\sigma_n}
-
\sum_{n\in[N]}
W_n^{\sigma_n-1}
\Big(\sum_{i\in[m]}\alpha_{in}x_i r_i V_{in}\Big)
}{
\sum_{n\in[N]} U_n^{\sigma_n}
}\\
&=
\beta -
\frac{
\sum_{n\in[N]}
W_n^{\sigma_n-1}
\Big(\beta V_{0n} + \sum_{i\in[m]}\alpha_{in}x_i r'_i V_{in}\Big)
}{
\sum_{n\in[N]} U_n^{\sigma_n}
}.
\end{align*}
Since $\beta$ is a constant, maximizing $F(\bx)$ is equivalent to minimizing
\[
\frac{
\sum_{n\in[N]}
W_n^{\sigma_n-1}
\Big(\beta V_{0n} + \sum_{i\in[m]}\alpha_{in}x_i r'_i V_{in}\Big)
}{
\sum_{n\in[N]} U_n^{\sigma_n}
},
\]
which completes the proof.
\end{proof}
\subsection{Bilinear--Convex and Convex Reformulations}
\label{sec:assort-reformulation}

In this section, we develop tractable reformulations for the minimization problem
\eqref{prob:assort-v0-min}, which arises from the equivalent transformation of the
original assortment optimization problem.
Our goal is to cast the problem into forms that can be solved to global optimality
using modern mixed-integer optimization techniques.
Specifically, we present (i) a bilinear--convex reformulation based on bilinear
auxiliary variables, and (ii) a convex reformulation obtained via logarithmic
transformations, extending the techniques developed in the main body of the paper.

\paragraph{Bilinear--Convex Reformulation}
\label{subsec:assort-biconvex}

Recall the minimization problem
\eqref{prob:assort-v0-min}:
\begin{align*}
\min_{\bx \in \cX} \quad 
&\frac{\sum_{n\in[N]} W_n^{\sigma_n-1}
\Big(\beta V_{0n}+\sum_{i\in[m]}\alpha_{in}x_i r'_i V_{in}\Big)}
{\sum_{n\in[N]} U_n^{\sigma_n}}.
\end{align*}
The main difficulty lies in the coupled nonlinear terms
$W_n^{\sigma_n-1}$ and $U_n^{\sigma_n}$, as well as their interaction with the
binary assortment variables $\bx$. To decouple the numerator and denominator, we introduce auxiliary variables $h_n$ and $k_n$ to present  the terms  $W_n^{\sigma_n-1}$ and  $U_n^{\sigma_n}$, respectively.
We then consider the following equivalent formulation:
\begin{align}
\min_{\bx,\bW,\bU,\bh,\bk,\delta} \quad 
& \delta \label{prob:assort-biconvex} \\
\text{s.t.}\quad 
& \sum_{n\in[N]} h_n
\Big(\beta V_{0n}+\sum_{i\in[m]}\alpha_{in}x_i r'_i V_{in}\Big)
\;\le\;
\delta \sum_{n\in[N]} k_n, \nonumber\\
& W_n \ge V_{0n} + \sum_{i\in[m]}\alpha_{in}x_i V_{in}, \quad \forall n\in[N], \nonumber\\
& U_n \ge V_{0n} + \sum_{i\in[m]}\alpha_{in}x_i V_{in}, \quad \forall n\in[N], \nonumber\\
& W_n \ge \delta_n,
\qquad \forall n\in[N]\text{ such that }V_{0n}=0, \nonumber\\
& h_n \ge W_n^{\sigma_n-1}, \quad \forall n\in[N], \nonumber\\
& k_n \le U_n^{\sigma_n}, \quad \forall n\in[N], \nonumber\\
& \bx\in\{0,1\}^m,\;
\bW,\bU,\bh,\bk \in \mathbb{R}_+^N,\;
\delta\in\mathbb{R}_+.
\nonumber
\end{align}
We note that the inequalities $h_n \ge W_n^{\sigma_n-1}$ and $k_n \le U_n^{\sigma_n}$
constitute valid relaxations, since both constraints are tight at optimality.
Moreover, because $W_n^{\sigma_n-1}$ is convex and nonincreasing, while
$U_n^{\sigma_n}$ is concave and nondecreasing for $\sigma_n \in (0,1]$,
the constraints $h_n \ge W_n^{\sigma_n-1}$ and $k_n \le U_n^{\sigma_n}$
define convex feasible sets and can be efficiently handled within a
cutting-plane framework.

Importantly, all nonconvexities in
\eqref{prob:assort-biconvex} arise only through bilinear terms involving the
binary variables $\bx$ and continuous auxiliary variables.
Such bilinearities can be handled either by exact McCormick linearizations
or directly by solvers that support mixed-integer bilinear programming.
As a result, \eqref{prob:assort-biconvex} constitutes a
\emph{bilinear--convex} optimization problem.

\paragraph{Convex Reformulation via Log Transformation}
\label{subsec:assort-logconvex}
We now show that, analogous to Section~\ref{sec:convexification_gnl}, the minimization problem
\eqref{prob:assort-v0-min} can be transformed into a MICP via a logarithmic transformation. The resulting formulation can be solved to global optimality using standard
B\&C methods. Recall the minimization problem:
\begin{align*}
\min_{\bx \in \cX} \quad 
&\frac{\sum_{n\in[N]} W_n^{\sigma_n-1}
\Big(\beta V_{0n}+\sum_{i\in[m]}\alpha_{in}x_i r'_i V_{in}\Big)}
{\sum_{n\in[N]} U_n^{\sigma_n}}.
\end{align*}
To convexify this structure, we introduce logarithmic auxiliary variables. Specifcally, 
for each nest $n\in[N]$, define:
$
y_n := \log\!\big(W_n^{\sigma_n-1}\big)$ and 
$
z := \log\!\Big(\sum_{n\in[N]} U_n^{\sigma_n}\Big).
$
Using these definitions, the ratio term can be written as
\[
\frac{W_n^{\sigma_n-1}}{\sum_{n'} U_{n'}^{\sigma_{n'}}}
\;=\;
\exp(y_n - z).
\]
Similar to Section~\ref{sec:convexification_gnl}, 
the assortment problem can be equivalently written as:
\begin{equation}\label{prob:logConvex}
\begin{aligned}
\min_{\bx,\bt, \bW,\bU,\by,z,\delta}
\quad &   \sum_{n\in[N]} \Big(\beta V_{0n}+\sum_{i\in[m]}\alpha_{in}x_i r'_i V_{in}\Big)
t_n \\[4pt]
\text{s.t.}\quad
&t_n \geq
\exp(y_n - z), \\[4pt]
& y_n \ge (\sigma_n-1)\log(W_n),
\qquad \forall n\in[N], \\[4pt]
& z \le \log\!\Big(\sum_{n\in[N]} U_n^{\sigma_n}\Big), \\[4pt]
& W_n \ge V_{0n} + \sum_{i\in[m]}\alpha_{in}x_i V_{in},
\qquad \forall n\in[N], \\[4pt]
& W_n \ge \delta_n,
\qquad \forall n\in[N]\text{ such that }V_{0n}=0, \\[4pt]
& U_n \ge V_{0n} + \sum_{i\in[m]}\alpha_{in}x_i V_{in},
\qquad \forall n\in[N], \\[4pt]
& \bx \in \{0,1\}^m,\;
\bW,\bU \in \mathbb{R}_+^N,\;
\by \in \mathbb{R}^N,\;
z\in \mathbb{R}.
\end{aligned}
\tag{\sf Assort--LogConvex}
\end{equation}
In the above formulation, the variables $t_n$ represent the normalized
fractional terms
\(
W_n^{\sigma_n-1}/\sum_{n'} U_{n'}^{\sigma_{n'}}
\),
while the exponential constraint
$t_n \ge \exp(y_n - z)$
enforces this relationship in epigraph form.
The bilinear terms $x_i t_n$ appearing in the objective function can be
exactly linearized using standard McCormick inequalities, since $x_i$ is
binary and $t_n$ is continuous and bounded.

After this linearization, all remaining constraints and objective components
are convex.
Specifically, the exponential epigraph constraints are convex because
$\exp(\cdot)$ is convex and nondecreasing;
the constraints
$y_n \ge (\sigma_n-1)\log(W_n)$
define convex feasible sets since $\log(\cdot)$ is concave and
$\sigma_n-1 \le 0$;
and the constraint
$z \le \log(\sum_n U_n^{\sigma_n})$
is convex as it represents the hypograph of a concave function.
All other constraints are linear. Consequently, the resulting formulation is a MICP which can be solved to global optimality using a single B\&C framework, where violated convex constraints are
iteratively approximated by linear cuts and integrality is enforced
through branching on the assortment variables $\bx$.

\section{Additional Experiments}
\subsection{Assortment Optimization under GNL Model: Bisection, Convex and Bilinear--Convex Reformulations}\label{ec:assort-GNL}
\begin{table}[htb]\footnotesize
\centering
\resizebox{\textwidth}{!}{%
\begin{tabular}{rrrrrrrrrrrrrrrrrrrrr}
\toprule
 &  & \multicolumn{7}{c}{MILP-based approximation} &  & \multicolumn{2}{c}{\multirow{2}{*}{\texttt{SCIP}}} &  & \multicolumn{8}{c}{B\&C} \\ \cmidrule{3-9} \cmidrule{14-21} 
 &  & \multicolumn{3}{c}{Bisection (99\%)} &  & \multicolumn{3}{c}{MILP (99\%)} &  & \multicolumn{2}{c}{} &  & \multicolumn{2}{c}{\texttt{BIS}} &  & \multicolumn{2}{c}{\texttt{Convex}} &  & \multicolumn{2}{c}{\texttt{BiCo}} \\ \cmidrule{3-5} \cmidrule{7-9} \cmidrule{11-12} \cmidrule{14-15} \cmidrule{17-18} \cmidrule{20-21} 
$(N,m)$ &  & \#O/\#S & Time(s) & Gap(\%) &  & \#O/\#S & Time(s) & Gap(\%) &  & \#O & Time(s) &  & \#O & Time(s) &  & \#O & Time(s) &  & \#O & Time(s) \\ \midrule
(2,20) &  & 2/5 & 0.86 & -0.60 &  & 2/5 & 0.41 & -0.60 &  & 5 & 0.46 &  & 5 & 0.68 &  & 5 & 0.13 &  & 5 & \textbf{0.07} \\
(2,30) &  & 0/5 & 2.40 & -1.53 &  & 0/5 & 0.67 & -1.38 &  & 5 & 1.13 &  & 5 & 2.51 &  & 5 & 0.49 &  & 5 & \textbf{0.08} \\
(2,50) &  & 1/5 & 14.04 & -0.67 &  & 1/5 & 2.00 & -0.58 &  & 5 & 27.31 &  & 5 & 37.97 &  & 5 & 96.48 &  & 5 & \textbf{0.09} \\
(2,80) &  & 0/5 & 40.84 & -1.74 &  & 0/5 & 3.04 & -2.17 &  & 5 & 380.59 &  & 4 & 1075.29 &  & 4 & 1152.81 &  & 5 & \textbf{0.13} \\
(2,100) &  & 0/5 & 65.84 & -1.78 &  & 0/5 & 8.38 & -1.29 &  & 4 & 1719.55 &  & 5 & 1064.17 &  & 1 & 2923.49 &  & 5 & \textbf{0.18} \\ \midrule
(5,20) &  & 4/5 & 0.47 & -0.08 &  & 4/5 & 0.45 & -0.08 &  & 5 & 0.44 &  & 5 & 0.49 &  & 5 & 0.22 &  & 5 & \textbf{0.07} \\
(5,30) &  & 3/5 & 1.47 & -0.35 &  & 3/5 & 0.95 & -0.35 &  & 5 & 2.36 &  & 5 & 1.38 &  & 5 & 2.41 &  & 5 & \textbf{0.10} \\
(5,50) &  & 2/5 & 5.65 & -0.58 &  & 2/5 & 2.36 & -0.58 &  & 5 & 133.32 &  & 5 & 99.96 &  & 5 & 550.22 &  & 5 & \textbf{0.11} \\
(5,80) &  & 3/5 & 42.46 & -0.14 &  & 1/5 & 5.77 & -0.21 &  & 3 & 1805.76 &  & 5 & 975.28 &  & 1 & 2966.57 &  & 5 & \textbf{0.14} \\
(5,100) &  & 1/5 & 37.54 & -0.28 &  & 1/5 & 9.88 & -0.48 &  & 0 & - &  & 3 & 1860.33 &  & 0 & - &  & 5 & \textbf{0.18} \\ \midrule
(10,20) &  & 4/5 & 0.25 & -0.01 &  & 5/5 & 0.40 & 0.00 &  & 5 & 0.49 &  & 5 & 0.56 &  & 5 & 3.75 &  & 5 & \textbf{0.09} \\
(10,30) &  & 3/5 & 0.46 & -0.14 &  & 3/5 & 1.13 & -0.14 &  & 5 & 1.01 &  & 5 & 1.25 &  & 5 & 115.32 &  & 5 & \textbf{0.16} \\
(10,50) &  & 4/5 & 1.68 & 0.00 &  & 3/5 & 12.12 & -0.03 &  & 5 & 11.00 &  & 5 & 24.06 &  & 3 & 1859.90 &  & 5 & \textbf{0.21} \\
(10,80) &  & 2/5 & 10.97 & -0.09 &  & 2/5 & 49.87 & -0.09 &  & 5 & 544.19 &  & 5 & 424.89 &  & 0 & - &  & 5 & \textbf{0.41} \\
(10,100) &  & 3/5 & 19.76 & -0.07 &  & 4/5 & 112.71 & -0.01 &  & 0 & - &  & 3 & 1531.24 &  & 0 & - &  & 5 & \textbf{0.18} \\ \midrule
Summary: &  & 32/75 &  &  &  & 31/75 &  &  &  & 62 &  &  & 70 &  &  & 49 &  &  & \textbf{75} &  \\ \bottomrule
\end{tabular}%
}
\caption{Results of all methods on the small GNL instances.}
\label{tab:GNL_all}
\end{table}
Table~\ref{tab:GNL_all} presents a comprehensive comparison highlighting the relative strengths of the three proposed approaches for solving the assortment optimization problem under the GNL model. Specifically, we compare the bisection-based method for solving~\eqref{prob:Bisection} (denoted as \texttt{BIS}), the convex reformulation~\eqref{prob:Expcone} solved via a B\&C framework (denoted as \texttt{Convex}), and the bilinear--convex reformulation solved by B\&C with OA cuts and SCs (denoted as \texttt{BiCo}). For benchmarking purposes, we also report results obtained using the general-purpose solver SCIP, as well as the two MILP-based approximation approaches proposed by \citet{Cuong2024}.

Among the three exact methods, \texttt{BiCo} exhibits the strongest performance, solving all instances to optimality with the average runtime below 1 second. \texttt{BIS} shows strong overall performance, consistently solving most instances to optimality with competitive runtimes, especially in smaller and moderate-size problems. It achieves 70 out of 75 optimal solutions, indicating high robustness and reliability. However, its runtime increases noticeably with problem size, occasionally exceeding 1000 seconds for larger instances such as (2,80), (2,100), (5,100) and (10,100). On the other hand, the \texttt{Convex} approach  demonstrates mixed results: while it is effective on small and medium instances, solving them in acceptable time, its performance degrades significantly for larger instances, both in terms of runtime and number of optimally solved cases. Notably, it solves only 49 out of 75 instances, with several timeouts or failures on larger problems. This contrast suggests that while both formulations are viable, the Bisection approach offers a better balance between robustness and scalability.

\subsection{Assortment Optimization under MGNL Model: Convex and Bilinear--Convex Reformulations}\label{ec:assort-MGNL}
We report comparative results for the assortment optimization problem with multiple customer segments, corresponding to the  MGNL model. In this setting, we compare our approaches based on the convex and bilinear--convex reformulations embedded in B\&C procedure with OA cuts and SCs, as described in Appendix~\ref{ec:mixed_gnl_model}, against \texttt{SCIP} solver and the MILP-based approximation approach proposed by \citet{Cuong2024}. We denote the convex reformulation via logarithmic transformation \eqref{prob: Assort--MGNL--Convex} as \texttt{Convex} and the bilinear--convex reformulation \eqref{prob:M-Bilinear} as \texttt{BiCo}.

The comparison is conducted on 27 MGNL datasets, generated by varying the number of customer segments, nests, and products, with
$T \in \{2,5,10\}$, $N \in \{5,10,20\}$, and $m \in \{20,50,100\}$. For each dataset, we generate five independent instances. In each instance, the revenue of product~$i$ is given by $r_i = u_i^2 X_i$, and the preference parameter is defined as $V_{tin} = (1 - u_i) Y_{ti}$ for all $t \in [T]$ and $i \in [m]$, where $u_i \sim (0,1]$, $X_i \sim [0.1,10]$, and $Y_{ti} \sim [0.1,10]$ are independently sampled. The arrival probabilities $\theta_t$ for customer segment~$t$ are drawn independently from $(0,1)$ and then normalized so that different customer types have heterogeneous arrival rates and the total probability satisfies $\sum_{t \in [T]} \theta_t = 1$.

\begin{table}[htb]\footnotesize
\centering
\resizebox{0.67\textwidth}{!}{%
\begin{tabular}{rlrrrlrrlrrlrr}
\toprule
 &  & \multicolumn{3}{c}{MILP-based approximation} & \multicolumn{1}{c}{} & \multicolumn{2}{c}{\multirow{2}{*}{\texttt{SCIP}}} & \multicolumn{1}{c}{} & \multicolumn{5}{c}{B\&C} \\ \cmidrule{3-5} \cmidrule{10-14} 
 &  & \multicolumn{3}{c}{MILP (99\%)} & \multicolumn{1}{c}{} & \multicolumn{2}{c}{} & \multicolumn{1}{c}{} & \multicolumn{2}{c}{\texttt{Convex}} & \multicolumn{1}{c}{} & \multicolumn{2}{c}{\texttt{BiCo}} \\ \cmidrule{3-5} \cmidrule{7-8} \cmidrule{10-11} \cmidrule{13-14} 
$(T,N,m)$ &  & \#O/\#S & Time(s) & Gap(\%) &  & \#O & Time(s) &  & \#O & Time(s) &  & \#O & Time(s) \\ \midrule
(2,2,20) &  & 1/5 & 0.85 & -1.85 &  & 5 & 1.4 &  & 5 & 0.45 &  & 5 & \textbf{0.21} \\
(2,2,30) &  & 3/5 & 1.54 & -0.22 &  & 5 & 6.29 &  & 5 & 1.62 &  & 5 & \textbf{0.73} \\
(2,2,50) &  & 0/5 & 4.55 & -1.24 &  & 5 & 399.00 &  & 5 & 82.14 &  & 5 & \textbf{2.26} \\
(2,2,80) &  & 1/5 & 14.38 & -0.61 &  & 0 & - &  & 3 & 1736.11 &  & 5 & \textbf{6.14} \\
(2,2,100) &  & 0/5 & 29.22 & -0.95 &  & 0 & - &  & 0 & - &  & 5 & \textbf{7.45} \\ \midrule
(2,5,20) &  & 2/5 & 0.99 & -0.32 &  & 5 & 1.36 &  & 5 & \textbf{0.18} &  & 5 & 0.35 \\
(2,5,30) &  & 4/5 & 2.72 & -0.25 &  & 5 & 4.82 &  & 5 & 1.93 &  & 5 & \textbf{1.35} \\
(2,5,50) &  & 3/5 & 28.28 & -0.23 &  & 5 & 586.69 &  & 5 & 141.95 &  & 5 & \textbf{3.08} \\
(2,5,80) &  & 2/5 & 88.68 & -0.27 &  & 0 & - &  & 0 & - &  & 5 & \textbf{17.30} \\
(2,5,100) &  & 2/5 & 500.13 & -0.21 &  & 0 & - &  & 0 & - &  & 5 & \textbf{7.54} \\ \midrule
(5,2,20) &  & 1/5 & 8.59 & -1.18 &  & 5 & 3.61 &  & 5 & \textbf{0.23} &  & 5 & 1.05 \\
(5,2,30) &  & 4/5 & 29.25 & -1.01 &  & 5 & 24.69 &  & 5 & 3.81 &  & 5 & \textbf{4.78} \\
(5,2,50) &  & 0/5 & 211.89 & -0.81 &  & 0 & - &  & 5 & 266.35 &  & 5 & \textbf{11.93} \\
(5,2,80) &  & 0/0 & - & - &  & 0 & - &  & 0 & - &  & 5 & \textbf{35.40} \\
(5,2,100) &  & 0/0 & - & - &  & 0 & - &  & 0 & - &  & 5 & \textbf{28.78} \\ \midrule
(5,5,20) &  & 3/5 & 6.72 & -0.17 &  & 5 & 3.72 &  & 5 & \textbf{1.75} &  & 5 & 1.99 \\
(5,5,30) &  & 2/5 & 36.31 & -0.34 &  & 5 & 40.83 &  & 5 & 31.52 &  & 5 & \textbf{2.89} \\
(5,5,50) &  & 2/4 & 1870.77 & -0.21 &  & 0 & - &  & 3 & 2355.26 &  & 5 & \textbf{4.97} \\
(5,5,80) &  & 0/0 & - & - &  & 0 & - &  & 0 & - &  & 5 & \textbf{17.87} \\
(5,5,100) &  & 0/0 & - & - &  & 0 & - &  & 0 & - &  & 5 & \textbf{23.24} \\ \midrule
Summary: &  & 30/79 &  &  &  & 50 &  &  & 61 &  &  & \textbf{100} &  \\ \bottomrule
\end{tabular}%
}
\caption{Results of all methods on the small MGNL instances.}
\label{tab:MGNL_all}
\end{table}

Table \ref{tab:MGNL_all} presents a comparative evaluation of solution methods for some small MGNL instances. The \texttt{Convex}  shows moderately strong performance, solving to optimality 61 out of 100 instances. It performs reliably on small problem sizes—for example, all instances with $m = \{20,30,50\}$ are solved quickly and efficiently. However, its effectiveness decreases significantly for larger problems, as seen in the frequent timeouts or failures for instances such as $(2,2,100)$, $(2,5,100)$, and all configurations with $(T,N) = (5,2)$ or $(5,5)$ and $m \geq 80$. This decline in scalability is also evident in the substantial runtime spikes, with several instances taking over 1000 seconds or remaining unsolved. In contrast, \texttt{BiCo} demonstrates superior performance in terms of both solution quality and computational time. This method is able to solve all instances to optimality, with the average runtime not exceeding 36 seconds.

\begin{table}[htb]\footnotesize
\centering
\footnotesize
\resizebox{0.7\textwidth}{!}{%
\begin{tabular}{rrrrrrrrrrrrrr}
\toprule
            &  & \multicolumn{3}{c}{MILP-based approximation} &  & \multicolumn{8}{c}{\texttt{B\&C-BiCo} (ours)}                                     \\ \cmidrule{3-5} \cmidrule{7-14} 
           &  & \multicolumn{3}{c}{MILP (99\%)} &  & \multicolumn{2}{c}{OA} &  & \multicolumn{2}{c}{SC} &  & \multicolumn{2}{c}{OA + SC} \\ \cmidrule{3-5} \cmidrule{7-8} \cmidrule{10-11} \cmidrule{13-14} 
$(T,N,m)$    &  & \#O/\#S  & Time(s) & Gap(\%) &  & \#O       & Time(s)             &  & \#O       & Time(s)             &  & \#O          & Time(s)               \\ \midrule
(2,5,20)    &  & 2/5        & 0.99           & -0.32       &  & 5 & 0.35            &  & 5 & 0.38           &  & 5 & \textbf{0.34}    \\
(2,5,50)    &  & 3/5        & 28.28          & -0.23       &  & 5 & \textbf{3.08}   &  & 5 & 5.71           &  & 5 & 5.35             \\
(2,5,100)   &  & 2/5        & 500.13         & -0.21       &  & 5 & \textbf{7.54}   &  & 5 & 20.96          &  & 5 & 21.89            \\
(2,10,20)   &  & 5/5        & 0.71           & 0.00           &  & 5 & 0.29            &  & 5 & \textbf{0.27}  &  & 5 & \textbf{0.27}    \\
(2,10,50)   &  & 4/5        & 19.23          & -0.08       &  & 5 & 1.96            &  & 5 & \textbf{1.72}  &  & 5 & 2.01             \\
(2,10,100)  &  & 0/0        & -              & -           &  & 5 & 7.35            &  & 5 & \textbf{6.04}  &  & 5 & 6.72             \\
(2,20,20)   &  & 4/5        & 0.40           & -0.06       &  & 5 & \textbf{0.18}   &  & 5 & 0.19           &  & 5 & 0.19             \\
(2,20,50)   &  & 5/5        & 7.28           & 0.00           &  & 5 & 2.65            &  & 5 & \textbf{2.14}  &  & 5 & 2.22             \\
(2,20,100)  &  & 0/0        & -              & -           &  & 5 & \textbf{9.26}   &  & 5 & 14.00          &  & 5 & 17.22            \\ \midrule
(5,5,20)    &  & 3/5        & 6.72           & -0.17       &  & 5 & \textbf{1.99}   &  & 5 & 3.47           &  & 5 & 3.99             \\
(5,5,50)    &  & 2/4        & 1870.77        & -0.21       &  & 5 & \textbf{4.97}   &  & 5 & 7.28           &  & 5 & 7.14             \\
(5,5,100)   &  & 0/0        & -              & -           &  & 5 & \textbf{23.24}  &  & 5 & 23.94          &  & 5 & 28.66            \\
(5,10,20)   &  & 4/5        & 3.45           & 0.00           &  & 5 & 2.71            &  & 5 & \textbf{2.44}  &  & 5 & 2.95             \\
(5,10,50)   &  & 5/5        & 501.37         & 0.00           &  & 5 & \textbf{15.24}  &  & 5 & 25.02          &  & 5 & 28.72            \\
(5,10,100)  &  & 0/0        & -              & -           &  & 5 & \textbf{24.33}  &  & 5 & 53.35          &  & 5 & 52.95            \\
(5,20,20)   &  & 5/5        & 1.32           & 0.00           &  & 5 & 1.03            &  & 5 & \textbf{0.83}  &  & 5 & 0.88             \\
(5,20,50)   &  & 5/5        & 32.13          & 0.00           &  & 5 & 12.91           &  & 5 & \textbf{11.35} &  & 5 & 12.92            \\
(5,20,100)  &  & 0/0        & -              & -           &  & 5 & \textbf{57.44}  &  & 5 & 112.88         &  & 5 & 110.56           \\ \midrule
(10,5,20)   &  & 4/5        & 17.58          & -0.12       &  & 5 & \textbf{8.71}   &  & 5 & 11.96          &  & 5 & 17.11            \\
(10,5,50)   &  & 0/0        & -              & -           &  & 5 & \textbf{18.44}  &  & 5 & 19.4           &  & 5 & 29.03            \\
(10,5,100)  &  & 0/0        & -              & -           &  & 5 & 643.1           &  & 5 & 294.71         &  & 5 & \textbf{95.13}   \\
(10,10,20)  &  & 4/5        & 11.14          & -0.02       &  & 5 & 8.49            &  & 5 & \textbf{7.58}  &  & 5 & 7.72             \\
(10,10,50)  &  & 3/3        & 1830.77        & 0.00           &  & 5 & 49.76           &  & 5 & \textbf{41.79} &  & 5 & 60.79            \\
(10,10,100) &  & 0/0        & -              & -           &  & 5 & \textbf{392.75} &  &  5 & 442.68         &  & 5 & 426.32           \\
(10,20,20) &  & 5/5          & 3.96    & 0.00      &  & 5           & \textbf{1.66}       &  & 5           & \textbf{1.66}       &  & 5              & \textbf{1.66}         \\
(10,20,50)  &  & 4/5        & 135.13         & -0.01       &  & 5 & 48.55           &  & 5 & 44.69          &  & 5 & \textbf{40.25}   \\
(10,20,100) &  & 0/0        & -              & -           &  & 3 & 2242.14         &  & 2 & 3076.12        &  & 4 & \textbf{2176.14} \\ \midrule
Summary: &  & 69/82 &  &  &  & 133 &  &  & 132 &  &  & \textbf{134} & \\ 
\bottomrule
\end{tabular}%
}
\caption{Comparison results on the medium and large MGNL datasets.}
\label{tab:MGNL}
\end{table}

Table~\ref{tab:MGNL} presents a comparative evaluation between our \texttt{B\&C-BiCo} method and the MILP-based approximation approach from \cite{Cuong2024} on medium- and large-scale instances. The results clearly demonstrate the advantage of \texttt{B\&C-BiCo} in terms of both solution quality and computational efficiency. While the MILP-based approximation method fails to solve many instances, particularly as the problem size increases, the \texttt{B\&C-BiCo} variants—OA, SC, and OA+SC—consistently solve nearly all instances. Among them, the OA+SC variant delivers the best overall performance, successfully solving 134 out of 135 instances with competitive runtimes across all instances. Notably, even in challenging scenarios with large product sets (e.g., 100 products), where MILP approximation fails to solve, OA+SC remains both reliable and efficient. These findings underscore the robustness and scalability of the proposed \texttt{B\&C-BiCo} approach in addressing the complexity of MGNL-based assortment optimization problems.



\subsection{Discrete-Price JAP under GNL Model (A\&DP): Bisection, Convex and Bilinear--Convex Reformulations}\label{ec:jap_dp}
We present comparative results for the JAP problem with discrete price variables. As discussed earlier, this problem can be equivalently reformulated as an assortment optimization problem over an extended set of products corresponding to discrete price levels. Consequently, all methods developed for the assortment optimization problem can be directly applied in this setting, and we benchmark our proposed approaches against the MILP-based approximation method proposed by \citet{Cuong2024}.

We generate 27 datasets to test our exact method on the joint assortment optimization and discrete pricing problem. The number of nests $N$, number of products $m$ and number of price levels $L$ vary over $\{5,10,20\}$, $\{20,50,100\}$ and $\{2,5,10\}$, respectively. In this experiment, we assume that the price $p_{il}$ and the utility $v_{il}$ follow parametric relationship suggested by \cite{LiHuh2011} and \cite{gallego2014multiproduct}, i.e., $v_{il}=\exp(\mu_i-\eta_i p_{il})$, where: $\mu_i \sim [-1,1]$ and $\eta_i \sim (0,1)$. The discrete price set $P_i$ are generated by using random number $\gamma_i \sim (0,1)$ and calculating the prices as $p_{il}=l\gamma_i+0.5, \forall l\in [L]$.

\begin{table}[htb]\footnotesize
\centering
\resizebox{\textwidth}{!}{%
\begin{tabular}{rrrrrrrrrrrrrrrrrrrrr}
\toprule
 &  & \multicolumn{7}{c}{MILP-based approximation} &  & \multicolumn{2}{c}{\multirow{2}{*}{\texttt{SCIP}}} &  & \multicolumn{8}{c}{B\&C} \\ \cmidrule{3-9} \cmidrule{14-21} 
 &  & \multicolumn{3}{c}{BIS (99\%)} &  & \multicolumn{3}{c}{MILP (99\%)} &  & \multicolumn{2}{c}{} &  & \multicolumn{2}{c}{\texttt{BIS}} &  & \multicolumn{2}{c}{\texttt{Convex}} &  & \multicolumn{2}{c}{\texttt{BiCo}} \\ \cmidrule{3-5} \cmidrule{7-9} \cmidrule{11-12} \cmidrule{14-15} \cmidrule{17-18} \cmidrule{20-21} 
$(N,m,L)$ &  & \#O/\#S & Time(s) & Gap(\%) &  & \#O/\#S & Time(s) & Gap(\%) &  & \#O & Time(s) &  & \#O & Time(s) &  & \#O & Time(s) &  & \#O & Time(s) \\ \midrule
(2,20,2) &  & 2/5 & 3.73 & -0.25 &  & 1/5 & 1.17 & -0.35 &  & 5 & 14.95 &  & 5 & 0.60 &  & 5 & 0.17 &  & 5 & \textbf{0.04} \\
(2,20,5) &  & 1/5 & 17.72 & -0.51 &  & 1/5 & 5.96 & -0.44 &  & 5 & 469.25 &  & 5 & 17.36 &  & 5 & 3.73 &  & 5 & \textbf{0.14} \\
(2,30,2) &  & 0/5 & 13.9 & -0.54 &  & 1/5 & 2.60 & -0.50 &  & 4 & 1879.51 &  & 5 & 1.64 &  & 5 & 0.96 &  & 5 & \textbf{0.03} \\
(2,30,5) &  & 1/5 & 17.72 & -0.51 &  & 1/5 & 5.96 & -0.44 &  & 0 & - &  & 5 & 852.92 &  & 4 & 2232.34 &  & 5 & \textbf{0.22} \\
(2,50,2) &  & 0/5 & 30.60 & -0.86 &  & 0/5 & 7.11 & -0.79 &  & 0 & - &  & 5 & 7.07 &  & 5 & 1.82 &  & 5 & \textbf{0.06} \\
(2,50,5) &  & 1/5 & 109.52 & -0.80 &  & 1/5 & 78.82 & -0.93 &  & 0 & - &  & 5 & 348.63 &  & 4 & 1517.02 &  & 5 & \textbf{0.52} \\
(2,80,2) &  & 0/5 & 101.09 & -0.96 &  & 0/5 & 26.98 & -0.68 &  & 0 & - &  & 5 & 14.54 &  & 4 & 773.31 &  & 5 & \textbf{0.08} \\
(2,80,5) &  & 0/5 & 432.31 & -0.61 &  & 0/5 & 209.22 & -0.59 &  & 0 & - &  & 3 & 2052.00 &  & 0 & - &  & 5 & \textbf{0.59} \\
(2,100,2) &  & 0/5 & 146.22 & -0.90 &  & 0/5 & 25.56 & -0.83 &  & 0 & - &  & 5 & 28.03 &  & 3 & 2167.81 &  & 5 & \textbf{0.38} \\
(2,100,5) &  & 0/5 & 629.11 & -0.80 &  & 0/5 & 383.56 & -0.46 &  & 0 & - &  & 2 & 2640.93 &  & 0 & - &  & 5 & \textbf{1.19} \\ \midrule
(5,20,2) &  & 3/5 & 2.77 & -0.09 &  & 3/5 & 1.52 & -0.09 &  & 5 & 10.02 &  & 5 & 0.85 &  & 5 & \textbf{0.06} &  & 5 & 0.09 \\
(5,20,5) &  & 4/5 & 10.57 & -0.01 &  & 4/5 & 13.5 & -0.01 &  & 5 & 465.62 &  & 5 & 37.22 &  & 5 & 7.96 &  & 5 & \textbf{0.20} \\
(5,30,2) &  & 3/5 & 6.44 & -0.04 &  & 5/5 & 4.18 & 0.00 &  & 5 & 210.24 &  & 5 & 1.20 &  & 5 & 0.37 &  & 5 & \textbf{0.06} \\
(5,30,5) &  & 2/5 & 64.21 & -0.07 &  & 3/5 & 85.83 & -0.05 &  & 0 & - &  & 3 & 1974.29 &  & 0 & - &  & 5 & \textbf{0.58} \\
(5,50,2) &  & 1/5 & 79.77 & -0.08 &  & 1/5 & 38.01 & -0.09 &  & 0 & - &  & 5 & 9.10 &  & 5 & 2.37 &  & 5 & \textbf{0.13} \\
(5,50,5) &  & 0/5 & 192.87 & -0.09 &  & 2/5 & 506.38 & -0.04 &  & 0 & - &  & 3 & 1867.07 &  & 0 & - &  & 5 & \textbf{0.59} \\
(5,80,2) &  & 1/5 & 186.60 & -0.13 &  & 0/5 & 66.45 & -0.14 &  & 0 & - &  & 5 & 14.62 &  & 5 & 15.04 &  & 5 & \textbf{0.08} \\
(5,80,5) &  & 0/5 & 704.62 & -0.16 &  & 0/5 & 787.52 & -0.13 &  & 0 & - &  & 2 & 2443.00 &  & 0 & - &  & 5 & \textbf{0.69} \\
(5,100,2) &  & 0/5 & 264.25 & -0.18 &  & 0/5 & 112.02 & -0.23 &  & 0 & - &  & 5 & 35.62 &  & 5 & 776.58 &  & 5 & \textbf{0.18} \\
(5,100,5) &  & 0/5 & 1277.56 & -0.31 &  & 0/5 & 1530.6 & -0.28 &  & 0 & - &  & 0 & - &  & 0 & - &  & 5 & \textbf{2.01} \\ \midrule
Summary: &  & 19/100 &  &  &  & 23/100 &  &  &  & 29 &  &  & 83 &  &  & 65 &  &  & \textbf{100} &  \\ \bottomrule
\end{tabular}%
}
\caption{Results of all methods on the small A\&DP instances.}
\label{tab:ADP-all}
\end{table}
Table \ref{tab:ADP-all} presents a detailed comparison of various methods on small A\&DP instances, with particular attention to the performance of the \eqref{prob:Bisection}, \eqref{prob:Expcone} and bilinear--convex formulations under the B\&C framework with OA cuts and SCs, denoted as \texttt{BIS}, \texttt{Convex} and \texttt{BiCo}, respectively. Once again, \texttt{BiCo} confirms itself as the best-performing method among the three exact approaches considered, as it is able to solve all instances to optimality with the average computational time of approximately 2 seconds or less. The \texttt{BIS} demonstrates strong overall performance, solving 83 out of 100 instances optimally, with especially low solution times for smaller and moderately sized problems. However, as the instance size increases (e.g., for large $N$, $m$, and $L$), its runtime rises significantly—occasionally exceeding 2000 seconds—highlighting some scalability limitations. Nonetheless, it remains far more robust and effective than the benchmark methods from \cite{Cuong2024} and \texttt{SCIP}, both of which solve fewer than 30 instances each. The \texttt{Convex} formulation, while less robust than its Bisection counterpart, still performs competitively, solving 65 out of 100 instances. It tends to perform well on small and medium-sized instances but struggles on the larger ones, where timeouts or failures are common. Additionally, the runtimes of \texttt{Convex} are often significantly higher than those of \texttt{BIS} on the same instances, primarily due to the wide range of the variable that presents the denominator of the objective function.

\begin{table}[htb]\footnotesize
\centering
\footnotesize
\resizebox{0.9\textwidth}{!}{%
\begin{tabular}{rlrrrlrrrlrrlrrlrr}
\toprule
            &  & \multicolumn{7}{c}{MILP-based approximation}        &  & \multicolumn{8}{c}{\texttt{B\&C-BiCo} (ours)}                               \\ \cmidrule{3-9} \cmidrule{11-18} 
 &
   &
  \multicolumn{3}{c}{BIS (99\%)} &
   &
  \multicolumn{3}{c}{MILP (99\%)} &
   &
  \multicolumn{2}{c}{OA} &
   &
  \multicolumn{2}{c}{SC} &
   &
  \multicolumn{2}{c}{OA + SC} \\ \cmidrule{3-5} \cmidrule{7-9} \cmidrule{11-12} \cmidrule{14-15} \cmidrule{17-18} 
$(N,m,L)$ &
   &
  \#O/\#S &
  Time(s) &
  Gap(\%) &
   &
  \#O/\#S &
  Time(s) &
  Gap(\%) &
   &
  \#O &
  Time(s) &
   &
  \#O &
  Time(s) &
   &
  \#O &
  Time(s) \\ \midrule
(10,20,2)   &  & 3/5 & 0.96    & -0.07 &  & 3/5 & 2.96    & -0.03 &  & 5 & 0.13          &  & 5 & 0.09          &  & 5 & \textbf{0.07} \\
(10,20,5)   &  & 2/5 & 2.79    & -0.03 &  & 4/5 & 39.69   & 0.00  &  & 5 & \textbf{0.12} &  & 5 & 0.16          &  & 5 & 0.13          \\
(10,20,10)  &  & 4/5 & 8.03    & -0.01 &  & 5/5 & 101.26  & 0.00  &  & 5 & \textbf{0.17} &  & 5 & 0.24          &  & 5 & \textbf{0.17} \\
(10,50,2)   &  & 2/5 & 41.9    & -0.01 &  & 1/5 & 240.76  & -0.02 &  & 5 & 0.17          &  & 5 & \textbf{0.12} &  & 5 & \textbf{0.12} \\
(10,50,5)   &  & 2/5 & 164.73  & -0.03 &  & 0/1 & 3583.72 & -0.06 &  & 5 & \textbf{0.42} &  & 5 & 1.09          &  & 5 & 0.53          \\
(10,50,10)  &  & 0/3 & 2534.71 & -0.12 &  & 0/0 & -       & -     &  & 5 & \textbf{1.22} &  & 5 & 22.28         &  & 5 & 1.78          \\
(10,100,2)  &  & 0/4 & 2536.04 & -0.05 &  & 0/0 & -       & -     &  & 5 & \textbf{0.11} &  & 5 & 0.19          &  & 5 & 0.12          \\
(10,100,5)  &  & 0/3 & 2559.40 & -0.10 &  & 0/0 & -       & -     &  & 5 & \textbf{1.21} &  & 5 & 3.63          &  & 5 & 1.75          \\
(10,100,10) &  & 0/0 & -       & -     &  & 0/0 & -       & -     &  & 5 & \textbf{8.21} &  & 0 & -             &  & 5 & 10.45         \\ \midrule
(20,20,2)   &  & 5/5 & 0.26    & 0.00  &  & 5/5 & 0.47    & 0.00  &  & 5 & \textbf{0.09} &  & 5 & \textbf{0.09} &  & 5 & 0.10          \\
(20,20,5)   &  & 5/5 & 0.77    & 0.00  &  & 5/5 & 10.55   & 0.00  &  & 5 & 0.20          &  & 5 & 0.16          &  & 5 & \textbf{0.11} \\
(20,20,10)  &  & 4/5 & 1.12    & -0.01 &  & 5/5 & 66.20   & 0.00  &  & 5 & 0.27          &  & 5 & \textbf{0.25} &  & 5 & \textbf{0.25} \\
(20,50,2)   &  & 4/5 & 4.60    & 0.00  &  & 5/5 & 110.59  & 0.00  &  & 5 & 0.21          &  & 5 & 0.19          &  & 5 & \textbf{0.18} \\
(20,50,5)   &  & 3/5 & 29.95   & -0.01 &  & 0/0 & -       & -     &  & 5 & \textbf{0.30} &  & 5 & 0.43          &  & 5 & 0.33          \\
(20,50,10)  &  & 1/5 & 102.78  & -0.03 &  & 0/0 & -       & -     &  & 5 & \textbf{0.66} &  & 5 & 1.79          &  & 5 & 0.69          \\
(20,100,2)  &  & 1/4 & 1805.12 & -0.03 &  & 0/0 & -       & -     &  & 5 & 0.45          &  & 5 & 0.46          &  & 5 & \textbf{0.37} \\
(20,100,5)  &  & 0/0 & -       & -     &  & 0/0 & -       & -     &  & 5 & \textbf{0.85} &  & 5 & 2.63          &  & 5 & 0.97          \\
(20,100,10) &  & 0/0 & -       & -     &  & 0/0 & -       & -     &  & 5 & \textbf{7.71} &  & 0 & -             &  & 5 & 10.50         \\ \midrule
Summary: &
   &
  36/69 &
  \multicolumn{1}{l}{} &
  \multicolumn{1}{l}{} &
   &
  33/41 &
  \multicolumn{1}{l}{} &
  \multicolumn{1}{l}{} &
   &
  \textbf{90} &
  \multicolumn{1}{l}{} &
   &
  80 &
  \multicolumn{1}{l}{} &
   &
  \textbf{90} &
  \multicolumn{1}{l}{} \\ \bottomrule
\end{tabular}%
}
\caption{Comparison results on the medium and large A\&DP instances.}
\label{tab:ADP}
\end{table}

Table~\ref{tab:ADP} provides a detailed comparison between the MILP-based approximation method and our proposed formulation (in Appendix~\ref{ec:JAP_DP}) embedded in the B\&C procedure. Across all instance configurations, the \texttt{B\&C-BiCo} variants—OA, SC, and OA+SC—consistently outperform BIS and MILP in both solution quality and runtime. \texttt{B\&C-BiCo} with OA cuts and OA+SC are able to solve all instances significantly faster, often by several orders of magnitude. Among them, OA typically yields the fastest runtimes, while OA+SC also demonstrates strong performance stability across different problem sizes. In contrast, BIS and MILP struggle with larger instances, frequently failing to find solutions within time limit. These results highlight the superior scalability and effectiveness of the B\&C framework for solving complex A\&DP instances.

\subsection{Comparative analysis of the GNL and MGNL models}\label{ec:gnl_mgnl}

\begin{table}[htb]
\centering
\resizebox{\textwidth}{!}{%
\begin{tabular}{@{}rrrrrrrrrrrrrrrrr@{}}
\toprule
         &  & \multicolumn{3}{c}{$N=5$} &  &          &  & \multicolumn{3}{c}{$N=10$} &  &          &  & \multicolumn{3}{c}{$N=20$} \\ \cmidrule(lr){3-5} \cmidrule(lr){9-11} \cmidrule(l){15-17} 
$(m,T)$ &  & MinGap(\%) & AveGap(\%) & MaxGap(\%) &  & $(m,T)$ &  & MinGap(\%) & AveGap(\%) & MaxGap(\%) &  & $(m,T)$ &  & MinGap(\%) & AveGap(\%) & MaxGap(\%) \\ \cmidrule(r){1-5} \cmidrule(lr){7-11} \cmidrule(l){13-17} 
(20,2)   &  & 0.00    & 1.36   & 5.16   &  & (20,2)   &  & 0.00    & 0.09    & 0.47   &  & (20,2)   &  & 0.00    & 0.61    & 1.80   \\
(20,5)   &  & 0.99    & 9.19   & 21.70  &  & (20,5)   &  & 1.39    & 5.93    & 10.98  &  & (20,5)   &  & 0.00    & 1.12    & 2.54   \\
(20,10)  &  & 9.38    & 17.04  & 24.32  &  & (20,10)  &  & 1.93    & 11.38   & 24.13  &  & (20,10)  &  & 0.00    & 4.36    & 12.16  \\ \cmidrule(r){1-5} \cmidrule(lr){7-11} \cmidrule(l){13-17} 
(50,2)   &  & 1.28    & 3.16   & 6.66   &  & (50,2)   &  & 0.31    & 1.57    & 2.82   &  & (50,2)   &  & 0.00    & 2.20    & 3.93   \\
(50,5)   &  & 7.98    & 13.02  & 20.43  &  & (50,5)   &  & 7.35    & 15.15   & 20.44  &  & (50,5)   &  & 5.61    & 9.81    & 15.11  \\
(50,10)  &  & 23.08   & 29.97  & 34.97  &  & (50,10)  &  & 12.28   & 22.47   & 30.87  &  & (50,10)  &  & 4.22    & 14.07   & 21.56  \\ \cmidrule(r){1-5} \cmidrule(lr){7-11} \cmidrule(l){13-17} 
(100,2)  &  & 1.17    & 2.27   & 3.39   &  & (100,2)  &  & 1.28    & 2.78    & 4.24   &  & (100,2)  &  & 0.47    & 2.00    & 3.05   \\
(100,5)  &  & 7.59    & 13.32  & 20.09  &  & (100,5)  &  & 10.48   & 13.07   & 16.54  &  & (100,5)  &  & 10.14   & 11.86   & 14.23  \\
(100,10) &  & 18.91   & 26.33  & 33.61  &  & (100,10) &  & 20.42   & 25.90   & 30.49  &  & (100,10) &  & 16.71   & 22.27   & 26.87  \\ \bottomrule
\end{tabular}%
}
\caption{Objective value gap between MGNL and GNL models across problem instances.}
\label{tab:gap_obj_gnl_mgnl}
\end{table}

The GNL model can be viewed as a special case of the MGNL model when the number of customer segments satisfies $T=1$. In this sense, the MGNL model extends the GNL framework by explicitly capturing heterogeneity across multiple customer segments, each characterized by distinct preference parameters and arrival probabilities. While the GNL model aggregates demand into a single representative population, the MGNL model allows the firm to tailor assortment decisions to segment-level differences, which is particularly important in markets with diverse customer behaviors.

From a modeling perspective, applying the GNL formulation to an MGNL instance effectively amounts to ignoring customer heterogeneity. Specifically, product utilities within each customer segment are replaced by their average values across all segments, yielding a single-population approximation of the underlying heterogeneous market. Despite this simplification, the GNL model can serve as a useful baseline for comparison, as it reflects a common practice in which firms rely on aggregate demand estimates when segment-level information is unavailable or computationally difficult to exploit.

To quantify the performance loss induced by this aggregation, we solve the GNL model constructed from averaged utilities and then evaluate the resulting assortment decisions under the original MGNL objective function to obtain ground-truth objective values. 
Table~\ref{tab:gap_obj_gnl_mgnl} reports the percentage gap in objective values between the MGNL and GNL models under various configurations of $m$, $N$, and $T$. The results show that the performance gap increases monotonically with $T$, indicating that the benefits of the MGNL model become more pronounced as customer heterogeneity intensifies. When the number of segments is small $(T=2)$, both models yield comparable results, with average gaps typically below $4\%$. However, for larger values of $T$, the MGNL model consistently achieves higher objective values, reflecting its superior ability to exploit segment-specific preferences. For $T=5$, the average gap ranges between 1.12\% and 15.15\%, whereas for $T=10$, it increases to a range of 4.36\% to 29.97\%. These findings highlight the limitations of aggregate-choice models and provide strong empirical motivation for adopting the MGNL framework in multi-segment environments, where accounting for heterogeneity can lead to substantial improvements in revenue performance.

\subsection{Objective gap comparison for continuous-price JAP}\label{ec:obj_jap_cp}
We provide a comparison in terms of objective value gaps among three main solution approaches—\texttt{B\&C-BiCo-PWLA}, \texttt{B\&C-BiCo-DP} $(\#Price = 2L)$, and the \texttt{SCIP} solver—with the motivation of quantifying the trade-off between modeling accuracy and computational tractability when solving the JAP problem with continuous prices. In particular, this comparison allows us to assess the performance loss induced by price discretization and to benchmark the quality of the proposed PWLA-based reformulation against a general-purpose MINLP solver. 

\begin{table}[htb]
\centering
\resizebox{\textwidth}{!}{%
\begin{tabular}{@{}rrrrrrrrrrrrrrrrrrr@{}}
\toprule
 &
   &
  \multicolumn{3}{c}{\texttt{B\&C-BiCo-DP}} &
   &
  \multicolumn{3}{c}{\texttt{B\&C-BiCo-PWLA}} &
   &
   &
   &
  \multicolumn{3}{c}{\texttt{B\&C-BiCo-DP}} &
   &
  \multicolumn{3}{c}{\texttt{B\&C-BiCo-PWLA}} \\ \cmidrule(lr){3-5} \cmidrule(lr){7-9} \cmidrule(lr){13-15} \cmidrule(l){17-19} 
$(N,m,L)$ &
   &
  MinGap(\%) &
  AveGap(\%) &
  MaxGap(\%) &
   &
  MinGap(\%) &
  AveGap(\%) &
  MaxGap(\%) &
   &
  $(N,m,L)$ &
   &
  MinGap(\%) &
  AveGap(\%) &
  MaxGap(\%) &
   &
  MinGap(\%) &
  AveGap(\%) &
  MaxGap(\%) \\ \cmidrule(r){1-9} \cmidrule(lr){11-15} \cmidrule(l){17-19} 
(2,10,5)  &  & -0.029 & -0.018 & 0.000  &  & 0.000 & \textbf{0.000}  & 0.000  &  & (5,10,5)  &  & -0.022 & -0.011 & 0.000  &  & 0.000 & \textbf{0.000} & 0.000  \\
(2,20,5)  &  & -0.085 & 0.059  & 0.440  &  & 0.000 & \textbf{0.088}  & 0.441  &  & (5,20,5)  &  & -0.063 & 0.313  & 1.671  &  & 0.000 & \textbf{0.336} & 1.677  \\
(2,30,5)  &  & 0.191  & 2.090  & 5.536  &  & 0.197 & \textbf{2.105}  & 5.548  &  & (5,30,5)  &  & -0.002  & 1.836  & 6.750  &  & 0.000 & \textbf{1.849} & 6.751  \\
(2,40,5)  &  & 0.219  & 5.126  & 9.369  &  & 0.232 & \textbf{5.133}  & 9.375  &  & (5,40,5)  &  & 0.424  & 2.660  & 6.331  &  & 0.417 & \textbf{2.669} & 6.333  \\
(2,50,5)  &  & 2.088  & 8.365  & 21.547 &  & 2.108 & \textbf{8.383}  & 21.587 &  & (5,50,5)  &  & 1.217  & \textbf{2.390}  & 4.960  &  & 1.158 & 2.387 & 4.962  \\ \cmidrule(r){1-9} \cmidrule(l){11-19} 
(2,10,10) &  & -0.026 & -0.010 & -0.001  &  & 0.000 & \textbf{0.000}  & 0.000  &  & (5,10,10) &  & -0.047 & -0.030 & -0.015 &  & -0.001 & \textbf{0.000} & 0.000  \\
(2,20,10) &  & -0.002  & 0.723  & 2.110  &  & 0.000 & \textbf{0.737}  & 2.120  &  & (5,20,10) &  & -0.005 & 1.635  & 5.141  &  & 0.000 & \textbf{1.655} & 5.161  \\
(2,30,10) &  & 0.194  & 1.899  & 3.463  &  & 0.208 & \textbf{1.909}  & 3.464  &  & (5,30,10) &  & 0.598  & 5.553  & 20.515 &  & 0.614 & \textbf{5.573} & 20.532 \\
(2,40,10) &  & 0.601  & 2.042  & 4.693  &  & 0.605 & \textbf{2.049}  & 4.693  &  & (5,40,10) &  & 2.171  & \textbf{5.442}  & 13.853 &  & 2.149 & 5.424 & 13.855 \\
(2,50,10) &  & 3.471  & \textbf{10.615} & 20.323 &  & 3.445 & 10.596 & 20.341 &  & (5,50,10) &  & 0.106  & \textbf{3.550}  & 7.550  &  & 0.011 & 3.523 & 7.554  \\ \bottomrule
\end{tabular}%
}
\caption{Objective value gap between \texttt{B\&C-BiCo-DP} and \texttt{B\&C-BiCo-PWLA} models compared to \texttt{SCIP}.}
\label{tab:gap_obj_jap}
\end{table}

Table~\ref{tab:gap_obj_jap} reports the minimum, average, and maximum objective value gaps achieved by the three approaches on each set of 5 instances. Notably, \texttt{SCIP} is able to solve all instances in the smallest datasets (with $m=10$) to optimality. Moreover, the minimum and maximum gap between the objective values obtained by \texttt{B\&C-BiCo-PWLA} and those provided by \texttt{SCIP} is $0.000\%$, indicating that \texttt{B\&C-BiCo-PWLA} attains almost-optimal solutions for small-scale JAP instances. For larger problem sizes, where \texttt{SCIP} fails to consistently reach optimality within the time limit, we observe that the two approximation-based methods remain competitive. In particular, \texttt{B\&C-BiCo-PWLA} slightly outperforms \texttt{B\&C-BiCo-DP}, achieving higher average gaps in 16 out of 20 instance sets with only marginal differences. These results highlight the effectiveness of the \texttt{B\&C-BiCo-PWLA} approach in preserving solution quality while avoiding the limitations of price discretization.

\subsection{Impacts of Cardinality Constraints and Cross Rate}\label{ec:cardinality_cross_rate}
This section investigates the effects of the cardinality parameters and the cross rate on three datasets: GNL, MGNL, and A\&DP. The A\&CP instances are excluded from our experiments, as they can be reformulated into A\&DP instances, which are more efficiently solvable. The experimental results are obtained using the B\&C approach with OA cuts. For each model, two datasets are evaluated, each consisting of 20 instances.

\begin{figure}[htb]
    \centering
    \includegraphics[width=\linewidth]{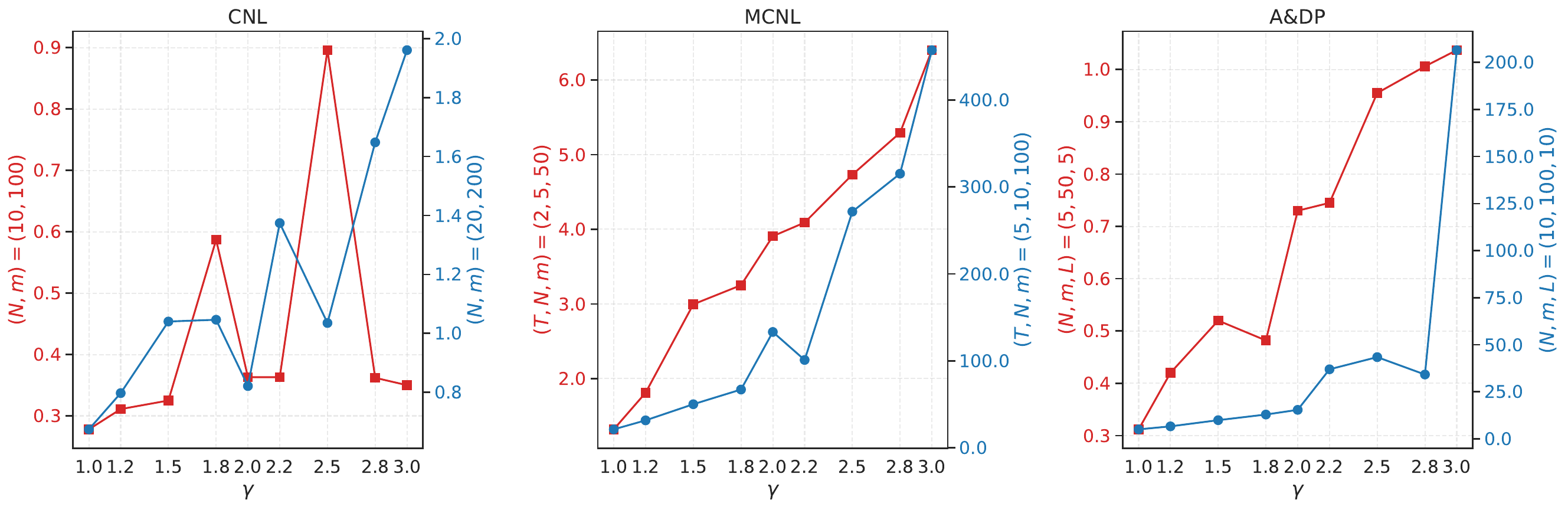}
    \caption{Runtime provided by B\&C approach (with OA cuts) when varying cross rate}
    \label{fig:cross-rate}
\end{figure}
\paragraph{Impact of Cross Rate $\gamma$.}
In the first subplot of Figure \ref{fig:cross-rate}, the computation time exhibits a generally increasing trend with respect to the rate parameter for both problem sizes \((N,m) = (10,100)\) and \((20,200)\). For the smaller instance \((10,100)\), the time remains relatively stable up to rate 2.2, with a spike at rate 2.5, while for the larger instance \((20,200)\), a more consistent increase is observed, from 0.67 seconds at rate 1.0 to 1.96 seconds at rate 3.0. This indicates that computation time tends to grow with both rate and problem size, though with occasional fluctuations.

A similar trend is found in both the second and third subplots: time increases with rate and scales significantly with problem size. For the MGNL model, time grows from 1.32 seconds to 6.40 seconds at \((2,5,50)\), and sharply from 21.24 seconds to 457.38 seconds at \((5,10,100)\). The A\&DP model shows a moderate increase from 0.31 seconds to 1.04 seconds at \((5,50,5)\), but a substantial rise from 5.04 seconds to 206.52 seconds at \((10,100,10)\). In both models, the most dramatic jumps occur at rate 3.0, highlighting a common pattern: computational burden grows notably at higher rates and larger scales.

\begin{figure}[htb]
    \centering
    \begin{subfigure}{0.48\textwidth}
        \centering
        \includegraphics[width=\textwidth]{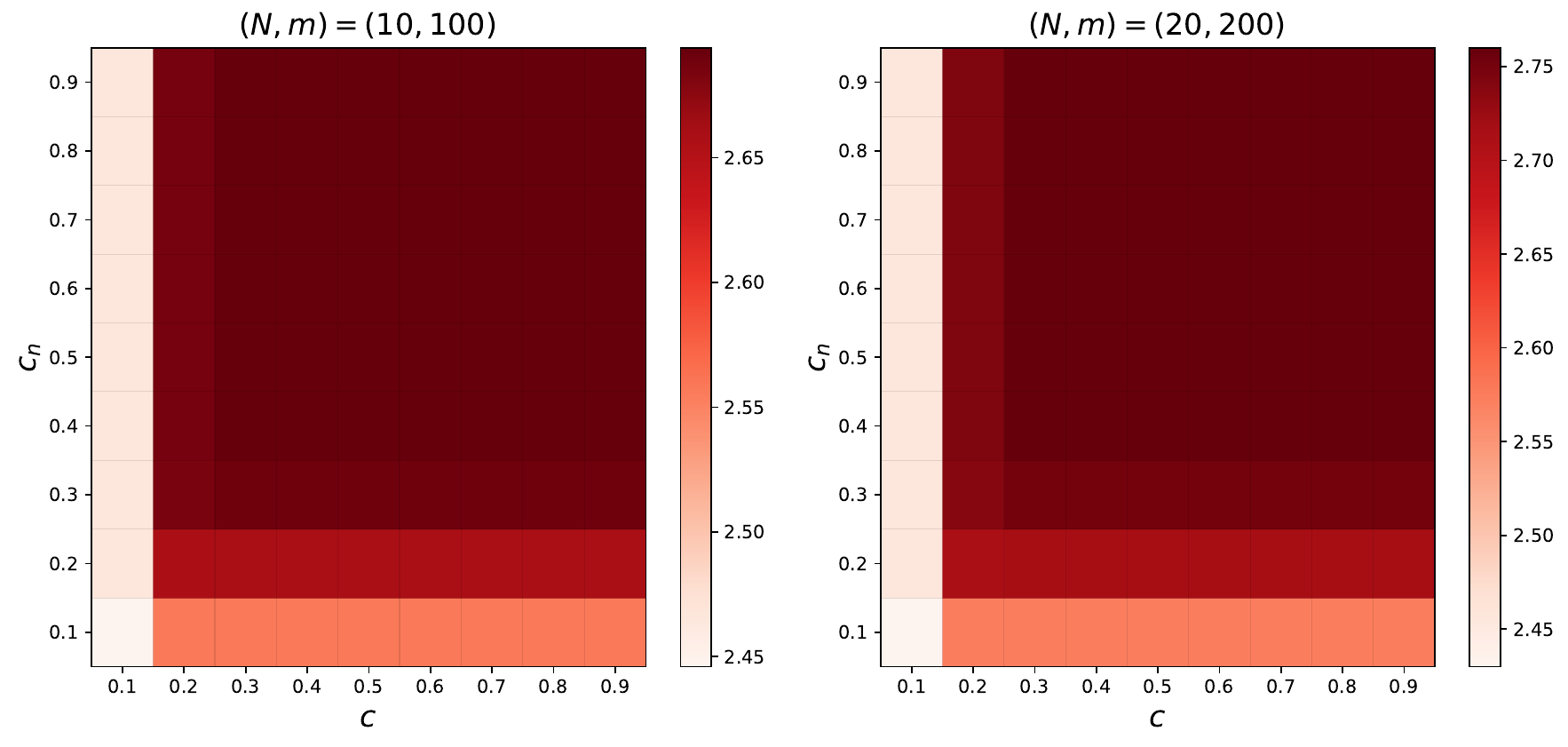}
        \caption{Objective value}
        \label{fig:GNL-Cap-Obj}
    \end{subfigure}
    \hfill
    \begin{subfigure}{0.48\textwidth}
        \centering
        \includegraphics[width=\textwidth]{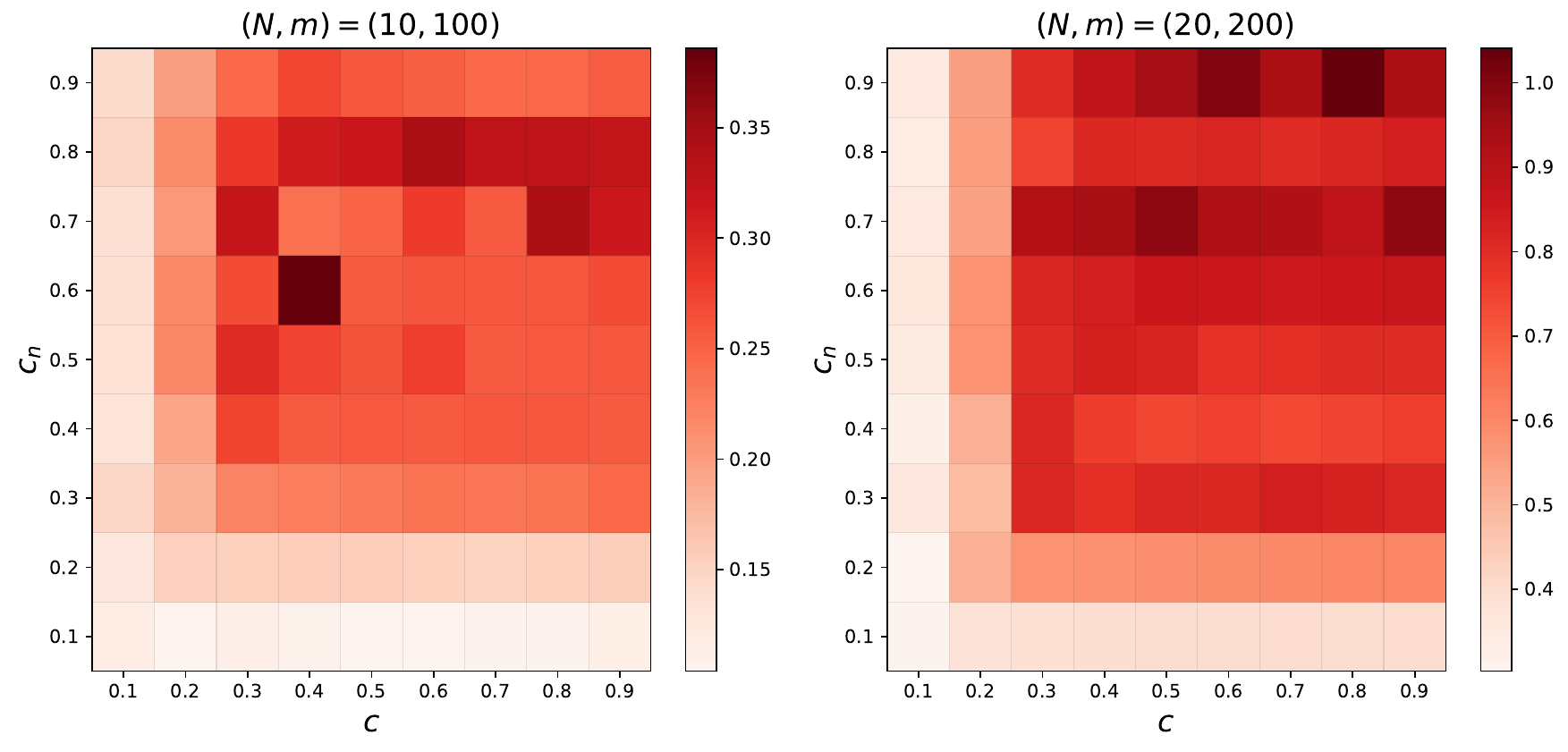}
        \caption{Time(s)}
        \label{fig:GNL-Cap-Time}
    \end{subfigure}

    \caption{Objective and runtime provided by B\&C approach (with OA cuts) when varying assortment capacity on the GNL instances}
    \label{fig:GNL-Cap}
\end{figure}

\paragraph{Impact of Cardinality Parameters $c$ and $c_n$.}
Figures \ref{fig:GNL-Cap-Obj} shows that objective values increase with $c$ and $c_n$ but quickly plateau. For $(N,m)=(10,100)$, the objective rises from 2.446 to about 2.694 as $c_n$ increases to 0.4, with little change thereafter. A similar saturation occurs for $(20,200)$, where values stabilize around 2.760 for $c_n \ge 0.5 $. The modest difference between problem sizes indicates stable and predictable scaling. The same phenomenon can be observed in the Figure \ref{fig:MGNL-Cap-Obj}.

Figure \ref{fig:GNL-Cap-Time} reports runtimes under varying $c$ and $c_n$. For the smaller instance, runtimes remain low (mostly below 0.35 seconds), with a maximum of 0.49 seconds. In contrast, the larger instance exhibits a sharper increase, with runtimes exceeding 1 second for several configurations and peaking at 1.04 seconds when $c_n=0.9$ and $c=0.8$. This shows that higher capacity parameters and larger problem sizes substantially increase computational effort.

\begin{figure}[htb]
    \centering
    \begin{subfigure}{0.48\textwidth}
        \centering
        \includegraphics[width=\textwidth]{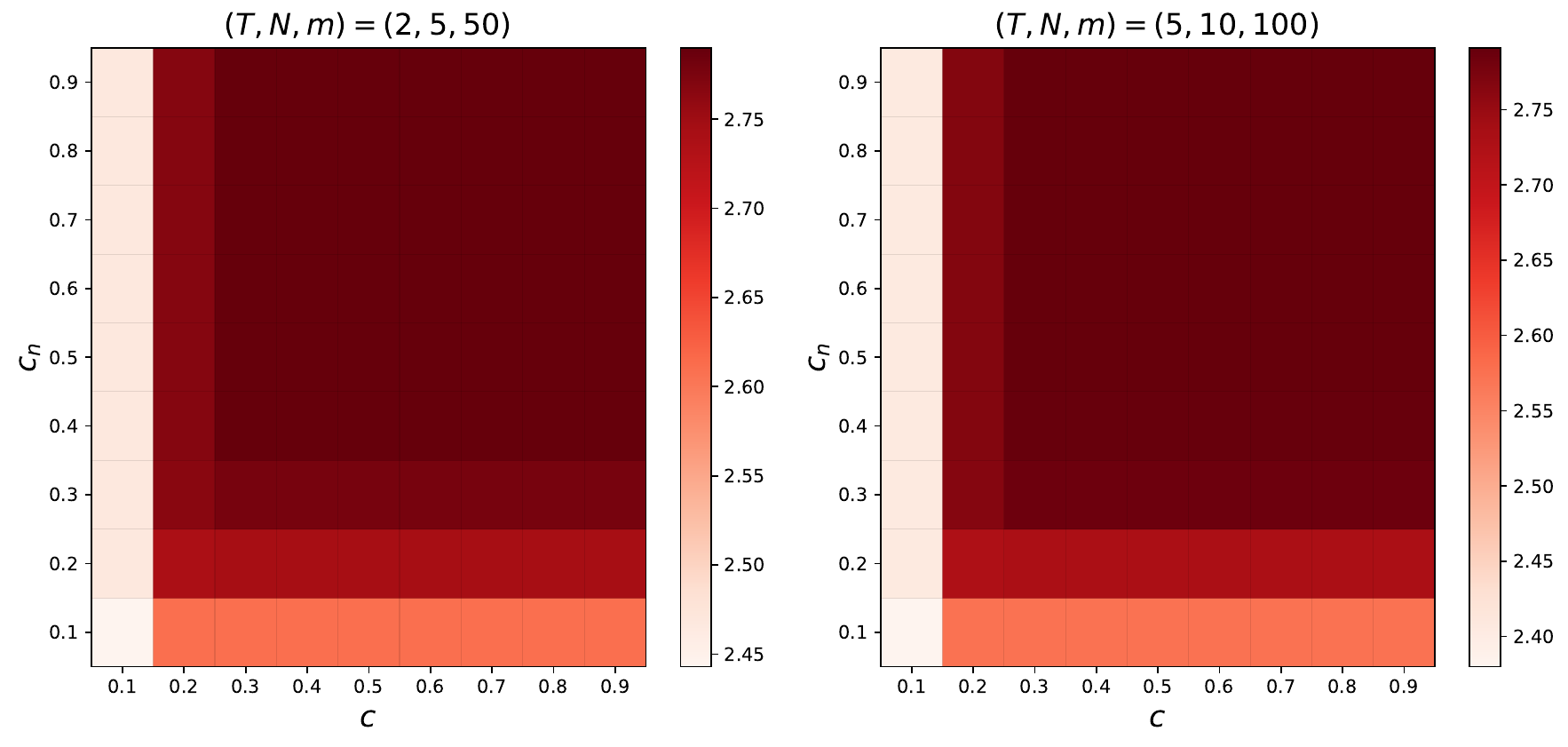}
        \caption{Objective value}
        \label{fig:MGNL-Cap-Obj}
    \end{subfigure}
    \hfill
    \begin{subfigure}{0.48\textwidth}
        \centering
        \includegraphics[width=\textwidth]{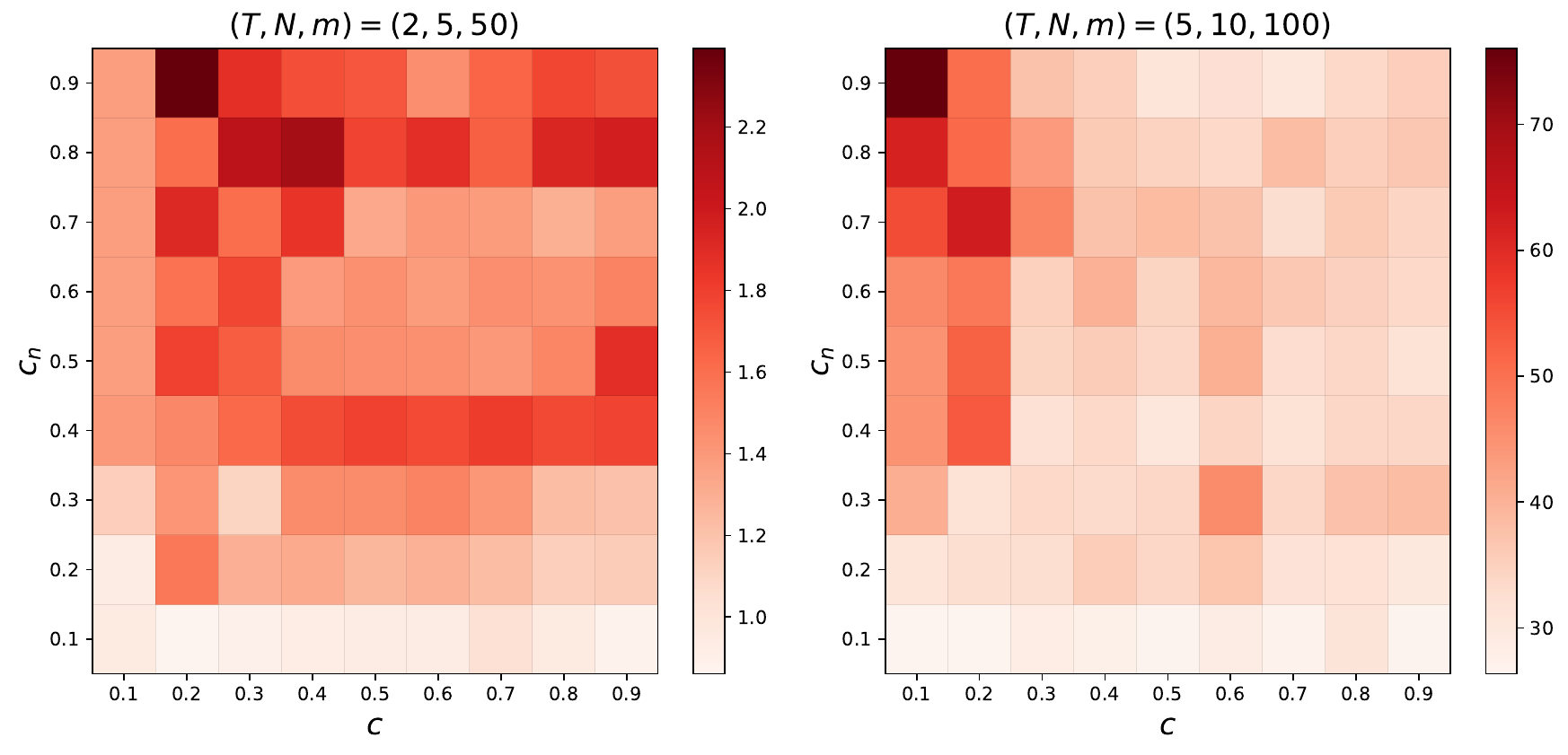}
        \caption{Time(s)}
        \label{fig:MGNL-Cap-Time}
    \end{subfigure}

    \caption{Objective and runtime provided by B\&C approach (with OA cuts) when varying assortment capacity on the MGNL instances}
    \label{fig:MGNL-Cap}
\end{figure}
 
Figure~\ref{fig:MGNL-Cap-Time} reports runtimes for two MGNL datasets: 
$(T,N,m) = (2,5,50)$ and $(5,10,100)$.
For the smaller instance, runtimes remain low (mostly under 2 seconds), with a modest increase as $c$ and $c_n$ approach 0.9. In contrast, the larger instance shows substantially higher and more variable runtimes, exceeding 40 seconds for $c_n \ge 0.7$ and peaking at 76 seconds. Overall, runtime increases with problem size and is more sensitive to $c_n$ than to $c$, reflecting the growing complexity of the solution space.

\begin{figure}[htb]
    \centering
    \begin{subfigure}{0.48\textwidth}
        \centering
        \includegraphics[width=\textwidth]{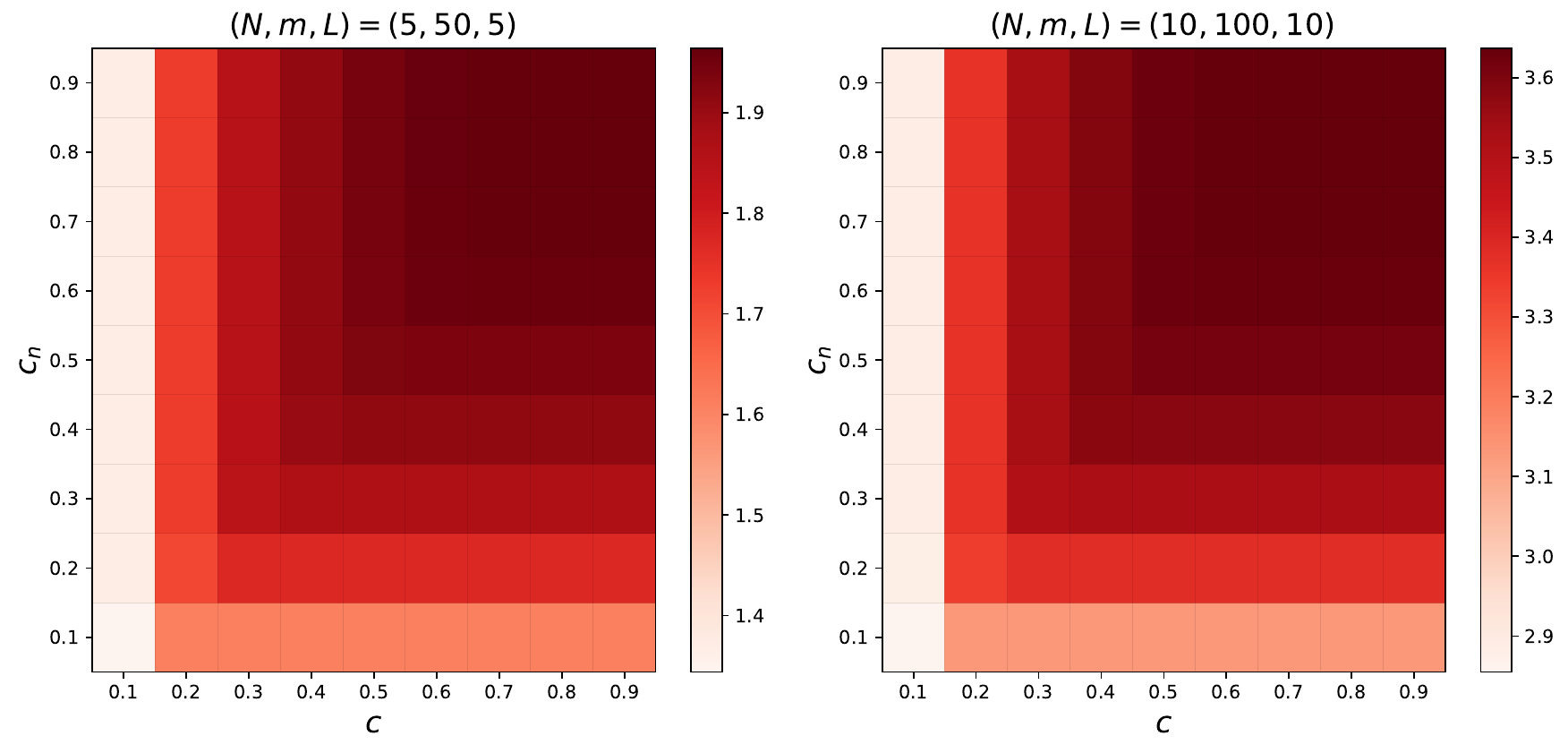}
        \caption{Objective value}
        \label{fig:DP-Cap-Obj}
    \end{subfigure}
    \hfill
    \begin{subfigure}{0.48\textwidth}
        \centering
        \includegraphics[width=\textwidth]{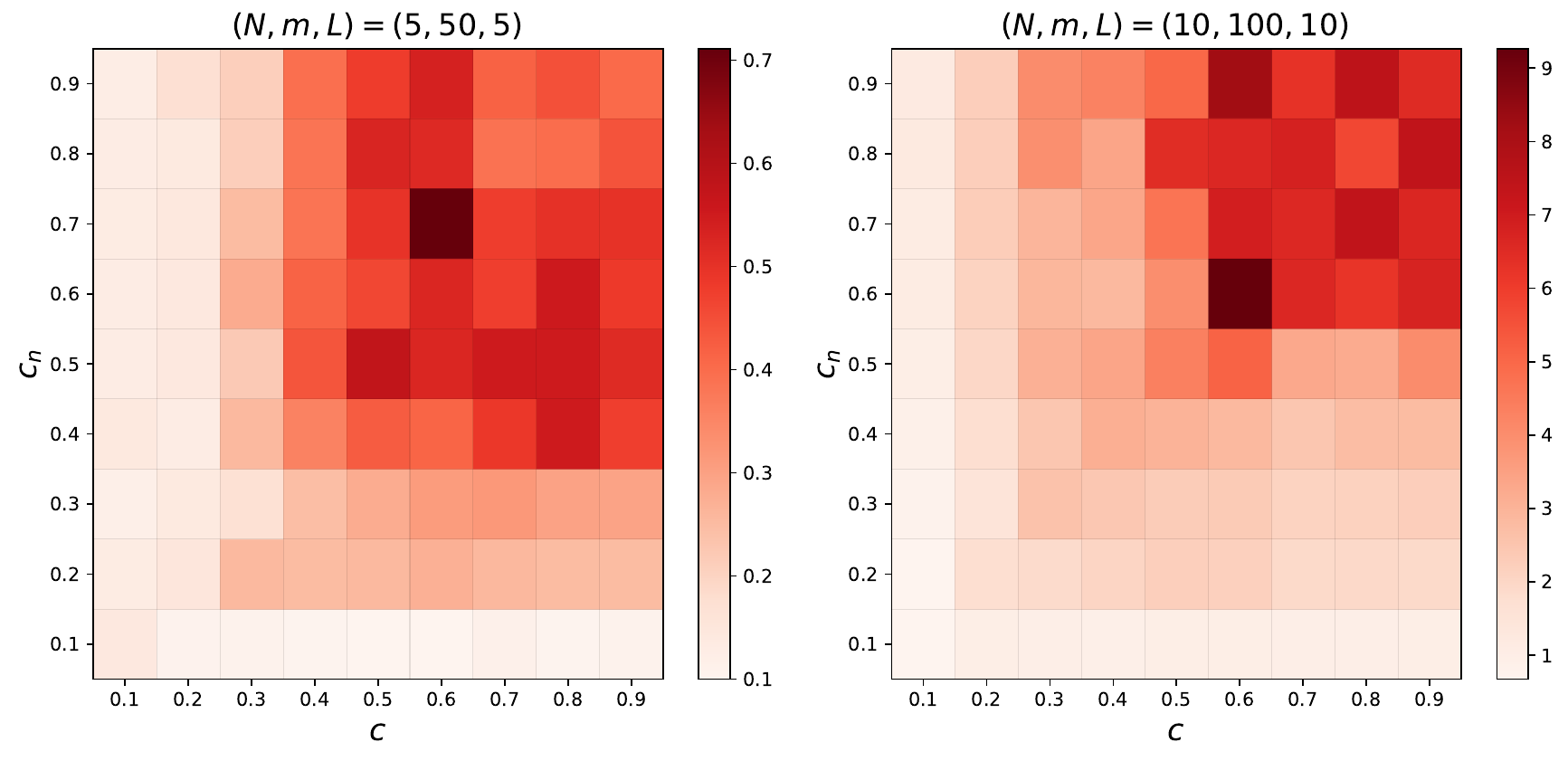}
        \caption{Time(s)}
        \label{fig:DP-Cap-Time}
    \end{subfigure}

    \caption{Objective and runtime provided by B\&C approach (with OA cuts) when varying assortment capacity on the A\&DP instances}
    \label{fig:DP-Cap}
\end{figure}
Figure~\ref{fig:DP-Cap-Obj} shows a smooth and predictable response to changes in the cardinality parameters. For the smaller instance, the objective increases moderately with $c$ and $c_n$ and stabilizes around 1.964 for $c_n \ge 0.7$. The larger instance attains substantially higher values and rises more rapidly before plateauing near 3.636 once $c_n > 0.6$. This saturation behavior indicates good scalability with respect to problem size and diminishing returns from increasing capacity parameters beyond moderate levels. Figure~\ref{fig:DP-Cap-Time} reports runtimes under varying $c$ and $c_n$. For the smaller instance, runtimes remain low (mostly below 0.6 seconds), with only slight increases as parameters grow. In contrast, the larger instance shows a marked rise in computation time for higher parameter values, exceeding 9 seconds in some cases (e.g., $c = 0.6, c_n = 0.6$). Overall, the method scales reasonably, but higher capacity parameters introduce noticeable computational overhead.

\end{document}